\theoremstyle{plain}
\newtheorem{proposition}[theorem]{Proposition}
\newcommand{\newreptheorem}[2]{\newtheorem*{rep@#1}{\rep@title}\newenvironment{rep#1}[1]{\def\rep@title{#2 \ref*{##1}}\begin{rep@#1}}{\end{rep@#1}}}
\newcommand{\ind}[1]{I_{#1}}
\newcommand{\sub}[1]{N_{#1}}
\newcommand{\ghom}[1]{\mathit{Hom}_{#1}}
\newcommand{\naut}[1]{\mathit{\#aut}(#1)}
\newcommand{\nsub}[2]{\mathit{\#sub}(#1, #2)}
\newcommand{\supg}{\sqsupseteq}
\newcommand{\psupg}{\sqsupset}
\newcommand{\subg}{\sqsubseteq}
\newcommand{\hommap}{\overset{\mathit{hom}}{\mapsto}}
\newcommand{\indmap}{\overset{\mathit{ind}}{\mapsto}}
\newcommand{\submap}{\overset{\mathit{sub}}{\mapsto}}
\newcommand{\redto}{\preceq}
\newcommand{\tw}[1]{\mathit{tw}(#1)}
\newcommand{\pdeg}[1]{\mathit{deg}(#1)}
\newcommand{\ml}[1]{\mathit{ML}(#1)}
\newcommand{\size}{\mathit{size}}
\newcommand{\npauthors}{Ne\v{s}et\v{r}il and Poljak}
\newcommand{\fpt}{\class{FPT}}
\newcommand{\fptw}[1]{\class{W[#1]}}
\title{Graph Pattern Polynomials}
\author{Markus Bl{\"a}ser}{Department of Computer Science, Saarland University, Saarland Informatics Campus, Saarbr{\"u}cken, Germany}{mblaeser@cs.uni-saarland.de}{https://www-cc.cs.uni-saarland.de/mblaeser/}{}
\author{Balagopal Komarath}{Saarland University, Saarland Informatics Campus, Saarbr{\"u}cken, Germany}{baluks@gmail.com}{http://www-cc.cs.uni-saarland.de/bkomarath/}{}
\author{Karteek Sreenivasaiah}{Department of Computer Science and Engineering, Indian Institute of
Technology Hyderabad, India}{karteek@iith.ac.in}{http://www.iith.ac.in/\textasciitilde karteek}{}
\authorrunning{M. Bl{\"a}ser, B. Komarath, K. Sreenivasaiah}
\subjclass{\ccsdesc[300]{Theory of computation~Probabilistic
    computation}, \ccsdesc[300]{Theory of computation~Problems,
    reductions and completeness}} \keywords{algorithms, induced
  subgraph detection, algebraic framework} \hideLIPIcs
\begin{document}
\maketitle

\begin{abstract}
  Given a \emph{host} graph $G$ and a \emph{pattern} graph $H$, the
induced subgraph isomorphism problem is to decide whether $G$ contains
an induced subgraph that is isomorphic to $H$. We study the time
complexity of induced subgraph isomorphism problems where the pattern
graph is fixed. Ne\v{s}et\v{r}il and Poljak gave an $O(n^{k\omega})$
time algorithm that decides the induced subgraph isomophism problem
for \emph{any} $3k$ vertex pattern graph (the universal algorithm),
where $\omega(p, q, r)$ is the exponent of $n^p\times n^q$ and
$n^q\times n^r$ matrix multiplication and $\omega = \omega(1,1,1)$.

Algorithms that are faster than the universal algorithm are known only
for a finite number of pattern graphs. In this paper, we obtain
algorithms that are faster than the universal algorithm for infinitely
many pattern graphs. More specifically, we show that there exists a
family of pattern graphs $(H_{3k})_{k\geq 0}$ such that the induced
subgraph isomorphism problem for $H_{3k}$ (on $3k$ vertices) has a
$O(n^{\omega(k, k-1, k)}$ time algorithm when $k = 2^r, r \geq
1$. Note that for the currently known best matrix multiplication
algorithms $\omega(k, k-1, k) < k\omega$.

This algorithm is obtained by a reduction to the multilinear term
detection problem in a class of polynomials called \emph{graph pattern
polynomials}. We formally define this class of polynomials along with
a notion of reduction between these polynomials that allows us to
argue about the fine-grained complexity of isomosphism problems for
different pattern graphs. Besides the aforementioned result, we obtain
the following algorithms for induced subgraph isomorphism problems:
\begin{enumerate}
\item Faster than universal algorithm for $P_k$ ($k$-vertex paths)
  when $5\leq k\leq 9$ and $C_k$ ($k$-vertex cycles) for
  $k\in\{5, 7, 9\}$. In particular, we obtain $O(n^{\omega})$ time
  algorithms for $P_5$ and $C_5$ that are optimal under reasonable
  hardness assumptions.
\item Faster than universal algorithm for all pattern graphs except
  $K_k$ ($k$-vertex cliques) and $I_k$ ($k$-vertex independent sets)
  for $k\leq 8$.
\item Combinatorial algorithms (algorithms that do not use fast matrix
  multiplication) that take $O(n^{k-2})$ time for $P_k$ and $C_k$.
\item Combinatorial algorithms that take $O(n^{k-1})$ time for all
  pattern graphs except $K_k$ and $I_k$ for $k$.
\end{enumerate}

Our notion of reduction can also be used to argue about hardness of
detecting patterns within our framework. Since this method is used
(explicitly or implicitly) by many existing algorithms (including the
universal algorithm) for solving subgraph isomorphism problems, these
hardness results show the limitations of existing methods. We obtain
the following relative hardness results:
\begin{enumerate}
\item Induced subgraph isomorphism problem for any pattern containing
  a $k$-clique is at least as hard as $k$-clique.
\item For almost all patterns, induced subgraph isomorphism is harder
  than subgraph isomorphism.
\item For almost all patterns, the subgraph isomorphism problem for
  any of its supergraphs is harder than subgraph isomorphism for the
  pattern.
\end{enumerate}

\end{abstract}

\section{Introduction}

The \emph{induced subgraph isomorphism problem} asks, given simple and
undirected graphs $G$ and $H$, whether there is an induced subgraph of
$G$ that is isomorphic to $H$.  The graph $G$ is called the host graph
and the graph $H$ is called the pattern graph. This problem is
$\NP$-complete (See \cite{GareyJohnson90}, problem [GT21]). If the
pattern graph $H$ is fixed, there is a simple $O(n^{|V(H)|})$ time
algorithm to decide the induced subgraph isomorphism problem for
$H$. We study the time complexity of the induced subgraph isomorphism
problem for fixed pattern graphs on the Word-RAM model.

The earliest non-trivial algorithm for this problem was given by Itai
and Rodeh \cite{IR78} who showed that the the number of triangles can
be computed in $O(n^\omega)$ time on $n$-vertex graphs, where $\omega$
is the exponent of matrix multiplication. Later,
\npauthors\ \cite{NP85} generalized this algorithm to count $K_{3k}$
in $O(n^{k\omega})$ time, where $K_{3k}$ is the clique on $3k$
vertices. Eisenbrand and Grandoni \cite{Eisenbrand2004} extended this
algorithm further to count $K_{3k+j}$ for $j\in\{0, 1, 2\}$ using
rectangular matrix multiplication in $O(n^{\omega(k + \lceil j/2
  \rceil, k, k + \lfloor j/2 \rfloor)})$ time. This algorithm uses
fast matrix multiplication to achieve the speedup and in fact works
for all pattern graphs on $3k+j$ vertices. Hence we call this
algorithm the universal algorithm. It is reasonable to expect that one
might be able to obtain faster algorithms for specific pattern
graphs. However, algorithms faster than the universal algorithm are
only known for finitely many pattern graphs.

Algorithms that do not use fast matrix multiplication, called
\emph{combinatorial algorithms}, have also been studied. No
combinatorial algorithm that beats the trivial $O(n^k)$ time algorithm
is known for detecting $k$-cliques in $n$ vertex graphs. However,
improvements for certain pattern graphs such as $K_k-e$ has been shown
by Virginia Williams (See \cite{VV08}, p.45). They show a
combinatorial algorithm that decides the induced subgraph isomorphism
problem for $K_k - e$ in time $O(n^{k-1})$. An $O(n^{k-1})$
combinatorial algorithm is also known for deciding induced subgraph
isomorphism problem for $P_k$.

The use of algebraic methods has been particularly useful in finding
fast combinatorial algorithms for detecting pattern graphs. Ryan
Williams \cite{Williams09} gave a linear time algorithm for the (not
necessarily induced) subgraph isomorphism problem for $P_k$. This was
later generalized by Fomin, Lokshtanov, Raman, Saurabh, and Rao
 \cite{FominLRSR12} to give $O(n^{\tw{H}+1})$ time algorithms for the
(not necessarily induced) subgraph isomorphism problem for $H$ in $n$
vertex graphs.  These results use efficient constructions for
\emph{homomorphism polynomials} (defined later).

The question of whether improving algorithms for detecting a certain
pattern implies faster algorithms for another pattern has also been
studied. In particular, \npauthors\ show that improved algorithms for
detecting $k$-cliques yield improved algorithms for all $k$-vertex
pattern graphs. More precisely:

\begin{theorem}{(\cite{NP85})}
  If the induced subgraph isomorphism problem for $K_k$ can be decided
  in $O(n^{f(k)})$ time for some $f(k)$, then the induced subgraph
  isomorphism problem for $H$ can be decided in time $O(n^{f(k)})$
  time, where $H$ is any $k$-vertex pattern graph.
  \label{thm:np-universal-full}
\end{theorem}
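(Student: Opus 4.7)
The plan is to give a direct combinatorial reduction that transforms the host graph $G$ into an auxiliary $k$-partite graph $G^*$ in which $k$-cliques are in bijection with induced embeddings $V(H) \hookrightarrow V(G)$. Since $G^*$ will have only $O(n)$ vertices, running the assumed $O(n^{f(k)})$-time induced $K_k$ detector on $G^*$ immediately yields an induced $H$-detection algorithm for $G$ in the same asymptotic time.

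To set this up I would fix a labelling $V(H) = \{1, \ldots, k\}$ and take $V(G^*) = V(G) \times \{1, \ldots, k\}$, partitioned into $k$ independent parts $V_i = V(G) \times \{i\}$. For $i \ne j$ and distinct $u, v \in V(G)$, I would place an edge between $(u, i)$ and $(v, j)$ in $G^*$ precisely when $\{u, v\} \in E(G) \Longleftrightarrow \{i, j\} \in E(H)$. No edges are placed within any $V_i$, nor between $(u, i)$ and $(u, j)$.

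The correctness check is then immediate. Any $K_k$ in $G^*$ must meet each independent part $V_i$ in exactly one vertex $(v_i, i)$, so the clique encodes an injection $\phi : [k] \to V(G)$ with $i \mapsto v_i$. By construction, all $\binom{k}{2}$ edges of the clique are present iff for every $i \ne j$ we have $\{v_i, v_j\} \in E(G) \Longleftrightarrow \{i, j\} \in E(H)$, which is exactly the condition that $\phi$ is an induced embedding of $H$ in $G$. The converse is analogous, so $G^*$ contains a $K_k$ iff $G$ contains an induced copy of $H$. Since $|V(G^*)| = kn$ and $k$ is a constant, invoking the hypothesised induced-$K_k$ algorithm on $G^*$ costs $O((kn)^{f(k)}) = O(n^{f(k)})$.

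There is no serious obstacle here: the reduction is entirely combinatorial and the only assumption I would rely on is that the postulated induced-$K_k$ algorithm works on arbitrary host graphs, so that it may be applied to $G^*$ without modification. This is precisely why the theorem collapses the complexity of induced subgraph isomorphism across all $k$-vertex patterns to that of the single pattern $K_k$.
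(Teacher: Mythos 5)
Your reduction is correct and is essentially the standard Ne\v{s}et\v{r}il--Poljak argument: build the coloured product graph on $V(G)\times[k]$ whose $k$-cliques encode induced embeddings of the labelled $H$, and run the assumed induced-$K_k$ algorithm on it. This is the same construction the paper uses (at the level of polynomials) in Theorem~\ref{thm:ind_graph_to_clique-full}, where the additional permutation bookkeeping is needed only to control multiplicities exactly, which is irrelevant for the decision version you are proving.
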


In this sense, the $k$-clique is a \emph{universal} pattern.

\npauthors's \cite{NP85} algorithm can be easily modified to output the
homomorphism polynomial no host graphs of $n$ vertices for the pattern
$K_{3k}$ in $O(n^{k\omega})$ time given $1^n$ as input. For cliques,
counting (or detecting) homomorphisms\footnote{For host $G$ and
  pattern $H$, a function $f:V(H) \mapsto V(G)$ such that
  $\{u, v\}\in E(H) \implies \{f(u), f(v)\}\in E(G)$} and counting
(or detecting) induced subgraph isomorphisms have the same
complexity. It is unclear whether computing homomorphism polynomials
efficiently for other pattern graphs help with the induced subgraph
isomorphism problem for those pattern graphs.

\subsection*{Our Results}
In this paper, we show that we can obtain algorithms that are faster
than the universal algorithm for infinitely many pattern graphs.

\begin{reptheorem}{thm:H3k-full}
  There exists a family of pattern graphs ${(H_{3k})}_{k\geq 0}$ where
  $H_{3k}$ is a $3k$-vertex graph such that the induced subgraph
  isomorphism problem for $H_{3k}$ has an $O(n^{\omega(k, k-1, k)})$
  time algorithm for infinitely many $k$.
\end{reptheorem}

Here, $\omega(p, q, r)$ is the exponent of $n^p\times n^q$ and
$n^q\times n^r$ matrix multiplication. The exponent of matrix
multiplication is defined as $\omega = \omega(1, 1, 1)$ (See
\cite{DBLP:journals/toc/Blaser13} for a more detailed
introduction). The best known algorithm for $K_{3k}$ takes time
$O(n^{k\omega})$ and the upper-bound on $\omega(k, k-1, k)$ is
strictly smaller than the upper-bound on $k\omega$ for the currently
known fastest matrix multiplication algorithms. If $\omega = 2$, then
we have $\omega(k, k-1, k) = k\omega$ and the above algorithm does not
improve upon the universal algorithm. However, the best upper-bound
that we know is $\omega < 2.373$. It is known that current methods
cannot show $\omega = 2$ (See \cite{AmbainisFG15}, \cite{AlmanW18}).

We develop an algebraic framework to study algorithms for the induced
subgraph isomorphism problems where we consider the size of the
pattern graph a constant. The above algorithm is obtained using this
framework. We show that the existing algorithms for natural pattern
graphs such as $k$-paths and $k$-cycles can be improved by efficiently
computing homomorphism polynomials for pattern graphs that are much
sparser than $k$-cliques.

We obtain, in Theorem~\ref{thm:AlgosForPaths-full} and
Theorem~\ref{thm:AlgosForCycles-full}, the following faster
(randomized, one-sided error) algorithms:

\begin{itemize}
\item Faster algorithms for induced subgraph isomorphism problem for
  $P_k$ for $5\leq k \leq 9$.
\item Faster algorithms for induced subgraph isomorphism problem for
  $C_k$ for $k\in\{5,7,9\}$.
\item $O(n^{k-2})$ time combinatorial algorithm for induced subgraph
  isomorphism problem for $P_k$ and $C_k$.
\item $O(n^{k-2})$ time deterministic combinatorial algorithms for
  computing the parity of the number of induced subgraphs isomorphic
  to $P_k$ and $C_k$ in $n$-vertex graphs.
\end{itemize}

Unfortunately, we do not know how to compute these homomorphism
polynomials for smaller graphs using circuits of size smaller than
that for homomorphism polynomials for $k$-cliques when $k$ is
arbitrary. Therefore, we do not have an improvement similar to the one
in Theorem~\ref{thm:H3k-full} for paths or cycles.

In light of Theorem~\ref{thm:np-universal-full}, which shows that
$k$-cliques are universal, we show that homomorphism polynomials for
$K_k - e$, the $k$-vertex graph obtained by deleting an edge from
$K_k$, are \emph{almost} universal. We show that the arithmetic
circuit complexity of $\ghom{K_k - e}$ can be used to unify many
existing results. We show that if $\ghom{K_k - e}$ has $O(n^{f(k)})$
size circuits for some function $f(k)$, then:

\begin{enumerate}
\item (Theorem~\ref{thm:Kk-e-full}) The induced subgraph isomorphism
  problem for all $k$-vertex pattern graphs other than $K_k$ and $I_k$
  can be decided by an $O(n^{f(k)})$ time algorithm, where $k$ is
  regarded as a constant and $f(k)$ is any function of
  $k$. (\cite{VV08} gives a combinatorial algorithm for $K_k - e$,
  \cite{FKLLTCS15} gives an algorithm for $P_k$)

\item (Theorem~\ref{thm:Kk-e-counting-full}) If there is an $O(t(n))$ time
  algorithm for counting the number of induced subgraph isomorphisms
  for a $k$-vertex pattern $H$, then the number of induced subgraph
  isomorphisms for all $k$-vertex patterns can be computed in
  $O(n^{f(k)} + t(n))$ time on $n$-vertex graphs. (\cite{KloksKM00}
  gives this result for $k = 4$ and \cite{KowalukLL13} gives a weaker
  result similar to this one)
\end{enumerate}

Even though these theorems are known for specific values of $f(k)$ as
cited above. We believe that the connection to homomorphism
polynomials for $\ghom{K_k - e}$ is new.

On the lower bounds front, we show in
Theorem~\ref{thm:supergraphs-full}, Theorem~\ref{thm:clique-hard-full}
and Theorem~\ref{thm:inducedHarderThanNonInduced-full} that within the
framework that we develop:
  \begin{enumerate}
  \item The induced subgraph isomorphism problem for any pattern
    containing a $k$-clique or a $k$-independent set is at least as
    hard as the isomorphism problem for $k$-clique.
  \item For almost all pattern graphs $H$, the induced subgraph
    isomorphism problem for $H$ is harder than the subgraph
    isomorphism problem for $H$.
  \item For almost all pattern graphs $H$, the subgraph isomorphism
    problem for $H$ is easier than subgraph isomorphism problems for
    all supergraphs of $H$.
  \end{enumerate}

We note that only randomized algorithmic reductions are known for
Part~2 of the above theorem and Part~3 is unknown. It is not clear
whether our reductions imply algorithmic hardness for these problems.

\subsection*{Technique}
The \emph{Homomorphism polynomial} for a pattern graph $H$ denoted
$\ghom{H,n}$ is a polynomial such that the monomials of the polynomial
correspond one-to-one with homomorphisms from $H$ to an $n$-vertex
graph. Similarly, we define the graph pattern polynomial families
$\ind{H} = {(\ind{H,n})}_{n\geq 0}$ and $\sub{H} =
{(\sub{H,n})}_{n\geq 0}$ that correspond to the induced subgraph
isomorphism problem for $H$ and the (not necessarily induced) subgraph
isomorphism problem\footnote{Given $(G, H)$, decide whether there
  exists an injective $f:V(H) \mapsto V(G)$ such that $\{u, v\}\in
  E(H) \implies \{f(u), f(v)\}\in E(G)\}$.} for $H$ respectively.  It
can be shown that testing for subgraph isomorphism is equivalent to
testing whether the homomorphism polynomial has multilinear terms
because subgraph isomorphisms are exactly the injective
homomorphisms. In fact, any polynomial family $f$ such that the
multilinear terms of $f$ correspond to multilinear terms of $\sub{H}$
is enough. This naturally leads to a notion of reduction between these
graph pattern polynomial families (denoted by $\redto$). For example,
we say that $\sub{H}\redto\ghom{H}$). This notion of reduction allows
us to compare the hardness of different pattern detection problems as
well as construct new algorithms as follows:

\begin{repproposition}{prop:red-alg-full}
  Let $f$ and $g$ be graph pattern polynomial families. If $f\redto g$
  and $g$ has $O(n^{s(k)})$ size arithmetic circuits, then we can detect
  patterns corresponding to $f$ using an $O(n^{s(k)})$ time algorithm.
\end{repproposition}

This framework naturally raises the question whether one can find
families $f$ such that $\sub{H}\redto f$ and $f$ has smaller circuits
than $\ghom{H}$. We show that this is not possible by showing that in
this case $\ghom{H}$ has circuits that is as small as circuits for
$f$.

\subsection*{Other related work}
Curticapean, Dell, and Marx \cite{CDM17} showed that algorithms that
count homomorphisms can be used to count subgraph isomorphisms.
Williams, Wang, Williams, and Yu \cite{WWWY15} gave $O(n^\omega)$ time
algorithms for the induced subgraph isomorphism problems for four
vertex pattern graphs, except for $I_4$ and $K_4$. Floderus, Kowaluk,
Lingas, and Lundell \cite{FKLLTCS15} invented a framework that gives
$O(n^{k-1})$ combinatorial algorithms for induced subgraph isomorphism
problems for many pattern graphs on $k$ vertices.

Floderus, Kowaluk, Lingas, and Lundell \cite{FKLLSIAM15} showed
reductions between various induced subgraph isomorphism problems.
They proved that induced subgraph isomorphism problem for $H$ when $H$
contains a $k$-clique (or $k$-independent set) that is vertex-disjoint
from all other $k$-cliques (or $k$-independent sets) is at least as
hard as the induced subgraph isomorphism problem for $K_k$. They also
proved that detecting an induced $C_4$ is at least as hard as
detecting a $K_3$. This is the only example known where a pattern is
harder than another pattern that is not a subgraph. Hardness results
are also known for arithmetic circuits computing homomorphism
polynomials. Austrin, Kaski, and Kubjas \cite{AKK17} proved that
tensor networks (a restricted form of arithmetic circuits) computing
homomorphism polynomials for $k$-cliques require $\Omega(n^{\lceil 2k
  / 3 \rceil})$ time. Durand, Mahajan, Malod, Rugy-Altherre, and
Saurabh \cite{DMMRS14} proved that homomorphism polynomials for
certain pattern families are complete for the class $\VP$, the
algebraic analogue of the class $\P$. This is the only known
polynomial family that is complete for $\VP$ other than the canonical
complete family of universal circuits.

\section{Preliminaries}

For a polynomial $f$, we use $\pdeg{f}$ to denote the degree of $f$. A
monomial is called multilinear, if every variable in it has degree at
most one.  We use $\ml{f}$ to denote the multilinear part of $f$, that
is, the sum of all multilinear monomials in $f$. An arithmetic circuit
computing a polynomial $P\in K[x_1,\dotsc,x_n]$ is a circuit with $+$,
$\times$ gates where the input gates are labelled by variables or
constants from the underlying field and one gate is designated as the
output gate. The size of an arithmetic circuit is the number of wires
in the circuit. For indeterminates $x_1,\dots,x_n$ and a set
$S = \{s_1, \dotsc, s_p\} \subseteq \{1,\dots,n\}$ of indices, we
write $x_S$ to denote the product $x_{s_1}\dotsm x_{s_p}$.

An induced subgraph isomorphism from $H$ to $G$ is an injective
function $\phi : V(H) \indmap V(G)$ such that
$\{u, v\}\in E(H) \iff \{\phi(u), \phi(v)\}\in E(G)$.  Any function
from $V(H)$ to $V(G)$ can be extended to unordered pairs of vertices
of $H$ as $\phi(\{u, v\}) = \{\phi(u), \phi(v)\}$. A subgraph
isomorphism from $H$ to $G$ is an injective function
$\phi : V(H) \submap V(G)$ such that
$\{u, v\}\in E(H) \implies \{\phi(u), \phi(v)\}\in E(G)$. Two subgraph
isomorphisms or induced subgraph isomorphisms are considered different
only if the set of edges in the image are different. A graph
homomorphism from $H$ to $G$ is a function $\phi: V(H) \hommap V(G)$
such that $\{u, v\}\in E(H) \implies \{\phi(u), \phi(v)\}\in
E(G)$. Unlike isomorphisms, we consider two distinct functions that
yield the same set of edges in the image as distinct graph
homomorphisms. We define $\phi(S) = \{\phi(s) : s\in S\}$.

We write $H\subg H'$ ($H\supg H'$) to specify that $H$ is a subgraph
(supergraph) of $H'$. The number $\tw{H}$ stands for the treewidth of
$H$. The graph $K_n$ is the complete graph on $n$ vertices labelled
using $[n]$. We use the fact that $\naut{H} = 1$ for almost all graphs
in many of our results. In this paper, we will frequently consider
graphs where vertices are labelled by tuples. A vertex $(i, p)$ is
said to have \emph{label} $i$ and \emph{colour} $p$. An edge $\{(i_1,
p_1), (i_2, p_2)\}$ has \emph{label} $\{i_1, i_2\}$ and \emph{colour}
$\{p_1, p_2\}$. We will sometimes write this edge as $(\{i_1, i_2\},
\{p_1, p_2\})$. Note that both $\{(i_1, p_1), (i_2, p_2)\}$ and
$\{(i_2, p_1), (i_1, p_2)\}$ are written as $(\{i_1, i_2\}, \{p_1,
p_2\})$. But the context should make it clear which edge is being
rewritten.

\section{A Motivating Example: Induced-$P_4$ Isomorphism}

In this section, we sketch a one-sided error, randomized $O(n^2)$ time
algorithm for the induced subgraph isomorphism problem for $P_4$ to
illustrate the techniques used to derive algorithms in this paper.

We start by giving an algorithm for the subgraph isomorphism problem
for $P_4$. Consider the following polynomial:

\begin{equation*}
  \sub{P_4,n} = \sum_{(p, q, r, s) : p < s} y_p y_q y_r y_s x_{\{p, q\}} x_{\{q, r\}} x_{\{r, s\}}
\end{equation*}

where the summation is over all quadruples over $[n]$ where all four
elements are distinct. Each monomial in the above polynomial
corresponds naturally to a $P_4$ in an $n$-vertex graph. The condition
$p < s$ ensures that each path has exactly one monomial corresponding
to it.

Given an $n$-vertex host graph $G$ and an arithmetic circuit for
$\sub{P_4, n}$, we can construct an arithmetic circuit for the
polynomial $\sub{P_4, n}(G)$ on the $y$ variables obtained by
substituting $x_e = 0$ when $e\not\in E(G)$ and $x_e = 1$ when
$e\in E(G)$. The polynomial $\sub{P_4, n}(G)$ can be written as
$\sum_{X} a_{X} y_{X}$ where the summation is over all four vertex
subsets $X$ of $V(G)$ and $a_X$ is the number of $P_4$s in the induced
subgraph $G[X]$. Therefore, we can decide whether $G$ has a subgraph
isomorphic to $P_4$ by testing whether $\sub{P_4, n}(G)$ is
identically $0$. Since the degree of this polynomial is a constant
$k$, this can be done in time linear in the size of the arithmetic
circuit computing $\sub{P_4, n}$.

However, we do not know how to construct a $O(n^2)$ size arithmetic
circuit for $\sub{P_4, n}$. Instead, we construct a $O(n^2)$ size
arithmetic circuit for the following polynomial called the walk
polynomial:

\begin{equation*}
  \ghom{P_4,n} = \sum_{\phi: P_4 \hommap K_n} \prod_{v\in V(P_4)}
  z_{v, \phi(v)} y_{\phi(v)} \prod_{e\in E(P_4)} x_{\phi(e)}
\end{equation*}

This polynomial is also called the homomorphism polynomial for $P_4$
because its terms are in one-to-one correspondence with graph
homomorphisms from $P_4$ to $K_n$.  As before, we consider the
polynomial $\ghom{P_4, n}(G)$ obtained by substituting for the $x$
variables appropriately. The crucial observation is that
$\ghom{P_4, n}(G)$ contains a multilinear term if and only if
$\sub{P_4, n}(G)$ is not identically zero. This is because the
multilinear terms of $\ghom{P_4, n}$ correspond to injective
homomorphisms from $P_4$ which in turn correspond to subgraph
isomorphisms from $P_4$. More specifically, each $P_4$ corresponds to
two injective homomorphisms from $P_4$ since $P_4$ has two
automorphisms. Therefore, we can test whether $G$ has a subgraph
isomorphic to $P_4$ by testing whether $\ghom{P_4, n}(G)$ has a
multilinear term. We can construct a $O(n^2)$ size arithmetic circuit
for the polynomial $p_4 = \ghom{P_4, n}$ inductively as follows:

\begin{align*}
  p_{1, v} &= y_v, v\in [n]\\
  p_{i+1, v} &= \sum_{u\in [n]} p_{i, u} y_v x_{\{u, v\}}, v\in [n], i\geq 1\\
  p_4 &= \sum_{v\in [n]} p_{4, v}
\end{align*}

The above construction works for any $k$ and not just $k = 4$. This
method is used by Ryan Williams \cite{Williams09} to obtain an
$O(2^{k}(n+m))$ time algorithm for the subgraph isomorphism problem
for $P_k$.

In fact, the above method works for any pattern graph $H$. Extend the
definitions above to define $\sub{H, n}$ and $\ghom{H, n}$ in the
natural fashion. Then, we can test whether an $n$-vertex graph $G$ has
a subgraph isomorphic to $H$ by testing whether $\sub{H, n}(G)$ is
identically zero which in turn can be done by testing whether
$\ghom{H, n}(G)$ has a multilinear term. Therefore, the complexity of
subgraph isomorphism problem for any pattern $H$ is as easy as
constructing the homomophism polynomial for $H$. This method is used
by Fomin et. al. \cite{FominLRSR12} to obtain efficient algorithms for
subgraph isomorphism problems.

We now turn our attention to the induced subgraph isomorphism problem
for $P_4$. We note that the induced subgraph isomorphism problem for
$P_k$ is much harder than the subgraph isomorphism problem for
$P_k$. The subgraph isomorphism problem for $P_k$ has a linear time
algorithm as seen above but the induced subgraph isomorphism problem
for $P_k$ cannot have $n^{o(k)}$ time algorithms unless
$\fpt = \fptw1$. We start by considering the polynomial:

\begin{equation*}
  \ind{P_4,n} = \sum_{(p, q, r, s) : p < s} y_p y_q y_r y_s  x_{\{p, q\}}  x_{\{q, r\}} x_{\{r, s\}} (1 - x_{\{p, r\}})(1 - x_{\{p, s\}})(1 - x_{\{q, s\}})
\end{equation*}

The polynomial $\ind{P_4, n}(G)$ can be written as $\sum_X y_X$ where
the summation is over all four vertex subsets of $V(G)$ that induces a
$P_4$. Notice that all coefficents are $1$ because there can be at
most $1$ induced-$P_4$ on any four vertex subset. By expanding terms
of the form $1-x_*$ in the above polynomial, we observe that we can
rewrite $\ind{P_4, n}$ as follows:

\begin{equation*}
  \ind{P_4, n} = \sub{P_4, n} - 4\sub{C_4, n} - 2\sub{K_3+e, n} +
  6\sub{K_4-e, n} + 12\sub{K_4, n}
\end{equation*}

Since the coefficients in $\ind{P_4, n}(G)$ are all $0$ or $1$, it is
sufficient to check whether $\ind{P_4, n}(G) \pmod 2$ is non-zero to
test whether $\ind{P_4, n}(G)$ is non-zero. From the above equation,
we can see that $\ind{P_4, n} = \sub{P_4, n} \pmod 2$. Therefore,
instead of working with $\ind{P_4, n} \pmod 2$, we can work with
$\sub{P_4, n} \pmod 2$. We have already seen that we can use
$\ghom{P_4, n}(G)$ to test whether $\sub{P_4, n}(G)$ is
non-zero. However, this is not sufficient to solve induced subgraph
isomorphism. We want to detect whether $\sub{P_4, n}(G)$ is non-zero
modulo $2$. Therefore, the multilinear terms of $\ghom{P_4, n}(G)$ has
to be in one-to-one correspondence with the terms of
$\sub{P_4, n}(G)$. We have to divide the polynomial $\ghom{P_4, n}(G)$
by $2$ before testing for the existence of multilinear terms modulo
$2$. However, since we are working over a field of characteristic $2$,
this division is not possible. We work around this problem by starting
with $\ghom{P_4, n'}$ for $n'$ slightly larger than $n$ and we show
that this enables the ``division'' by $2$.

The reader may have observed that instead of the homomorphism
polynomial, we could have taken any polynomial $f$ for which the
multilinear terms of $f(G)$ are in one-to-one correspondence with
$\sub{P_4, n}(G)$. This observation leads to the definition of a
notion of reduction between polynomials. Informally, $f\redto g$ if
detecting multilinear terms in $f(G)$ is as easy as detecting
multilinear terms in $g(G)$. Additionally, for the evaluation $f(G)$
to be well-defined, the polynomial $f$ must have some special
structure. We call such polynomials graph pattern polynomials.

On first glance, it appears hard to generalize this algorithm for
$P_4$ to sparse pattern graphs on an arbitrary number of vertices (For
example, $P_k$) because we have to argue about the coefficients of
many $\sub{*}$ polynomials in the expansion. On the other hand, if we
consider the pattern graph $K_k$, we have $\ind{K_k} = \ghom{K_k}$. In
this paper, we show that for many graph patterns sparser than $K_k$,
the induced subgraph isomorphism problem is as easy as constructing
arithmetic circuits for homomorphism polynomials for those patterns
(or patterns that are only slightly denser).

\section{Graph pattern polynomial families}

We will consider polynomial families $f = (f_n)$ of the following
form: Each $f_n$ will be a polynomial in variables $y_1,\dots,y_n$,
the vertex variables, and variables $x_1,\dots,x_{\binom n 2}$, the
edge variables, and at most linear in $n$ number of additional
variables.The degree of each $f_n$ will usually be constant.

The (not necessarily induced) subgraph isomorphism polynomial family
$\sub H = {(\sub{H,n})}_{n\geq 0}$ for a fixed pattern graph $H$ on
$k$ vertices and $\ell$ edges is a family of multilinear polynomials
of degree $k+\ell$. The $n^\text{th}$ polynomial in the family,
defined over the vertex set $[n]$, is the polynomial on
$n + \binom{n}{2}$ variables given by (\ref{eq:sub-full}):

\begin{equation}
  \sub{H,n} = \sum_{\phi : V(H)\submap V(K_n)} y_{\phi(V(H))} x_{\phi(E(H))} 
  \label{eq:sub-full}
\end{equation} 

When context is clear, we will often omit the subscript $n$ and simply
write $\sub H$. Given a (host) graph $G$ on $n$ vertices, we can
substitute values for the edge variables of $\sub{H,n}$ depending on
the edges of $G$ ($x_e = 1$ if $e\in E(G)$ and $x_e = 0$ otherwise) to
obtain a polynomial $\sub{H,n}(G)$ on the vertex variables. The
monomials of this polynomial are in one-to-one correspondence with the
$H$-subgraphs of $G$. i.e., a term $ay_{v_1}\dotsm y_{v_k}$, where $a$
is a positive integer, indicates that there are $a$ subgraphs
isomorphic to $H$ in $G$ on the vertices $v_1,\dotsc,v_k$. Therefore,
to detect if there is an $H$-subgraph in $G$, we only have to test
whether $\sub{H,n}(G)$ has a multilinear term.

The induced subgraph isomorphism polynomial family
$\ind{H} = {(\ind{H,n})}_{n\geq 0}$ for a pattern graph $H$ over the
vertex set $[n]$ is defined in (\ref{eq:ind-full}).

\begin{equation}
  \ind{H,n} = \sum_{\phi : V(H)\indmap V(K_n)} y_{\phi(V(H))} x_{\phi(E(H))} \prod_{e\not\in E(H)} (1 - x_{\phi(e)})
  \label{eq:ind-full}
\end{equation}

If we substitute the edge variables of $\ind{H,n}$ using a host graph
$G$ on $n$ vertices, then the monomials of the resulting polynomial
$\ind{H,n}(G)$ on the vertex variables are in one-to-one
correspondence with the induced $H$-subgraphs of $G$. In particular,
all monomials have coefficient $0$ or $1$ because there can be at most
one induced copy of $H$ on a set of $k$ vertices. This implies that to
test if there is an induced $H$-subgraph in $G$, we only have to test
whether $\ind{H,n}(G)$ has a multilinear term and we can even do this
modulo $p$ for any prime $p$. Also, note that $\ind{H}$ is simply
$\ind{\overline{H}}$ where all the edge variables $x_e$ are replaced
by $1 - x_e$.

The homomorphism polynomial family $\ghom{H} = {(\ghom{H,n})}_{n\geq
  0}$ for pattern graph $H$ over the vertex set $[n]$ is defined in
(\ref{eq:hom-full}).

\begin{equation}
  \ghom{H,n} = \sum_{\phi: H \hommap K_n}
  \prod_{v\in V(H)} z_{v, \phi(v)} y_{\phi(v)} \prod_{e\in E(H)} x_{\phi(e)}
  \label{eq:hom-full}
\end{equation}

The variables $z_{a, v}$'s are called the \emph{homomorphism
  variables}.  They keep track how the vertices of $H$ are mapped by
the different homomorphisms in the summation. We note that the size of
the arithmetic circuit computing $\ghom{H,n}$ is independent of the
labelling chosen to define the homomorphism polynomial.

The induced subgraph isomorphism polynomial for any graph $H$ and
subgraph isomorphism polynomials for supergraphs of $H$ are related as
follows:

\begin{equation}\label{eq:indsub-full}
  \ind{H,n} = \sum_{H' \supg H} {(-1)}^{e(H') - e(H)} \nsub{H}{H'} \sub{H',n}
\end{equation}

Here $e(H)$ is the number of edges in $H$ and $\nsub{H}{H'}$ is the
number of times $H$ appears as a subgraph in $H'$. The sum is taken
over all supergraphs $H'$ of $H$ having the same vertex set as
$H$. Equation~\ref{eq:indsub-full} is used by Curticapean, Dell, and
Marx \cite{CDM17} in the context of counting subgraph isomorphisms.

\begin{example}
  Let $P_3$ be the path on $3$ vertices and let $K_3$ be the triangle.
  \begin{align*}
    \sub{P_3, 3} &= y_1y_2y_3(x_{\{1,2\}}x_{\{2,3\}} +
                   x_{\{1,3\}}x_{\{2,3\}} +
                   x_{\{1,2\}}x_{\{1,3\}})\\
    \ind{P_3, 3} &= y_1y_2y_3\bigl(x_{\{1,2\}}x_{\{2,3\}}(1 - x_{\{1,3\}})\\
                 &+ x_{\{1,3\}}x_{\{2,3\}}(1 - x_{\{1,2\}})\\
                 &+ x_{\{1,2\}}x_{\{1,3\}}(1 - x_{\{2,3\}})\bigr)\\
                 &= \sub{P_3,3} - 3\sub{K_3,3}
  \end{align*}
\end{example}

For any fixed pattern graph $H$, the degree of polynomial families
$\sub{H}$, $\ind{H}$, and $\ghom{H}$ are bounded by a constant
depending only on the size of $H$. Such polynomial families are called
constant-degree polynomial families.

\begin{definition}
  A constant-degree polynomial family $f = (f_n)$ is called a
  \emph{graph pattern} polynomial family if the $n^{\text{th}}$
  polynomial in the family has $n$ vertex variables, $\binom{n}{2}$
  edge variables, and at most $cn$ other variables, where $c$ is a
  constant, and every non-multilinear term of $f_n$ has at least one
  non-edge variable of degree greater than 1.
\end{definition}

% \textbf{Constant number of other variables?}

It is easy to verify that $\ind{H}$, $\sub{H}$, and $\ghom{H}$ are all
graph pattern polynomial families. For a graph pattern polynomial $f$,
we denote by $f(G)$ the polynomial obtained by substituting $x_e = 0$
if $e\not\in E(G)$ and $x_e = 1$ if $e\in E(G)$ for all edge variables
$x_e$. Note that for any graph pattern polynomial $f$, we have
$\ml{f(G)} = \ml{f}(G)$. This is because any non-multilinear term in
$f$ has to remain non-multilinear or become 0 after this substitution.

\begin{definition}
  \begin{enumerate}
  \item A constant degree polynomial family $f = (f_n)$ has circuits
    of size $s(n)$ if there is a sequence of arithmetic circuits
    $(C_n)$ such that $C_n$ computes $f_n$ and has size at most
    $s(n)$.
  \item $f$ has uniform $s(n)$-size circuits, if on input $n$, we can
    construct $C_n$ in time $O(s(n))$ on a Word-RAM.\footnote{Since we
      are dealing with fine-grained complexity, we have to be precise
      with the encoding of the circuit.  We assume an encoding such
      that evaluating the circuit is linear time and substituting for
      variables with polynomials represented by circuits is
      constant-time.}
  \end{enumerate}
\end{definition}

We now define a notion of reducibility among graph pattern
polynomials.

\begin{definition} \label{def:subfam-full}
A \emph{substitution family} $\sigma = (\sigma_n)$ 
is a family of mappings 
\[
  \sigma_n: \{y_1,\dots,y_n,x_1,\dots,x_{\binom n2},u_1,\dots,u_{m(n)}\}
  \to K[y_1,\dots,y_{n'},x_1,\dots,x_{\binom {n'}2},v_1,\dots,v_{r(n)}]
\]
mapping variables to polynomials such that:
\begin{enumerate}
  \item $\sigma$ maps vertex variables to constant-degree monomials
    containing one or more vertex variables or other variables, and
    no edge variables.
  \item $\sigma$ maps edge variables to polynomials with
    constant-size circuits containing at most one edge variable and
    no vertex variables.
  \item $\sigma$ maps other variables to constant-degree monomials
    containing no vertex or edge variables and at least one
    other variable.
\end{enumerate}
$\sigma_n$ naturally extends to $K[y_1,\dots,y_n,$
  $x_1,\dots,x_{\binom{n}{2}},$ $u_1,\dots,u_m]$.
\end{definition}

\begin{definition}
  A substitution family $\sigma = (\sigma_n)$ is \emph{constant-time
    computable} if given $n$ and a variable $z$ in the domain of
  $\sigma_n$, we can compute $\sigma_n(z)$ in constant-time on a
  Word-RAM. (Note that an encoding of any $z$ fits into on cell of
  memory.)
\end{definition}

% \textbf{Name needed: blabla substitution}

\begin{definition} \label{def:reduc-full}
  Let $f = (f_n)$ and $g = (g_n)$ be graph pattern polynomial
  families. Then $f$ is reducible to $g$, denoted $f\redto g$, via a
  constant time computable substitution family $\sigma = (\sigma_n)$
  if for all $n$ there is an $m = O(n)$ and $q = O(1)$ such that
  \begin{enumerate}
    \item $\sigma_m(\ml{g_m})$ is a graph pattern polynomial and
    \item $\ml{\sigma_m(g_m)} = v_{[q]}\ml{f_n}$. (Recall that
      $v_{[q]} = v_1 \cdots v_q$.)
  \end{enumerate}
For any prime $p$, we say that $f\redto g \pmod p$ if there exists an
$f' = f \pmod p$ such that $f'\redto g$.
\end{definition}

Property~1 of Definition~\ref{def:reduc-full} and Properties~1 and 3 of
Definition~\ref{def:subfam-full} imply that $\sigma_m(g_m)$ is a graph
pattern polynomial because Properties~1 and 3 of
Definition~\ref{def:subfam-full} ensure that non-multilinear terms remain
so after the substitution. It is easy to see that $\redto$ is
reflexive via the identity substitution. We can also assume
w.l.o.g.\ that the variables $v_1,\dotsc,v_q$ are fresh variables
introduced by the substitution family $\sigma$.

What is the difference between $\sigma_m(\ml{g_m})$ and
$\ml{\sigma_m(g_m)}$ in the Definition~\ref{def:reduc-full}?  Every
monomial in $\ml{\sigma_m(g_m)}$ also appears in $\sigma_m(\ml{g_m})$,
however, the latter may contain further monomials that are not
multilinear.

\begin{proposition}
  $\redto$ is transitive.
\end{proposition}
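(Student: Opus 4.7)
The plan is to compose the two reducing substitution families. Suppose $f \redto g$ via $\sigma$ and $g \redto h$ via $\tau$. Given $n$, first invoke $f\redto g$ to obtain $m = O(n)$ and $q = O(1)$, then invoke $g\redto h$ with parameter $m$ to obtain $m' = O(m) = O(n)$ and $q' = O(1)$. Since the fresh variables $u_1,\dots,u_{q'}$ produced by $\tau_{m'}$ are disjoint from the variables of $g_m$, I extend $\sigma_m$ to fix them and define the composed substitution $\rho_{m'}(z) := \sigma_m(\tau_{m'}(z))$. The goal is to show that $\rho$ witnesses $f\redto h$ with $q+q' = O(1)$ fresh variables and parameter $m' = O(n)$. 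Verifying that $\rho$ satisfies the three conditions of Definition~\ref{def:subfam-full} and is constant-time computable will be a routine check: compositions of constant-degree monomials are again constant-degree monomials of the right type, edge-variable counts add correctly (at most one each), and the composition of two constant-time operations on a constant-size object is constant-time.

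The technical heart lies in condition 2 of Definition~\ref{def:reduc-full}. Writing $\tau_{m'}(h_{m'}) = u_{[q']}\ml{g_m} + N$ where $N$ is the non-multilinear remainder, and then applying $\sigma_m$ and taking multilinear parts, I expect
\begin{align*}
  \ml{\rho_{m'}(h_{m'})} &= u_{[q']}\,\ml{\sigma_m(\ml{g_m})} + \ml{\sigma_m(N)} \\
  &= u_{[q']}v_{[q]}\ml{f_n} + \ml{\sigma_m(N)},
\end{align*}
using that $\ml{\sigma_m(\ml{g_m})} = \ml{\sigma_m(g_m)} = v_{[q]}\ml{f_n}$ (by the remark following Definition~\ref{def:reduc-full}, Properties 1 and 3 of Definition~\ref{def:subfam-full} guarantee that the non-multilinear part of the graph pattern polynomial $g_m$ stays non-multilinear under $\sigma_m$). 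The main obstacle is to show $\ml{\sigma_m(N)} = 0$. My plan is to decompose $N$ as the non-multilinear part of $\tau_{m'}(\ml{h_{m'}})$ plus $\tau_{m'}(h_{m'} - \ml{h_{m'}})$. The first summand is the non-multilinear part of a graph pattern polynomial (by $g\redto h$, condition 1), so each of its monomials carries a non-edge variable of degree at least two; the second is the $\tau_{m'}$-image of non-multilinear monomials of the graph pattern polynomial $h_{m'}$, which similarly preserves a non-edge variable of degree $\geq 2$ by Properties 1 and 3 of Definition~\ref{def:subfam-full}. A second application of the same principle to $\sigma_m$ (acting on vertex/other variables of $g_m$, and as identity on the fresh $u_j$) then gives $\ml{\sigma_m(N)} = 0$.

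Condition 1 of Definition~\ref{def:reduc-full} for $\rho$ asks that $\rho_{m'}(\ml{h_{m'}})$ be a graph pattern polynomial. I would argue this by splitting $\tau_{m'}(\ml{h_{m'}})$ (itself a graph pattern polynomial by $g\redto h$) into its multilinear and non-multilinear parts. The multilinear part $u_{[q']}\ml{g_m}$ maps under $\sigma_m$ to $u_{[q']}\sigma_m(\ml{g_m})$, which is a graph pattern polynomial because $\sigma_m(\ml{g_m})$ is one (by $f\redto g$) and multiplication by the fresh $u_{[q']}$ preserves the graph-pattern structure. The $\sigma_m$-image of the non-multilinear part carries a non-edge variable of degree $\geq 2$ in every monomial by the same argument as above. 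Summing, $\rho_{m'}(\ml{h_{m'}})$ is a graph pattern polynomial, completing the verification.
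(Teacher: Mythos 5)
Your proposal is correct and follows essentially the same route as the paper's proof: compose the two substitution families (extending the inner substitution $\sigma_m$ to fix the fresh variables introduced by $\tau_{m'}$), verify Property~1 by splitting $\tau_{m'}(\ml{h_{m'}})$ into $u_{[q']}\ml{g_m}$ plus a purely non-multilinear remainder, and verify Property~2 by arguing that the remainder stays non-multilinear under $\sigma_m$ because Properties~1 and~3 of Definition~\ref{def:subfam-full} preserve a non-edge variable of degree at least two. Your decomposition of the remainder $N$ into the non-multilinear part of $\tau_{m'}(\ml{h_{m'}})$ plus $\tau_{m'}(h_{m'}-\ml{h_{m'}})$ is slightly more explicit bookkeeping than the paper's, but it is the same argument.
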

\begin{proof}
  Let $f\redto g$ via $\sigma$ and $g\redto h$ via $\tau$.  Assume
  that $f_n$ is written as a substitution instance of $g_{m(n)}$ by
  $\sigma$ and $g_m$ is written as a substitution instance of
  $h_{r(m)}$ by $\tau$ for some linearly bounded functions $m$ and
  $r$.  Let $\sigma_{m(n)}(g_{m(n)})$ and
  $\tau_{r(m(n))}(h_{r(m(n)))})$ have $u_1,\dotsc,u_p$ and
  $v_1,\dotsc,v_q$, respectively, as other variables that are
  multiplied with the multilinear terms. We can assume w.l.o.g.\ that
  these two sets of other variables are disjoint.

  Define $\sigma'$ as $\sigma$ extended to $v_i$ by $\sigma'_n(v_i) =
  v_i$ for all $i$ and $n \in \mathbb{N}$.  We claim that $f\redto h$
  via the family $(\sigma'_{m(n)} \circ \tau_{r(m(n))})$.  We need to
  verify the two properties of Definition \ref{def:reduc-full}.

  \emph{Property~1}:
  $\sigma'_{m(n)}(\tau_{r(m(n))}(\ml{h_{r(m(n))}})) =
  \sigma'_{m(n)}(v_{[q]}\ml{g_{m(n)}} + h')$ where $h'$ is a graph
  pattern polynomial containing only non-multilinear terms. Now, we
  have $h'' = \sigma'_{m(n)}(v_{[q]}\ml{g_{m(n)}})$
  $= v_{[q]}\sigma_{m(n)}(\ml{g_{m(n)}})$ because $\ml{g_{m(n)}}$
  cannot contain $v_i$ and $\sigma'_{m(n)}(v_i) = v_i$ for $i\in
  [q]$. This implies that $h''$ is a graph pattern polynomial because
  $\sigma_{m(n)}(\ml{g_{m(n)}})$ is a graph pattern polynomial. Also,
  $\sigma'_{m(n)}(h')$ is a graph pattern polynomial containing only
  non-multilinear terms by Properties~1 and 3 of
  Definition~\ref{def:subfam-full} proving that
  $(\sigma'_{m(n)} \circ \tau_{r(m(n))})(\ml{h_{r(m(n))}})$ is a graph
  pattern polynomial.

  \emph{Property~2} is proved as follows:
  
  \begin{align*}
    \ml{(\sigma'_{m(n)} \circ \tau_{r(m(n))})(h_{r(m(n))})}
    &= \ml{\sigma'_{m(n)}(\tau_{r(m(n))}(h_{r(m(n)))}))}\\
    &= \ml{\sigma'_{m(n)}(v_{[q]}\ml{g_{m(n)}} + h')}\\
    &= \ml{v_{[q]}\sigma_{m(n)}(\ml{g_{m(n)}})}\\
    &= v_{[q]}\ml{\sigma_{m(n)}(\ml{g_{m(n)}})}\\
    &= v_{[q]}u_{[p]}\ml{f_n }
  \end{align*}

  Note that the term $h'$ vanishes, since $\sigma_{m(n)}$ does not
  introduce new multilinear monomials and also $\ml{.}$ is a linear
  operator. The same happens in the second-last line, we did not write
  the additional term in the equation, since it vanishes anyway.

  We also have $r(m(n)) = O(n)$ and $p + q = O(1)$. It is easy to
  verify that $(\sigma'_{m(n)} \circ \tau_{r(m(n))})$ is a
  constant-time computable substitution family.
\end{proof}

Efficient algorithms are known for detecting multilinear terms of
\emph{small} degree with non-zero coefficient modulo primes. We state
two such theorems that we use in this paper.

\begin{theorem}
  \label{thm:test:1-full}
  Let $k$ be any constant and let $p$ be any prime. Given an
  arithmetic circuit of size $s$, there is a randomized, one-sided
  error $O(s)$-time algorithm to detect whether the polynomial
  computed by the circuit has a multilinear term of degree atmost $k$
  with non-zero modulo $p$ coefficient.
\end{theorem}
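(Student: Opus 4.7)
The plan is to reduce the problem, via standard homogenization and color coding, to nonzero testing of a constant-degree polynomial represented by an $O(s)$-size circuit, which can then be handled by Schwartz--Zippel over an extension of $\mathbb{F}_p$.

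First, I would apply degree-truncated homogenization to the given circuit $C$, producing arithmetic circuits $C_0,\dots,C_k$ of total size $O(s)$ (since $k$ is a constant) such that $C_d$ computes the degree-$d$ homogeneous component $f_d$ of the polynomial $f$ computed by $C$. Each gate is replaced by $k+1$ gates tracking the homogeneous parts; a product gate becomes a sum of at most $k+1$ bilinear products, so the blow-up is only a constant factor. The multilinear monomials of $f$ of degree at most $k$ are exactly the multilinear monomials appearing in $f_0+f_1+\cdots+f_k$, so it suffices to test each $f_d$ separately.

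For each fixed $d\le k$, I would use color coding to isolate degree-$d$ multilinear monomials. Sample a uniformly random coloring $c:[n]\to[d]$ of the circuit variables, introduce fresh variables $y_1,\dots,y_d$, and form $C_d'$ from $C_d$ by the substitution $x_i\mapsto x_i y_{c(i)}$. A multilinear monomial $x_{i_1}\cdots x_{i_d}$ of $f_d$ contributes to the coefficient of $y_1 y_2\cdots y_d$ in $C_d'$ if and only if $c(i_1),\dots,c(i_d)$ are all distinct, an event of probability $d!/d^d=\Omega(1)$. Extracting that coefficient by Lagrange interpolation over the $y_j$'s uses $(d+1)^d=O(1)$ evaluations of $C_d'$ and produces an $O(s)$-size circuit $D_d$ whose polynomial is nonzero modulo $p$ whenever the coloring is good for some multilinear monomial of $f_d$ with nonzero coefficient modulo $p$.

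To decide $D_d\not\equiv 0\pmod p$, I would use Schwartz--Zippel by evaluating $D_d$ at a uniformly random point of $\mathbb{F}_{p^r}$ with $p^r>2d$. Since $r=r(k,p)=O(1)$, field arithmetic is $O(1)$ and a single test runs in $O(s)$ time; $O(1)$ repetitions amplify the success probability to any desired constant. The algorithm is one-sided because any positive answer is witnessed by a genuine coefficient polynomial evaluating to a nonzero value, which can only occur if some multilinear monomial of $f$ of degree at most $k$ has nonzero coefficient modulo $p$. The main subtlety I anticipate lies in the interaction between small primes and the factorials appearing in Lagrange interpolation and in the color-coding success rate $d!/d^d$; working uniformly in a constant-degree extension $\mathbb{F}_{p^r}$ large enough for both interpolation nodes and Schwartz--Zippel points handles this cleanly while preserving the $O(s)$ running time.
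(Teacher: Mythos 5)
Your argument is correct. It rests on the same core idea as the paper's proof---randomly tag each variable with one of a constant number of auxiliary variables $y_j$ so that colorful multilinear monomials survive while monomials with a repeated variable are annihilated, and use random scalar substitutions to prevent cancellation modulo $p$---but the implementation differs. The paper performs a single evaluation of the circuit over the local ring $\mathbb{Z}_p[y_1,\dots,y_k]/\langle y_1^2,\dots,y_k^2\rangle$: the relations $y_j^2=0$ kill both non-multilinear terms and all terms of degree exceeding $k$ (by pigeonhole) in one shot, so no homogenization or coefficient extraction is needed, and each ring operation costs $O(2^k)=O(1)$. You instead homogenize, extract the $y_1\cdots y_d$ coefficient by multivariate Lagrange interpolation, and run Schwartz--Zippel over an extension field $\mathbb{F}_{p^r}$. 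Your route is longer but more self-contained: the paper's one-line claim that correctness follows ``by induction on the number of multilinear terms'' is exactly the cancellation issue you handle explicitly via Schwartz--Zippel, and your use of $\mathbb{F}_{p^r}$ correctly addresses the small-prime issue for interpolation nodes (though, as you half-suspect, the $d!/d^d$ color-coding probability is a probability over the coloring, not a field element, so it needs no such care). One could also note that the homogenization step is not strictly necessary---with $d$ colors, multilinear monomials of degree other than $d$ cannot produce the monomial $y_1\cdots y_d$ anyway---but including it does no harm. Both arguments yield the claimed one-sided-error $O(s)$-time test.
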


\begin{theorem}
  \label{thm:multiparity-full}
  Let $k$ be any constant. Given an arithmetic circuit of size $s$
  computing a polynomial of degree $k$ on $n$ variables, there is a
  deterministic $O(s+n^{\lceil k/2 \rceil})$-time algorithm to compute
  the parity of the sum of coefficient of multilinear terms.
\end{theorem}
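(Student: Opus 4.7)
The plan is to express $\ml{P}(\mathbf{1}) \bmod 2$ as a sum of partial-derivative evaluations of $P$ at $\mathbf{0}$, and to evaluate that sum via a meet-in-the-middle scheme exploiting the degree bound $k$.

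First, I would homogenize $P$: substituting $x_i \mapsto t x_i$ and interpolating in $t$ at $k+1$ distinct values produces arithmetic circuits of total size $O(s)$ for the homogeneous components $P_0, \ldots, P_k$ of $P$. Since $\ml{P}(\mathbf{1}) = \sum_{j=0}^{k} \ml{P_j}(\mathbf{1})$, it suffices to compute $\ml{P_j}(\mathbf{1}) \bmod 2$ for each $j$ separately. Next, I would use the identity $\ml{P_j}(\mathbf{1}) = \sum_{|T|=j} D_T P_j(\mathbf{0})$, where $D_T = \prod_{i \in T} \partial/\partial x_i$: for a homogeneous $P_j$ of degree $j$, the value $D_T P_j(\mathbf{0})$ is exactly the coefficient of the multilinear monomial $\prod_{i\in T} x_i$, so the parity we seek is the sum of these mixed-partial-at-origin values over all $T \subseteq [n]$ with $|T| \le k$.

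To fit the sum within $O(n^{\lceil k/2 \rceil})$ operations, I would split $[n]$ into halves $A, B$ of size $n/2$ and factor $D_T = D_{T\cap A}\, D_{T\cap B}$. Since any $T$ with $|T|\le k$ satisfies $\min(|T\cap A|, |T\cap B|) \le \lfloor k/2 \rfloor$, only subsets of one side of size at most $\lceil k/2 \rceil$ need be enumerated. For each such enumerated $U$, the aggregated contribution from all matching subsets of the other side is a value of the form $(D_U \sum_V D_V) P_j(\mathbf{0})$ for a symmetric operator $\sum_V D_V$, which can be extracted from a single $O(s)$-size derived circuit obtained by iterated Baur--Strassen preprocessing. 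An inclusion-exclusion correction handles the overlap when both halves have size at most $\lceil k/2 \rceil$.

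The main obstacle is achieving the additive bound $O(s + n^{\lceil k/2 \rceil})$ rather than a multiplicative $O(s \cdot n^{\lceil k/2 \rceil})$ one, which the naive implementation (one circuit evaluation per enumerated subset) would yield. This hinges on arranging the Baur--Strassen preprocessing so that its single $O(s)$-time pass produces a table from which each of the $O(n^{\lceil k/2 \rceil})$ aggregated contributions is read in amortized constant time, while verifying that the binomial-coefficient weights arising from the inclusion-exclusion collapse correctly modulo $2$, for which Lucas' theorem on binomial parities will be the key technical ingredient.
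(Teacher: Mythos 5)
The paper states Theorem~\ref{thm:multiparity-full} without proof (it is imported from the literature rather than proved in the text), so there is no in-paper argument to compare against; judged on its own terms, your proposal has a genuine gap at exactly the step that carries the whole weight of the theorem. Your reformulation is fine up to a point: homogenization costs $O(s)$, and $D_TP_j(\mathbf{0})$ is indeed the coefficient of the multilinear monomial $x_T$ (no factorial factors arise, so characteristic $2$ causes no trouble there). The problem is the claim that, after fixing $U$, the aggregated values $\bigl(D_U\sum_V D_V\bigr)P_j(\mathbf{0})$ for all $O(n^{\lceil k/2\rceil})$ choices of $U$ can be read off ``in amortized constant time'' from ``a single $O(s)$-size derived circuit obtained by iterated Baur--Strassen preprocessing.'' Baur--Strassen gives one circuit of size $O(s)$ computing all $n$ first-order partials of a single output; to obtain all order-$t$ mixed partials you must reapply it to each of the resulting outputs, and the circuit size grows to $O(n^{t-1}s)$. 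So the natural implementation of your scheme costs $O(n^{\lceil k/2\rceil-1}\cdot s)$ or worse --- a multiplicative overhead in $s$, which is precisely what the theorem forbids. More fundamentally, an $O(s)$-time pass cannot produce a table of $\Theta(n^{\lceil k/2\rceil})$ independent values when $s\ll n^{\lceil k/2\rceil}$, so the per-query work cannot be pushed into preprocessing; computing all multilinear coefficients of degree up to $\lceil k/2\rceil$ of a circuit is essentially the original problem again, not a single evaluation. You flag this as ``the main obstacle'' but offer no mechanism to overcome it, and that mechanism is the theorem.

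Two secondary issues. First, even the single polynomial $Q_b=\sum_{|V|=b}D_VP_j$ for $b\ge 2$ is a sum of $\binom{n}{b}$ mixed partials, and you have not exhibited an $O(s)$-size circuit for it (for $b=1$ this follows from Baur--Strassen plus one addition gate; for $b\ge 2$ it does not). Second, over $\mathbb{F}_2$ the sets $T$ with both $|T\cap A|$ and $|T\cap B|$ at most $\lceil k/2\rceil$ are counted twice by your two-sided enumeration and therefore cancel entirely rather than acquiring a binomial weight; adding them back by enumerating pairs $(U,V)$ with $|U|+|V|=j$ costs $\Theta(n^j)=\Theta(n^k)$ in the worst case, so the ``inclusion-exclusion correction'' needs a canonical assignment of each $T$ to one half (e.g.\ to the side containing at most $\lfloor j/2\rfloor$ of its elements, with ties broken consistently) rather than a Lucas-type collapse. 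These are repairable bookkeeping matters, but the first paragraph's gap is not: as written, the argument does not establish the additive $O(s+n^{\lceil k/2\rceil})$ bound.
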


An important algorithmic consequence of reducibility is stated in
Proposition~\ref{prop:red-alg-full}.

\begin{proposition}
  Let $p$ be any prime. Let $f$ and $g$ be graph pattern polynomial
  families. Let $s(n)$ be a polynomially-bounded function. If
  $f\redto g$ and $g$ has size uniform $s(n)$-size arithmetic
  circuits, then we can test whether $f_n(G)$ has a multilinear term
  with non-zero coefficient modulo $p$ in $O(s(n))$ (randomized
  one-sided error) time for any $n$-vertex graph $G$.
  \label{prop:red-alg-full}
\end{proposition}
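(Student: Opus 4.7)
The plan is to unfold the reduction into a concrete circuit construction and then invoke Theorem~\ref{thm:test:1-full}. Since $f \redto g$, there exist constants $q = O(1)$ and $m = m(n) = O(n)$, together with a constant-time computable substitution $\sigma_m$, such that $\ml{\sigma_m(g_m)} = v_{[q]} \ml{f_n}$ and $\sigma_m(\ml{g_m})$ is a graph pattern polynomial. My first step is to take the uniform arithmetic circuit $C_m$ for $g_m$; since $m = O(n)$ and $s$ is polynomially bounded, $C_m$ has size $O(s(n))$ and can be built in time $O(s(n))$.

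Next, I would apply the substitution at the leaf level: for each input gate of $C_m$ labelled by a variable $z$, I replace it by a constant-size circuit computing $\sigma_m(z)$. Because $\sigma_m$ sends every variable to a polynomial with a constant-size circuit and is itself constant-time computable, this step preserves the size bound $O(s(n))$ and yields a circuit for $\sigma_m(g_m)$. I then hardwire $G$ by substituting $x_{e'} = 1$ if $e' \in E(G)$ and $x_{e'} = 0$ otherwise; this is another $O(s(n))$-time pass and produces an arithmetic circuit $C$ of size $O(s(n))$ computing $\sigma_m(g_m)(G)$, a polynomial in the vertex variables $y_1,\dots,y_n$ and the fresh variables $v_1,\dots,v_q$ whose total degree is bounded by the constant $\pdeg{\sigma_m(g_m)}$.

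The key identity is that the multilinear parts behave well under the edge substitution. Because $\sigma_m(g_m)$ is a graph pattern polynomial (as observed after Definition~\ref{def:reduc-full}) and so is $f_n$, we have $\ml{\sigma_m(g_m)(G)} = \ml{\sigma_m(g_m)}(G) = v_{[q]} \ml{f_n}(G) = v_{[q]} \ml{f_n(G)}$. Since $v_1,\dots,v_q$ are fresh variables not occurring in $\ml{f_n(G)}$, multiplication by $v_{[q]}$ is a bijection between monomials that preserves coefficients. Hence $f_n(G)$ has a multilinear monomial with non-zero coefficient modulo $p$ if and only if $\sigma_m(g_m)(G)$ does.

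The final step is to apply Theorem~\ref{thm:test:1-full} to $C$ with the constant degree bound; this gives a randomized one-sided error test running in time $O(|C|) = O(s(n))$. I do not expect a serious obstacle: the potential subtlety is ensuring that the substitution really commutes with taking the multilinear part after evaluating at $G$, which is why the graph pattern polynomial hypothesis (every non-multilinear term keeps a non-edge variable of degree $\geq 2$) is essential — it prevents ``extra'' multilinear monomials from being created when the $x_e$ are set to $0/1$. Verifying this, together with the uniform bookkeeping that $m(n)=O(n)$ and $s$ polynomially bounded together give $s(m(n)) = O(s(n))$, is the only place careful attention is needed.
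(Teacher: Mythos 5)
Your proposal is correct and follows essentially the same route as the paper: apply the substitution $\sigma_m$ to the uniform circuit for $g_m$, use the graph pattern polynomial property of both $f_n$ and $\sigma_m(g_m)$ to get the identity chain $\ml{\sigma_m(g_m)(G)} = \ml{\sigma_m(g_m)}(G) = v_{[q]}\ml{f_n}(G) = v_{[q]}\ml{f_n(G)}$, and then invoke Theorem~\ref{thm:test:1-full} on the resulting $O(s(n))$-size circuit. Your version merely spells out the leaf-level circuit substitution and the coefficient-preserving role of the fresh variables $v_{[q]}$ in slightly more detail than the paper does.
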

\begin{proof}
  Assume that $f_n$ is reducible to $g_m$ where $m = O(n)$. Since
  $s(n)$ is polynomially bounded, we have $\size(g_m) =
  O(s(n))$. Apply the substitution $\sigma_m$ to $g_m$ to obtain
  $g'$. Let $u_1,\dots,u_r$ be the other variables of $g'$. We claim
  that testing whether the polynomial $g'(G)$ has a multilinear term
  is equivalent to testing whether $f_n(G)$ has a multilinear term. We
  have $u_{[r]}\ml{f_n} = \ml{g'}$. Since both $f_n$ and $g'$ are
  graph pattern polynomials, we have
  $u_{[r]}\ml{f_n(G)} = u_{[r]}\ml{f_n}(G) = \ml{g'}(G) =
  \ml{g'(G)}$. Therefore, testing whether the polynomial $f_n(G)$ has
  a multilinear term of degree at most $k$, where $k$ is some
  constant, reduces to testing whether $g'(G)$ has a multilinear term
  of degree $k + r = O(1)$. Since $g'$ has $O(s(n))$ size circuits,
  this can be done in $O(s(n))$ (randomized one-sided error) time.
\end{proof}

On the other hand, if we only have $f\redto g \pmod p$ for some
specific prime $p$, then it is only possible to test for multilinear
terms in $f$ that have non-zero coefficients modulo $p$ for that prime
$p$.

\begin{corollary}
  Let $f\redto g \pmod p$ and $g$ has $s(n)$ size circuits where
  $s(n)$ is polynomially bounded. Then we can test whether $f_n(G)$ has
  a multilinear term with non-zero coefficient modulo $p$ in $O(s(n))$
  time for any $n$-vertex graph $G$.
\end{corollary}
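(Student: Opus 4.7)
The plan is to derive this as a direct consequence of Proposition~\ref{prop:red-alg-full} by using the definition of reduction modulo $p$ to pass from $f$ to a companion polynomial family $f'$ to which the unconditional proposition applies.

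First, I would unpack the hypothesis $f \redto g \pmod p$. By the definition given just after Definition~\ref{def:reduc-full}, this supplies a graph pattern polynomial family $f' = (f'_n)$ satisfying $f' \equiv f \pmod p$ (coefficient-wise) together with an honest reduction $f' \redto g$ witnessed by some constant-time computable substitution family $\sigma$. Since $g$ has uniform $s(n)$-size circuits and $s(n)$ is polynomially bounded, Proposition~\ref{prop:red-alg-full} applies to the pair $(f', g)$ and yields an $O(s(n))$-time randomized one-sided-error algorithm $A$ that tests whether $f'_n(G)$ contains a multilinear term with nonzero coefficient modulo $p$.

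Next, I would argue that $A$ already solves the required task for $f$. Because substituting the $0/1$ edge-values of $G$ is a coefficient-preserving ring homomorphism, and because $\ml{\cdot}$ is a linear operator acting on coefficients, the congruence $f \equiv f' \pmod p$ propagates to
\begin{equation*}
  \ml{f_n(G)} \;\equiv\; \ml{f'_n(G)} \pmod p.
\end{equation*}
Hence $f_n(G)$ has a multilinear term with nonzero coefficient mod $p$ if and only if $f'_n(G)$ does, so running $A$ on input $G$ decides the question for $f$ in $O(s(n))$ time.

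The argument is essentially mechanical once the definitions are in hand; there is no real obstacle. The only point that deserves a line of care is that the preservation of multilinear parts modulo $p$ really does survive the edge-substitution step, which is where one uses the graph pattern polynomial property that every non-multilinear term carries a non-edge variable of degree greater than one and so cannot be collapsed into a multilinear term by the $x_e \in \{0,1\}$ substitution, mirroring the identity $\ml{f(G)} = \ml{f}(G)$ already established for graph pattern polynomials.
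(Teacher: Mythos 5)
Your proof is correct and is exactly the intended argument: the paper states this corollary without proof as an immediate consequence of Proposition~\ref{prop:red-alg-full} and the definition of $\redto \pmod p$, and your unpacking (pass to $f'$ with $f'\equiv f \pmod p$, apply the proposition, note that the mod-$p$ congruence of coefficients survives the edge substitution) fills in precisely the intended steps.
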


More relaxed notions of reduction allowing an increase of
$\polylog(n)$ factors in size or allowing multilinear terms to be
multiplied by arbitrary sets of other variables could also be useful
to obtain better algorithms. We do not pursue this because we could
not find any reductions that make use of this freedom.

The following result allows efficient construction of $\ghom{H}$ when
$H$ has small treewidth.

\begin{theorem}{(Implicit in \cite{DiazST02}, Also used in \cite{FominLRSR12} and \cite{DMMRS14})}
  $\ghom{H}$ can be computed by $O\bigl(n^{\tw{H}+1}\bigr)$ size
  arithmetic circuits for all graphs $H$.
  \label{thm:diaz-full}
\end{theorem}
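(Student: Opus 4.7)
The plan is to perform standard dynamic programming along a tree decomposition of $H$, indexing subproblems by assignments of each bag to vertices of $[n]$. Since every bag contains at most $\tw{H}+1$ vertices, the total number of indexed subproblems per bag is $O(n^{\tw{H}+1})$, and this count is what drives the circuit size.

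First I would fix a nice tree decomposition $(T, \{B_t\}_{t \in V(T)})$ of $H$ of width $\tw{H}$, rooted at a node $r$ with $B_r = \emptyset$, arranged so that every vertex of $H$ is introduced at a unique introduce-vertex node, every edge of $H$ is introduced at a unique introduce-edge node, and every vertex is eventually forgotten. Let $V_t$ denote the union of bags in the subtree rooted at $t$ and $E_t$ the set of edges of $H$ introduced in that subtree. For each node $t$ and each map $\phi_t : B_t \to [n]$, define the subproblem polynomial
\[
  P_{t, \phi_t} \;=\; \sum_{\psi} \prod_{v \in V_t \setminus B_t} z_{v, \psi(v)} y_{\psi(v)} \prod_{e \in E_t} x_{\psi(e)},
\]
where $\psi$ ranges over homomorphisms $H[V_t] \hommap K_n$ extending $\phi_t$; the factors for vertices still in $B_t$ are deliberately withheld until those vertices are forgotten.

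A circuit computing each $P_{t,\phi_t}$ is then assembled by the standard case analysis on node types: an introduce-vertex node simply reindexes its child's polynomial; an introduce-edge node for $\{u,v\}$ multiplies by $x_{\{\phi_t(u),\phi_t(v)\}}$, with assignments satisfying $\phi_t(u) = \phi_t(v)$ contributing $0$ so that only genuine homomorphisms survive; a forget-vertex node for $v$ with child $t'$ computes $P_{t,\phi_t} = \sum_{j \in [n]} z_{v,j}\, y_j\, P_{t',\, \phi_t \cup \{v \mapsto j\}}$, absorbing the withheld factor for $v$; and a join node with children $t_1, t_2$ (which share the bag of $t$) multiplies $P_{t_1,\phi_t} \cdot P_{t_2,\phi_t}$. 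Each node contributes at most $O(n^{\tw{H}+1})$ new gates, and $|V(T)| = O(|V(H)|) = O(1)$, so the overall circuit size is $O(n^{\tw{H}+1})$. The output gate $P_{r,\emptyset}$ equals $\ghom{H,n}$ by construction.

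The one genuine subtlety, and the step I expect to be the main obstacle, is the bookkeeping: one must verify that for every homomorphism $\phi : H \hommap K_n$, each of the variables $z_{v,\phi(v)}$, $y_{\phi(v)}$, and $x_{\phi(e)}$ is multiplied in exactly once over the course of the recursion. Withholding the vertex factors until the forget step (rather than attaching them at introduction) is precisely what avoids double-counting at join nodes, whose active bag is shared by both children, and uniqueness of introduction nodes for edges avoids double-counting of edge variables. Beyond this accounting and the formal invocation of a nice decomposition with the required uniqueness properties, there is no further algorithmic content, which is presumably why the result appears only implicitly in the cited works.
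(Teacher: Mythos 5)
Your proof is correct and follows essentially the same approach as the paper: dynamic programming over a nice tree decomposition with one gate per (node, bag-assignment) pair, withholding the $z_{v,\cdot}y_\cdot$ factors until the forget step to avoid double-counting at join nodes. The only cosmetic difference is that you multiply in edge variables at dedicated introduce-edge nodes, whereas the paper attaches all edges incident on a vertex (to bag members) at that vertex's forget node; both conventions account for each edge exactly once.
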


%%% Local Variables:
%%% mode: latex
%%% TeX-master: "subgraphs"
%%% End:

\section{Pattern graphs easier than cliques}
\label{sec:easypattern-full}
In this section, we describe a family $H_{3k}$ of pattern graphs such that the
induced subgraph isomorphism problem for $H_{3k}$ has an $O(n^{\omega(k, k-1,
k)})$ time algorithm when $k = 2^\ell, \ell \geq 1$. Note that for the currently
known best algorithms for fast matrix multiplication, we have $\omega(k, k-1, k)
< k\omega$. Therefore, these pattern graphs are strictly easier to detect than
cliques.

The pattern graph $H_{3k}$ is defined on $3k$ vertices and we consider the
canonical labelling of $H_{3k}$ where there is a $(3k-1)$-clique on vertices
$\{1,\dotsc,3k-1\}$ and the vertex $3k$ is adjacent to the vertices
$\{1,\dotsc,2k-1\}$.

\begin{lemma}
    \label{lem:IH3kNH3kmod2-full}
  $\ind{H_{3k}} = \sub{H_{3k}} \pmod 2$ when $k = 2^\ell, \ell \geq 1$
\end{lemma}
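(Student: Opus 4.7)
The plan is to apply Equation~\ref{eq:indsub-full}:
\[
\ind{H_{3k},n} = \sum_{H' \supg H_{3k}} (-1)^{e(H')-e(H_{3k})} \nsub{H_{3k}}{H'} \sub{H',n},
\]
and reduce modulo $2$. The signs disappear, and since the polynomials $\sub{H',n}$ for distinct isomorphism classes of $H'$ have pairwise disjoint monomial supports (the image edge set of a monomial determines the class $[H']$), the statement reduces to showing $\nsub{H_{3k}}{H'} \equiv 0 \pmod 2$ for every proper supergraph $H' \psupg H_{3k}$ on the vertex set of $H_{3k}$.

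The non-edges of $H_{3k}$ are exactly the $k$ pairs $\{3k, v\}$ with $v \in \{2k,\dotsc,3k-1\}$. Since this set forms an orbit of the automorphism group of $H_{3k}$, the proper supergraphs of $H_{3k}$ up to isomorphism are parameterised by the number $j \in \{1,\dotsc,k\}$ of added non-edges; denote this class $H_{3k}^{(j)}$, with $H_{3k}^{(k)} = K_{3k}$.

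To count $\nsub{H_{3k}}{H_{3k}^{(j)}}$, I specify the vertex $v$ of $H_{3k}^{(j)}$ that plays the role of the distinguished degree-$(2k-1)$ vertex of $H_{3k}$ and then choose its $2k-1$ neighbours in the subgraph; the remaining $3k-1$ vertices must induce a $(3k-1)$-clique inside $H_{3k}^{(j)}$. The choice $v = 3k$ always works and contributes $\binom{2k-1+j}{2k-1}$. For $v \in [3k-1]$, the clique condition forces vertex $3k$ to be adjacent to every element of $[3k-1]\setminus\{v\}$ in $H_{3k}^{(j)}$, which is possible only when $j = k-1$ (one extra $v$, contributing $\binom{3k-2}{2k-1}$) or $j = k$ (any $v \in [3k-1]$, each contributing $\binom{3k-1}{2k-1}$). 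Summing, $\nsub{H_{3k}}{H_{3k}^{(j)}}$ equals $\binom{2k-1+j}{2k-1}$ for $1 \leq j \leq k-2$, equals $2\binom{3k-2}{2k-1}$ for $j = k-1$, and equals $3k\binom{3k-1}{2k-1}$ for $j = k$.

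All three quantities are even when $k = 2^\ell$ with $\ell \geq 1$. The $j = k-1$ case is immediate, and the $j = k$ case follows from $3k = 3\cdot 2^\ell$ being even. For $1 \leq j \leq k-2$ I invoke Lucas's theorem: $2k-1 = 2^{\ell+1}-1$ has bit $\ell$ equal to $1$ in binary, whereas $2k-1+j = 2^{\ell+1}+(j-1)$ with $0 \leq j-1 < 2^\ell$ has bit $\ell$ equal to $0$, so $\binom{2k-1+j}{2k-1}$ is even. The main obstacle is the case analysis that identifies all supergraphs $H_{3k}^{(j)}$ and enumerates the $H_{3k}$-subgraphs within each; once this is in place, the Lucas computation is mechanical and hinges precisely on $k$ being a power of two.
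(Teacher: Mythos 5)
Your proof is correct and follows essentially the same route as the paper's: both enumerate the proper spanning supergraphs of $H_{3k}$ (parameterised by the number $j$ of non-edges at vertex $3k$ that are added), arrive at the same counts $\binom{2k-1+j}{2k-1}$ for $1\le j\le k-2$, $2\binom{3k-2}{2k-1}$ for $K_{3k}-e$ and $3k\binom{3k-1}{2k-1}$ for $K_{3k}$, and close with the identical Lucas-theorem parity argument exploiting that $2k-1=2^{\ell+1}-1$ has bit $\ell$ set while $2k-1+j$ does not. Your write-up is merely more explicit about why the distinguished degree-$(2k-1)$ vertex must map to $3k$ when $j\le k-2$; the substance is the same.
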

\begin{proof}
  We show that the number of times $H_{3k}$ is contained in any of its proper
  supergraphs is even if $k$ is a power of $2$.  The graph $K_{3k}$ contains
  $3k{\binom{3k-1}{2k-1}}$ copies of $H_{3k}$. This number is even when $k$ is
  even. The graph $K_{3k} - e$ contains $2{\binom{3k-2}{2k-1}}$ copies of
  $H_{3k}$. This number is always even. The remaining proper supergraphs of
  $H_{3k}$ are the graphs $K_{3k-1} + (2k+i)e$, i.e., a $(3k-1)$-clique with
  $2k+i$ edges to a single vertex, for $0\leq i < k-2$. There are $m_i =
  {\binom{2k+i}{2k-1}}$ copies of the graph $H_{3k}$ in these supergraphs. We
  observe that the numbers $m_i$ are even when $k = 2^\ell, \ell \geq 1$ by
  Lucas' theorem. Lucas' theorem states that ${\binom{p}{q}}$ is even if and
  only if in the binary representation of $p$ and $q$, there exists some bit
  position $i$ such that $q_i = 1$ and $p_i = 0$. To see why $m_i$ is even,
  observe that in the binary representation of $2k-1$, all bits $0$ through
  $\ell$ are $1$ and in the binary representation of $2k+i, 0\leq i < k-2$, at
  least one of those bits is $0$.
\end{proof}

\begin{lemma}
  \label{lem:NH3kHomH3k-full}
  $\sub{H_{3k}} \redto \ghom{H_{3k}}$
\end{lemma}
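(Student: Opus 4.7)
The plan is to exhibit an explicit substitution family $\sigma=(\sigma_m)$ with $m=O(n)$ witnessing the reduction. Take $m=3k\cdot n$ and identify $[m]$ with $V(H_{3k})\times[n]$, so each vertex of the host of $\ghom{H_{3k},m}$ carries both a label $u\in V(H_{3k})$ and an index $i\in[n]$. On the homomorphism variables, set $\sigma_m(z_{v,(u,i)})=z$ when $v=u$ and $\sigma_m(z_{v,(u,i)})=t^2$ otherwise, with $z,t$ fresh other variables; this forces any hom that fails to send vertex $v$ to a vertex of label $v$ to be non-multilinear. On the vertex variables, set $\sigma_m(y_{(u,i)})=y_i$. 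On the edge variables, set $\sigma_m(x_{\{(u_1,i_1),(u_2,i_2)\}})=x_{\{i_1,i_2\}}$ when $\{u_1,u_2\}\in E(H_{3k})$ and $i_1\neq i_2$, and to a fresh squared other variable otherwise.

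With these substitutions, a direct unfolding shows that the multilinear part of $\sigma_m(\ghom{H_{3k},m})$ is obtained by summing the monomial $z^{3k}\prod_v y_{i_v}\prod_{\{u,v\}\in E(H_{3k})} x_{\{i_u,i_v\}}$ over injective functions $v\mapsto i_v$ from $V(H_{3k})$ to $[n]$. Since this monomial depends only on the image $\{i_v\}$ and its induced edge structure, these contributions group by subgraph isomorphism class of $\sub{H_{3k},n}$.

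The main obstacle is that each class is hit with multiplicity $\naut{H_{3k}}=(2k-1)!\,k!$ (one contribution per iso $V(H_{3k})\to S$), whereas the reduction requires multiplicity~$1$. Removing this overcount is the technically delicate step; I plan to handle it along the lines of the ``division by $\naut{P_4}=2$'' device sketched for $P_4$ in Section~3, by slightly enlarging $m$ and augmenting the $z$-substitution with marker variables indexed by pairs in $V(H_{3k})\times V(H_{3k})$ that are shared across non-canonical orderings within the two auto-orbits $\{1,\dots,2k-1\}$ and $\{2k,\dots,3k-1\}$ of the action of $S_{2k-1}\times S_k$ on $V(H_{3k})$, so that only one representative per orbit avoids a squared marker and survives in the multilinear part.

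Once the refined $\sigma$ is in place, Property~1 of Definition~\ref{def:reduc-full} holds because every collapsed contribution becomes non-multilinear through a squared other variable (keeping $\sigma_m(\ml{\ghom{H_{3k},m}})$ a graph pattern polynomial), and Property~2 is precisely the resulting identity $\ml{\sigma_m(\ghom{H_{3k},m})}=v_{[q]}\,\sub{H_{3k},n}$, with $v_{[q]}$ the product of fresh other variables introduced by $\sigma$. The remaining checks that $m=O(n)$, $q=O(1)$, and that $\sigma$ is constant-time computable are straightforward from the construction, since each substitution depends only on the pattern $H_{3k}$ and the variable's indices.
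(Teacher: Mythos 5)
Your setup matches the paper's: colour the host as $[n]\times[3k]$, use the $z$-substitution to force colour-respecting maps, the $y$-substitution to force injectivity on labels, and project edge variables to $x_{\{i_1,i_2\}}$. You also correctly identify the crux: each labelled copy of $H_{3k}$ is hit $\naut{H_{3k}}=(2k-1)!\,k!$ times, and the definition of $\redto$ demands coefficient exactly $1$. But that crux is exactly the step you leave as a ``plan,'' and the mechanism you propose for it does not work. A substitution family maps \emph{individual} variables, and $\sigma_m(z_{v,(u,i)})$ can depend only on the pattern vertex $v$ and the single host vertex $(u,i)$; it cannot see a second vertex of the monomial, so no choice of ``marker variables'' attached to $z$-variables can distinguish a canonical ordering of the labels within an orbit from a non-canonical one --- detecting that labels $i_a>i_b$ while colours satisfy $a<b$ requires looking at a \emph{pair}, i.e., at an edge variable. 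The paper's proof does precisely this: it sets $\sigma(x_{\{(u,a),(v,b)\}})=0$ whenever $a<b$ both lie in $\{1,\dots,2k-1\}$ (resp.\ both in $\{2k,\dots,3k-1\}$) and $u>v$, and this kills every isomorphism except the unique one that lists each orbit in increasing label order. Crucially this works only because both automorphism orbits of $H_{3k}$ induce cliques, so every out-of-order pair within an orbit is witnessed by an actual edge variable of the monomial; that structural fact is the reason the cancellation can be done exactly (multiplicity $1$, over $\mathbb{Z}$) rather than by the mod-$2$ parity device of Section~3, which is a different trick addressing a different problem and is not what is needed here.

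Two smaller points. First, your rule $\sigma_m(z_{v,(u,i)})=z$ for $v=u$ with a \emph{single} fresh variable $z$ makes every surviving monomial contain $z^{3k}$, which is not multilinear, so $\ml{\sigma_m(\ghom{H_{3k}})}$ would be $0$; you need distinct variables $z_1,\dots,z_{3k}$ (one per colour), as the paper does, so that the multilinear part is $z_{[3k]}\,\sub{H_{3k},n}$ with $z_{[3k]}$ playing the role of $v_{[q]}$. Second, your ``otherwise'' clause sending edge variables to a fresh \emph{squared} other variable is not obviously consistent with Definition~\ref{def:subfam-full} (edge variables must go to polynomials containing no vertex variables and at most one edge variable, and other-variable bookkeeping must keep $q=O(1)$); the paper avoids the issue by sending all remaining edge variables to $x_{\{u,v\}}$ and letting the $z$- and $y$-rules kill the unwanted monomials.
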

\begin{proof}
  We start with $\ghom{H_{3k}}$ over the vertex set $[n]\times [3k]$ and apply
  the following substitution.

  \setcounter{equation}{0}
  \begin{align}
    \sigma(z_{a, (v, a)}) &= z_a\\
    \sigma(z_{a, (v, b)}) &= z_a^2, a\neq b\\
    \sigma(y_{(v, a)}) &= y_v\\
    \sigma(x_{(u, a), (v, b)}) &= 0, \text{if $a, b\in\{1,\dotsc,2k-1\}$ and $a < b$ and $u > v$}\\
    \sigma(x_{(u, a), (v, b)}) &= 0, \text{if $a, b\in\{2k,\dotsc,3k-1\}$ and $a < b$ and $u > v$}\\
    \sigma(x_{(u, a), (v, b)}) &= x_{\{u, v\}}, \text{otherwise}
  \end{align}

  Rule 3 ensures that in any surviving monomial, all vertices have distinct
  labels. Rule 4 ensures that the vertices coloured $1,\dotsc, 2k-1$ are in
  increasing order and Rule 5 ensures that the vertices coloured
  $2k,\dotsc,3k-1$ are in increasing order.

  Consider an $H_{3k}$ labelled using $[n]$ where the vertices in the
  $(3k-1)$-clique are labelled $v_1,\dotsc, v_{3k-1}$ and the
  remaining vertex is labelled $v_{3k}$ which is connected to
  $v_1 < \dotso < v_{2k-1}$. Also, $v_{2k} < \dotso < v_{3k-1}$. We
  claim that the monomial corresponding to this labelled $H_{3k}$ (say
  $m$) is uniquely generated by the monomial
  $m' = \prod_{1\leq i\leq 3k} z_{i, (v_i, i)} w$ in
  $\ghom{H_{3k}}$. Note that the vertices and edges in the image of
  the homomorphism is determined by the map $i\mapsto (v_i, i)$. The
  monomial $w$ is simply the product of these vertex and edge
  variables. It is easy to see that this monomial yields the required
  monomial under the above substitution.  The uniqueness is proved as
  follows: observe that in any monomial $m''$ in $\ghom{H_{3k}}$ that
  generates $m$, the vertex coloured $3k$ must be $v_{3k}$.  This
  implies that the vertices coloured $1,\dotsc,2k-1$ must be the set
  $\{v_1,\dotsc,v_{2k-1}\}$. Rule~4 ensures that vertex coloured $i$
  must be $v_i$ for $1\leq i \leq 2k-1$. Similarly, the vertices
  coloured $2k,\dotsc,3k-1$ must be the set
  $\{v_{2k},\dotsc,v_{3k-1}\}$ and Rule~5 ensures that vertex coloured
  $i$ must be $v_i$ for $2k\leq i\leq 3k-1$ as well. But then the
  monomials $m'$ and $m''$ are the same.
\end{proof}

\begin{lemma}
\label{lem:HomH3kcircuits-full}
  $\ghom{H_{3k}}$ can be computed by arithmetic circuits of size
  $O(n^{\omega(k, k-1, k)})$ for $k > 1$.
\end{lemma}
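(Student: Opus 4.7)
The plan is to adapt the Eisenbrand--Grandoni approach \cite{Eisenbrand2004} for $K_{3k-1}$, exploiting that $H_{3k}$ is the clique $K_{3k-1}$ on $\{1,\dots,3k-1\}$ augmented with the extra vertex $3k$ whose neighbourhood is exactly $\{1,\dots,2k-1\}$. I partition the $(3k-1)$-clique as $A := \{1,\dots,k\}$, $B := \{k+1,\dots,2k-1\}$ and $C := \{2k,\dots,3k-1\}$ with sizes $k$, $k-1$, $k$. The decisive structural observation is that vertex $3k$ is adjacent to every vertex of $A\cup B$ but to no vertex of $C$, so in $\ghom{H_{3k}}$ no monomial couples the image $v$ of $3k$ with the images of $C$.

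Parametrising a homomorphism by $(\vec a,\vec b,\vec c,v) \in [n]^k\times[n]^{k-1}\times[n]^k\times[n]$, I would write
\begin{equation*}
  \ghom{H_{3k},n} = \sum_{\vec a,\vec b,\vec c,v}  P_A(\vec a)\, P_B(\vec b)\, P_C(\vec c)\, E_{AB}(\vec a,\vec b)\, E_{BC}(\vec b,\vec c)\, E_{AC}(\vec a,\vec c)\cdot z_{3k,v}\, y_v\, E_{vA}(v,\vec a)\, E_{vB}(v,\vec b),
\end{equation*}
where each $P_*$ collects the $z,y$-variables of one part together with its internal $x$-variables, and each $E_{**}$ collects the $x$-variables on cross-edges. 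Summing $v$ out first yields $Q(\vec a,\vec b) := \sum_v z_{3k,v}\, y_v\, E_{vA}(v,\vec a)\, E_{vB}(v,\vec b)$, an $O(n)$-size sub-circuit for each of the $n^{2k-1}$ index pairs. Grouping the remaining factors as $F(\vec a,\vec b) := P_A(\vec a)\, P_B(\vec b)\, E_{AB}(\vec a,\vec b)\, Q(\vec a,\vec b)$, $G(\vec b,\vec c) := P_C(\vec c)\, E_{BC}(\vec b,\vec c)$ and $H(\vec a,\vec c) := E_{AC}(\vec a,\vec c)$, I obtain
\begin{equation*}
  \ghom{H_{3k},n} = \sum_{\vec a,\vec c} H(\vec a,\vec c)\cdot(FG)[\vec a,\vec c],
\end{equation*}
with $F$ an $n^k\times n^{k-1}$ matrix and $G$ an $n^{k-1}\times n^k$ matrix of polynomials.

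To compute $FG$, I would instantiate a bilinear algorithm realising exponent $\omega(k,k-1,k)$ for rectangular matrix multiplication with these polynomial entries; since bilinear algorithms go through unchanged over any commutative ring, this produces an arithmetic circuit of size $O(n^{\omega(k,k-1,k)})$ whose leaves are the entries of $F$ and $G$. The final Hadamard product with $H$ and the summation over $(\vec a,\vec c)$ contribute only $O(n^{2k})$ further gates. Because $\omega(k,k-1,k) \ge 2k$ (the output matrix alone has $n^{2k}$ entries), both the $O(n^{2k})$ cost of building the $Q$-sub-circuits and that of the final sum are absorbed into $O(n^{\omega(k,k-1,k)})$; the assumption $k > 1$ is used so that the middle dimension $k-1$ is positive. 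The main technical point I expect to have to spell out is that each entry of $F$, although a non-trivial $O(n)$-size sub-circuit, can be shared freely among its many uses in the bilinear algorithm without inflating the total circuit size.
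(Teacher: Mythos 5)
Your proposal is correct and takes essentially the same route as the paper: both split the $(3k-1)$-clique into blocks of sizes $k$, $k-1$, $k$ and reduce the dominant cost to an $n^k\times n^{k-1}$ by $n^{k-1}\times n^k$ matrix product, exploiting that the pendant vertex $3k$ is adjacent only to the first $2k-1$ vertices. The only cosmetic differences are that the paper computes the vertex-$3k$ contribution as a second rectangular product $DE$ (cost $O(n^{\omega(k,1,k-1)})$) and substitutes it into placeholder variables, and folds the $A$--$C$ cross edges into the first matrix rather than into a final Hadamard/trace step; your direct $O(n^{2k})$ summation for $Q$ and your final step are both absorbed by $\omega(k,k-1,k)\geq 2k$ exactly as you argue.
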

\begin{proof}
  Consider $H_{3k}$ labelled as before.  We define the sets
  $S_{1,k,2k,3k-1} = \{1,\dotsc,k,2k\dotsc,3k-1\}$, $S_{k+1, 3k-1} =
  \{k+1,\dotsc,3k-1\}$, $S_{k+1, 2k-1} = \{k+1,\dotsc,2k-1\}$, and
  $S_{1, 2k-1} = \{1,\dotsc,2k-1\}$. We also define the tuples $V_{1,
    k} = (v_1, \dotsc, v_k)$, $V_{2k, 3k-1} = (v_{2k}, \dotsc,
  v_{3k-1})$, and $V_{k+1, 2k-1} = (v_{k+1}, \dotsc, v_{2k-1})$ for
  any set $v_i$ of $3k-1$ distinct vertex labels. The algorithm also
  uses the matrices defined below. The dimensions of each matrix are
  specified as the superscript. All other entries of the matrix are
  $0$.

  \begin{align*}
    A^{n^k\times n^k}_{V_{1, k}, V_{2k, 3k-1}} &= \smashoperator{\prod\limits_{i\in S_{1,k,2k,3k-1}}} z_{i, v_i} y_{v_i} \prod\limits_{\substack{i, j\in S_{1,k,2k,3k-1}\\ i\neq j}} x_{\{v_i, v_j\}}
    ,\text{$v_i$ distinct for $1\leq i \leq 3k-1$}\\
    B^{n^k\times n^{k-1}}_{V_{2k, 3k-1}, V_{k+1, 2k-1}} &= \smashoperator{\prod\limits_{i\in S_{k+1,2k-1}}} z_{i, v_i} y_{v_i} \prod\limits_{\substack{i\in S_{k+1,3k-1}\\ j\in S_{k+1,2k-1}\\ i\neq j}} x_{\{v_i, v_j\}}
    ,\text{$v_i$ distinct for $k+1\leq i\leq 3k-1$}\\
    C^{n^{k-1}\times n^k}_{V_{k+1,2k-1}, V_{1,k}} &= x_{\{{(v_i, i)}_{i\in S_{1,2k-1}}\}} \smashoperator{\prod\limits_{\substack{i\in S_{k+1,2k-1}\\ j\in [k]\\ i\neq j}}} x_{\{v_i, v_j\}}
    ,\text{$v_i$s are distinct for $1\leq i\leq 2k-1$}\\
    D^{n^k\times n}_{V_{1,k}, v_{3k}} &= z_{3k, v_{3k}} y_{v_{3k}} \prod\limits_{i\in [k]} x_{\{v_i, v_{3k}\}}
    , \text{$v_i$ distinct for $i\in\{1,\dotsc,k,3k\}$}\\
    E^{n\times n^{k-1}}_{v_{3k}, V_{k+1,2k-1}} &= \smashoperator{\prod\limits_{i\in S_{k+1,2k-1}}} x_{\{v_i, v_{3k}\}}
    , \text{$v_i$ distinct for $i\in\{k+1,\dotsc,2k-1,3k\}$}
  \end{align*}

  Compute the matrix products $ABC$ and $DE$. Replace the $n^{2k-1}$
  variables $x_{\{{(v_i, i)}_{i\in S_3}\}}$ with
  ${(DE)}_{V_{1,k}, V_{k+1,2k-1}}$. The required
  polynomial is then just

  \begin{align*}
    \ghom{H_{3k}} &= \sum_{(v_1, \dotsc, v_k)} {(ABC)}_{(v_1, \dotsc,
      v_k),(v_1, \dotsc, v_k)}
  \end{align*}

  Consider a homomorphism of $H_{3k}$ defined as $\phi: i\mapsto
  u_i$. The monomial corresponding to this homomorphism is uniquely
  generated as follows.  Let $U_{*}$ be defined similarly to the
  tuples $V_{*}$. Set $v_i = u_i$ for $i\in [k]$ in the summation and
  consider the monomial generated by the product $A_{U_{1,k},
    U_{2k,3k-1}}B_{U_{2k,3k-1}, U_{k+1, 2k-1}}C_{U_{k+1, 2k-1}, U_{1,
      k}}$ after replacing the variable $x_{\{{(u_i, i)}_{i\in
      S_3}\}}$ by ${(DE)}_{U_{1, k}, U_{k+1,2k-1}}$ taking the
  monomial $D_{U_{1,k},u_{3k}}E_{u_{3k}, U_{k+1, 2k-1}}$ from that
  entry. It is easy to verify that this generates the required
  monomial. For uniqueness, observe that this is the only way to
  generate the required product of the homomorphism variables.
    
  Computing $ABC$ can be done using $O(n^{\omega(k, k-1, k)})$ size circuits.
  Computing $DE$ can be done using $O(n^{\omega(k, 1, k-1)})$ size circuits. The
  top level sum contributes $O(n^k)$ gates. This proves the lemma.
\end{proof}

We conclude this section by stating our main theorem.

\begin{theorem}\label{thm:H3k-full}
  The induced subgraph isomorphism problem for $H_{3k}$ has an $O(n^{\omega(k,
  k-1, k)})$ time algorithm when $k = 2^\ell, \ell \geq 1$.
\end{theorem}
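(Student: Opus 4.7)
The plan is to chain together the three lemmas already established in this section together with Proposition~\ref{prop:red-alg-full}. The logical sequence is clear: detection of an induced $H_{3k}$ in a host graph $G$ reduces to multilinear term detection in $\ind{H_{3k},n}(G)$; this reduces mod $2$ to multilinear term detection in $\sub{H_{3k},n}(G)$; which in turn reduces (via the substitution family of Lemma~\ref{lem:NH3kHomH3k-full}) to multilinear term detection in a polynomial computable by circuits of size $O(n^{\omega(k,k-1,k)})$.

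First I would observe that since each monomial of $\ind{H_{3k},n}(G)$ has coefficient $0$ or $1$, the host graph $G$ contains an induced copy of $H_{3k}$ if and only if $\ind{H_{3k},n}(G)$ has a multilinear term with non-zero coefficient modulo~$2$. Next, by Lemma~\ref{lem:IH3kNH3kmod2-full}, we have $\ind{H_{3k}} = \sub{H_{3k}} \pmod 2$ whenever $k = 2^\ell$ with $\ell \geq 1$, so this task is equivalent to testing whether $\sub{H_{3k},n}(G)$ has a multilinear term with non-zero coefficient modulo~$2$.

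Then I invoke Lemma~\ref{lem:NH3kHomH3k-full}, which gives $\sub{H_{3k}} \redto \ghom{H_{3k}}$ via a constant-time computable substitution family, together with Lemma~\ref{lem:HomH3kcircuits-full}, which provides uniform arithmetic circuits of size $O(n^{\omega(k,k-1,k)})$ computing $\ghom{H_{3k},n}$. Applying Proposition~\ref{prop:red-alg-full} (specifically, its mod-$p$ variant with $p = 2$) to the pair $(\sub{H_{3k}}, \ghom{H_{3k}})$ yields a randomized one-sided error algorithm of running time $O(n^{\omega(k,k-1,k)})$ that decides whether $\sub{H_{3k},n}(G)$ has a multilinear term with non-zero coefficient modulo~$2$, and this, by the reductions above, decides induced $H_{3k}$-isomorphism.

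The proof is essentially a book-keeping step: each of the three substantive ingredients has been isolated in a preceding lemma, and the only work is to verify that they compose cleanly under Proposition~\ref{prop:red-alg-full}. The only mild subtlety is making sure we are allowed to work modulo~$2$ throughout, which is justified because Lemma~\ref{lem:IH3kNH3kmod2-full} is an equality of polynomials modulo~$2$ (not just of their multilinear parts or evaluations) and because Proposition~\ref{prop:red-alg-full} tolerates arbitrary prime moduli. There is no remaining obstacle; the theorem follows immediately.
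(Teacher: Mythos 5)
Your proposal is correct and matches the paper's intended argument: the paper states Theorem~\ref{thm:H3k-full} as the immediate consequence of Lemmas~\ref{lem:IH3kNH3kmod2-full}, \ref{lem:NH3kHomH3k-full}, and \ref{lem:HomH3kcircuits-full} combined with Proposition~\ref{prop:red-alg-full}, giving no separate proof, and your chaining of these ingredients (working modulo $2$ throughout, justified by the $0/1$ coefficients of $\ind{H_{3k},n}(G)$) is exactly that composition.
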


\section{Algorithms for induced paths and cycles}

In this section, we will prove that the time complexity of the induced
subgraph isomorphism problems for paths and cycles are upper bounded
by the circuit complexities of the homomorphism polynomials for
$\overline{P_k}$ and $K_k - P_{k-1}$ respectively. Using this we
derive efficient algorithms for induced subgraph isomorphism problem
for $P_k$ for $k\in \{5,6,7,8,9\}$ and $C_k$ for $k\in \{5,7,9\}$. We
also obtain efficient combinatorial algorithms for the induced
subgraph isomorphism problem for $P_k$ for all $k$ and $C_k$ when $k$
is odd.

The proof has two main steps: First, we show that the induced subgraph
isomorphism polynomials for these patterns are reducible to the
aforementioned homomorphism polynomials
(Lemmas~\ref{lem:path-ind-sub-full}, \ref{lem:path-hom-full},
\ref{lem:cycle-ind-sub-full}, \ref{lem:cycle-hom-full}). Then, we
prove that these homomorphism polynomials can be computed efficiently
(Theorems~\ref{thm:path-algos-full} and \ref{thm:cycle-algos-full}).

\begin{lemma}
  $\ind{\overline{P_k}} = \sub{\overline{P_k}} \pmod 2$ for $k\geq 4$.
  \label{lem:path-ind-sub-full}
\end{lemma}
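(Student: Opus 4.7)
My plan is to follow the template of Lemma~\ref{lem:IH3kNH3kmod2-full}: apply (\ref{eq:indsub-full}) to $H = \overline{P_k}$ and reduce modulo $2$ to get
\[
  \ind{\overline{P_k}} \equiv \sub{\overline{P_k}} + \sum_{H'\psupg \overline{P_k}} \nsub{\overline{P_k}}{H'}\,\sub{H'} \pmod 2,
\]
so it suffices to show $\nsub{\overline{P_k}}{H'}\equiv 0\pmod 2$ for every proper supergraph $H'\supsetneq \overline{P_k}$ on the vertex set of $\overline{P_k}$.

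Each such $H'$ has the form $\overline{P_k}+S$ for a nonempty $S\subseteq E(P_k)$, so its non-edges inside $K_k$ form the linear forest $F := E(P_k)\setminus S$. Since $F$ is obtained from the tree $P_k$ by deleting $|S|$ edges, it has exactly $c := |S|+1 \geq 2$ components, each a (possibly trivial) subpath of $P_k$. Taking complements in $K_k$ then identifies copies of $\overline{P_k}$ in $H'$ with Hamiltonian paths of $K_k$ whose edge set contains $F$: if $T\subseteq E(H')$ gives a copy of $\overline{P_k}$, then $E(K_k)\setminus T$ is isomorphic to $P_k$ and is forced to contain every non-edge of $H'$, namely $F$. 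So $\nsub{\overline{P_k}}{H'}$ equals the number of Hamiltonian paths of $K_k$ containing $F$.

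I would count these Hamiltonian paths by ordering the $c$ components of $F$ linearly, orienting each of the $c-s$ non-singleton components (where $s$ denotes the number of singleton components of $F$), and dividing by $2$ for the two traversal directions of the final path, giving $c!\,2^{c-s-1}$, with the degenerate value $k!/2$ when $F=\emptyset$ (i.e., $H'=K_k$ and $c=s=k$). When $F\neq\emptyset$ we have $c\geq 2$ and $c-s\geq 1$, so $c!$ divides the count and $c!$ is even; when $F=\emptyset$, $k!/2$ is even because for $k\geq 4$ the factorial $k!$ is divisible by $4$. Either way $\nsub{\overline{P_k}}{H'}$ is even, and the lemma follows. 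The only mildly delicate point is the orientation/overcounting bookkeeping in the counting formula and the clean separate treatment of the $H'=K_k$ edge case; once both are in place the parity conclusion is immediate.
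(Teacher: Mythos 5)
Your proposal is correct and follows essentially the same route as the paper's proof: complementing to identify copies of $\overline{P_k}$ in a proper supergraph $H'$ with extensions of the linear forest $F$ (a proper subgraph of $P_k$) to a labelled $P_k$, counting these as $c!\,2^{c-s-1}$ via ordering and orienting components, and treating the all-singleton case $H'=K_k$ (count $k!/2$) separately. The bookkeeping matches the paper's $\ell!\,2^{\ell-s}/2$ formula exactly, so there is nothing to add.
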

\begin{proof}
  We will prove that for any proper super-graph $H$ of
  $\overline{P_k}$, the number $\nsub{\overline{P_k}}{H}$ is even.
  Observe that this number is the same as the number of ways to extend
  a proper labelled subgraph of $P_k$ to some labelled $P_k$. Let $H$
  be an arbitrary proper subgraph of $P_k$. Let $2\leq \ell \leq k$ be
  the number of connected components in $H$ out of which
  $0\leq s \leq \ell$ of them consists only of a single vertex. Then
  the number of ways to extend $H$ to a $P_k$ is
  $\ell!2^{\ell-s}/2$. We can extend $H$ to a $P_k$ by ordering the
  connected components from left to right and then connecting the
  endpoints from left to right. There are $\ell!$ ways to order $\ell$
  components and $2$ ways to place all components with more than one
  vertex. Out of these, a configuration and its reverse will lead to
  the same labelled $P_k$. Since $\ell \geq 2$, this number is even if
  $\ell > s$. Otherwise, this number is $k!/2$ because $\ell = s$
  implies that there are $k$ components. This is even when $k\geq
  4$. We conclude that
  $\ind{\overline{P_k}} = \sub{\overline{P_k}} \pmod 2$.
\end{proof}

\begin{lemma}
  $\sub{\overline{P_k}} \redto \ghom{\overline{P_k}}$
  \label{lem:path-hom-full}
\end{lemma}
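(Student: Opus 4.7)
The plan is to mimic the substitution-based reduction of Lemma~\ref{lem:NH3kHomH3k-full}. I fix the canonical labelling of $\overline{P_k}$ on vertex set $[k]$ so that the non-edges are exactly $\{i,i+1\}$ for $1 \le i \le k-1$; under this labelling the only non-trivial automorphism is the reversal $\rho : a \mapsto k+1-a$, and for $k \ge 3$ the pair $\{1,k\}$ is an edge of $\overline{P_k}$ whose two endpoints are swapped by $\rho$. The cases $k \le 2$ are trivial.

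I take $\ghom{\overline{P_k}}$ over the host vertex set $[n] \times [k]$ and apply a substitution $\sigma$ analogous to the one in Lemma~\ref{lem:NH3kHomH3k-full}:
\begin{align*}
\sigma(z_{a,(v,a)}) &= z_a,\\
\sigma(z_{a,(v,b)}) &= z_a^2 \text{ for } a \ne b,\\
\sigma(y_{(v,a)}) &= y_v,\\
\sigma(x_{\{(u,1),(v,k)\}}) &= 0 \text{ if } u \ge v, \text{ and } x_{\{u,v\}} \text{ if } u < v,\\
\sigma(x_{\{(u,a),(v,b)\}}) &= x_{\{u,v\}} \text{ for all other pairs } \{a,b\}.
\end{align*}
Since $m = kn = O(n)$, the size constraint in Definition~\ref{def:reduc-full} is satisfied.

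The verification that $\ml{\sigma_m(\ghom{\overline{P_k}, m})} = z_{[k]} \ml{\sub{\overline{P_k}, n}}$ proceeds in three standard steps. First, the $z$-rules force any surviving multilinear monomial to come from a homomorphism $\phi$ with $\phi(a) = (v_a, a)$ for every pattern vertex $a$; otherwise some $z_a^2$ destroys multilinearity. Second, multilinearity in the $y$ variables then forces the first coordinates $v_1, \dots, v_k$ to be pairwise distinct, so $\phi$ is effectively an injection $a \mapsto v_a$ of $[k]$ into $[n]$. Third, the $\{1,k\}$ rule keeps exactly one of the two homomorphisms $\phi$ and $\phi \circ \rho$ attached to each labelled copy of $\overline{P_k}$ in $K_n$, namely the one with $v_1 < v_k$. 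Conversely, for every labelled copy of $\overline{P_k}$ on a set $\{v_1,\dots,v_k\} \subseteq [n]$ there is exactly one contributing $\phi$, and the resulting monomial is $z_{[k]} \cdot y_{v_1}\cdots y_{v_k} \cdot \prod_{\{a,b\} \in E(\overline{P_k})} x_{\{v_a,v_b\}}$, matching $z_{[k]}$ times the corresponding term of $\sub{\overline{P_k}, n}$.

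The main bookkeeping obstacle is confirming that no spurious multilinear term sneaks through: one must check that degenerate edge substitutions (for instance $u = v$ in the $\{1,k\}$ case) and non-colour-matching homomorphisms are all annihilated either by the $z_a^2$ factors or by the $y$-multilinearity requirement. Because these are exactly the same mechanisms that drive the proof of Lemma~\ref{lem:NH3kHomH3k-full}, I expect the formal argument to be a direct adaptation of that proof, with the only essential new ingredient being the specific symmetry-breaking choice on the edge $\{1,k\}$.
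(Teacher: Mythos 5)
Your proposal is correct and follows essentially the same substitution as the paper's proof: identical $z$- and $y$-rules, and the same symmetry-breaking on the edge coloured $\{1,k\}$ to kill one of the two automorphic preimages of each labelled $\overline{P_k}$. The only deviation is that you do not explicitly zero out edge variables whose colour pair is a non-edge of $\overline{P_k}$ (the paper does), but this is harmless since any such monomial already acquires a $z_a^2$ factor from a colour-mismatched vertex and thus cannot contribute to the multilinear part.
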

\begin{proof}
  Let $f = \sub{\overline{P_k}}$ and $g = \ghom{\overline{P_k}}$. We
  fix the labelling of $\overline{P_k}$ where the vertices of the
  complementary $P_k$ are labelled $1,2,\dotsc,k$ with $1$ and $k$ as
  the endpoints and for every other vertex $i$, the neighbours are
  $i-1$ and $i+1$. Start with $g$ over the vertex set $[n]\times [k]$
  and use the following substitution.

  \setcounter{equation}{0}
  \begin{align}
    \sigma(z_{a,(v,a)}) &= z_a\\
    \sigma(z_{a,(v,b)}) &= z_a^2 \text{, if $a\neq b$}\\
    \sigma(y_{(v, a)}) &= y_v\\
    \sigma(x_{\{(u, p), (v, q)\}}) &= 0 \text{, if $\{p,q\}\not\in E(\overline{P_k})$ or if $p=1$ and $q=k$ and $u>v$}\\
    \sigma(x_{\{(u, p), (v, q)\}}) &= x_{\{u, v\}} \text{, otherwise}
  \end{align}

  The resulting polynomial $g'$ satisfies
  $\ml{g'} = z_1\ldots z_k\ml{f_n}$ as required. The reduction works
  because there is exactly one non-trivial automorphism for
  $\overline{P_k}$ and that automorphism maps $1$ to $k$. The monomial
  corresponding to one of these automorphisms become $0$ because of
  $u>v$ where $u$ has colour 1 and $v$ has colour $k$.
\end{proof}

\begin{theorem}
  If $\ghom{\overline{P_k}}$ can be computed by circuits of size
  $n^{f(k)}$, then there is an $O(n^{f(k)})$ time algorithm for the
  induced subgraph isomorphism problem for $P_k$ on $n$-vertex graphs.
  \label{thm:path-main-theorem-full}
\end{theorem}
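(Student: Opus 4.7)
The plan is to chain together the three preceding ingredients. First, I would observe that detecting an induced $P_k$ in a host graph $G$ is equivalent to detecting an induced $\overline{P_k}$ in the complement graph $\overline{G}$, and $\overline{G}$ can be constructed in $O(n^2)$ time, which is dominated by $O(n^{f(k)})$ for any nontrivial $f$. So the task reduces to deciding whether $\ind{\overline{P_k},n}(\overline{G})$ has a multilinear term.

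Second, the polynomial $\ind{\overline{P_k},n}(\overline{G})$ has coefficients in $\{0,1\}$ because on any $k$-vertex subset $X$ of $V(\overline{G})$ there is at most one induced copy of $\overline{P_k}$. Hence $\ind{\overline{P_k},n}(\overline{G})$ has a multilinear term if and only if it is not identically zero modulo $2$. By Lemma~\ref{lem:path-ind-sub-full} (using $k \geq 4$), modulo $2$ this polynomial agrees with $\sub{\overline{P_k},n}(\overline{G})$, so it suffices to test whether $\sub{\overline{P_k},n}(\overline{G})$ has a multilinear term with nonzero coefficient modulo $2$.

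Third, Lemma~\ref{lem:path-hom-full} gives $\sub{\overline{P_k}} \redto \ghom{\overline{P_k}}$ via a constant-time computable substitution family. Combining this reduction with the hypothesis that $\ghom{\overline{P_k}}$ admits uniform arithmetic circuits of size $O(n^{f(k)})$, Proposition~\ref{prop:red-alg-full} (instantiated with $p = 2$) yields a randomized one-sided error algorithm running in $O(n^{f(k)})$ time that decides whether $\sub{\overline{P_k},n}(\overline{G})$ has a multilinear term of nonzero coefficient modulo $2$. Internally this invokes the multilinear detection routine of Theorem~\ref{thm:test:1-full} on the substituted circuit of constant degree.

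The main subtlety to address carefully is uniformity: Proposition~\ref{prop:red-alg-full} requires that the $O(n^{f(k)})$ circuit for $\ghom{\overline{P_k}}$ be constructible within the same time bound, so the theorem statement should be read with this (mild) uniformity assumption. Everything else is routine: the substitution family from Lemma~\ref{lem:path-hom-full} is constant-time computable, degrees remain $O(1)$ after substitution, and the one-sided error is inherited from the underlying multilinear detection procedure.
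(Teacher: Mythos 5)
Your proof is correct and follows essentially the same route the paper intends: pass to the complement (the paper does this via the substitution $x_e\mapsto 1-x_e$ realizing $\ind{P_k}\redto\ind{\overline{P_k}}$, while you complement the host graph, which is the algorithmic mirror of the same step), then apply Lemma~\ref{lem:path-ind-sub-full} to work with $\sub{\overline{P_k}}$ modulo $2$, Lemma~\ref{lem:path-hom-full} to reduce to $\ghom{\overline{P_k}}$, and Proposition~\ref{prop:red-alg-full} with $p=2$. Your remark about the uniformity of the circuit family is a fair reading of the hypothesis and matches what the paper implicitly assumes.
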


\begin{theorem}
  \label{thm:AlgosForPaths-full}
  The following algorithms exist

  \begin{enumerate}
  \item An $O(n^\omega)$-time algorithm for induced subgraph
    isomorphism problem for $P_5$ in $n$-vertex graphs.
  \item An $O(n^{\omega(2,1,1)})$-time algorithm for induced subgraph
    isomorphism problem for $P_6$ in $n$-vertex graphs.
  \item An $O(n^{k-2})$-time combinatorial algorithm for induced subgraph
    isomorphism problem for $P_k$ in $n$-vertex graphs.
  \item An $O(n^{k-2})$-time deterministic combinatorial algorithm for
    computing the parity of the number of induced subgraphs isomorphic
    to $P_k$ in $n$-vertex graphs.
  \end{enumerate}
  \label{thm:path-algos-full}
\end{theorem}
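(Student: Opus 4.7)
By Theorem~\ref{thm:path-main-theorem-full}, parts~1--3 reduce to constructing arithmetic circuits of the claimed sizes for $\ghom{\overline{P_k}}$; part~4 additionally invokes Theorem~\ref{thm:multiparity-full}.

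For part~1, I decompose $\overline{P_5}$ as the triangle on $\{1,3,5\}$ together with the path $1{-}4{-}2{-}5$. Fixing $(\phi(1),\phi(5))=(a,c)$, the contribution of pattern vertex~$3$ is an $n\times n$ matrix $M$ with $M(a,c)=\sum_b z_{3,b}y_b x_{\{a,b\}}x_{\{b,c\}}$, realisable as a triple matrix product $XDX$ of size $O(n^\omega)$; the chain $\phi(1){-}\phi(4){-}\phi(2){-}\phi(5)$ gives a matrix $N$ realisable as $XD_4XD_2X$, again $O(n^\omega)$; the outer sum $\sum_{a,c}z_{1,a}y_a z_{5,c}y_c x_{\{a,c\}}M(a,c)N(a,c)$ adds only $O(n^2)$ gates. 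For part~2, I exploit that pattern vertices~$3$ and~$4$ are non-adjacent in $\overline{P_6}$ (because $\{3,4\}\in E(P_6)$), so the inner sum over $(u_3,u_4)$ factorises into two independent common-neighbour tensors
\[
T(u_1,u_5,u_6)=\sum_{u_3}z_{3,u_3}y_{u_3}x_{\{u_1,u_3\}}x_{\{u_3,u_5\}}x_{\{u_3,u_6\}}
\]
and an analogous $T'(u_1,u_2,u_6)$. Each tensor, viewed as an $n\times n^2$ matrix, is the product of the edge-variable matrix $X$ with an $n\times n^2$ ``second-moment'' matrix whose row $w$ equals $X_{w,\cdot}\otimes X_{w,\cdot}$, at a cost of $O(n^{\omega(1,1,2)})=O(n^{\omega(2,1,1)})$ gates. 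The outer summation over $\{1,2,5,6\}$ carries the four cross-edges $\{15,16,25,26\}$; eliminating $u_2$ via an $n^2\times n$ times $n\times n$ matrix product is again $O(n^{\omega(2,1,1)})$, and the final aggregation over $(u_1,u_5,u_6)$ costs $O(n^3)$.

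For part~3, I claim $\tw{\overline{P_k}}\le k-3$ via the elimination ordering $2,3,\ldots,k-1,1,k$: the first removed vertex~$2$ has $\overline{P_k}$-neighbourhood $\{4,5,\ldots,k\}$ of size $k-3$, and a routine induction shows that every subsequently removed vertex's filled-in neighbourhood still has size at most $k-3$, so every bag in the resulting tree decomposition has size at most $k-2$. Theorem~\ref{thm:diaz-full} then yields an $O(n^{k-2})$-size combinatorial circuit for $\ghom{\overline{P_k}}$, and Theorem~\ref{thm:path-main-theorem-full} closes part~3. For part~4, applying the substitution $\sigma$ from Lemma~\ref{lem:path-hom-full} to the circuit of part~3 produces a circuit $g'$ of size $O(n^{k-2})$ with $\ml{g'}=z_1\cdots z_k\,\ml{\sub{\overline{P_k},n}}$. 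Evaluating the edge variables at the complement host $\bar G$ and then setting $z_i=1$ preserves circuit size and leaves a polynomial in $y_1,\ldots,y_n$ of degree $k$ whose multilinear coefficient sum equals the number of $\overline{P_k}$-subgraphs of $\bar G$. By Lemma~\ref{lem:path-ind-sub-full} this equals modulo~$2$ the number of induced $\overline{P_k}$ copies in $\bar G$, which is the number of induced $P_k$ copies in~$G$. Theorem~\ref{thm:multiparity-full} then computes this parity in time $O(n^{k-2}+n^{\lceil k/2\rceil})=O(n^{k-2})$ for $k\ge 4$.

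The main technical obstacle is part~2: the $(2,1,1)$-shaped rectangular multiplication must arise naturally from the pattern, and this only works because $\{3,4\}$ is the unique internally non-adjacent pair in $\overline{P_6}$ and the residual four-vertex structure on $\{1,2,5,6\}$ admits the required $n^2\times n$ grouping. Without this factorisation a direct tree-decomposition-based construction would cost $O(n^{\tw{\overline{P_6}}+1})=O(n^4)$, strictly worse than $O(n^{\omega(2,1,1)})$. Parts~3 and~4 are mostly bookkeeping once the circuits and reductions from the previous section are in place.
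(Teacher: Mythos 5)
Parts 1--3 are sound. Part~1 is the paper's construction up to a relabelling of $\overline{P_5}$. Part~2 takes a genuinely different route: the paper groups the images of pattern vertices $1$ and $6$ into an $n^2$-sized super-index and routes everything through a product $ABC$ with placeholder variables later replaced by $DE$ and $FG$, whereas you factor the sum over $(u_3,u_4)$ using the non-adjacency of $3$ and $4$ into two common-neighbour tensors and then eliminate $u_2$ by one more rectangular product; both give $O(n^{\omega(2,1,1)})$, and your version arguably makes clearer \emph{why} the rectangular shape appears. (Minor quibble: $\{3,4\}$ is not the \emph{unique} non-adjacent pair in $\overline{P_6}$ --- every edge of $P_6$ gives one --- but that remark is not load-bearing.) Part~3 matches the paper, with the elimination ordering made explicit.

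Part~4, however, has a genuine error. You apply the substitution of Lemma~\ref{lem:path-hom-full} unchanged and then set $z_i=1$ at the end. But that substitution sends $z_{a,(v,b)}$ to $z_a^2$ for $a\neq b$, so monomials coming from the \emph{reversal} automorphism of $\overline{P_k}$ (and any other non-identity endomorphic colouring) are only marked non-multilinear through the $z$'s; once you set $z_i=1$ they become multilinear in the $y$'s and re-enter the coefficient sum. Concretely, each $\overline{P_k}$-subgraph on a vertex set $S$ with $\psi(1)<\psi(k)$ survives via the identity colouring as $z_{[k]}y_S$, while its reversed embedding survives via the colouring $a\mapsto k+1-a$ as $\bigl(\prod_a z_a^{2}\bigr)y_S$ (the order condition in Rule~4 kills the other two combinations). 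After $z\equiv 1$ the coefficient of $y_S$ is $2$, so the parity computed by Theorem~\ref{thm:multiparity-full} is identically $0$ and carries no information. The fix is exactly what the paper does: replace Rules (1) and (2) by $\sigma(z_{a,(v,a)})=1$ and $\sigma(z_{a,(v,b)})=0$ for $a\neq b$, so that wrongly coloured monomials are annihilated rather than merely made non-multilinear, leaving a degree-$k$ polynomial whose multilinear part is exactly $\sub{\overline{P_k}}$ before Theorem~\ref{thm:multiparity-full} is invoked.
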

\begin{proof}
  \begin{enumerate}
  \item We describe how to compute $\ghom{\overline{P_5}}$ using
    arithmetic circuits of size $O(n^\omega)$. We start by defining
    the following matrices.
    
    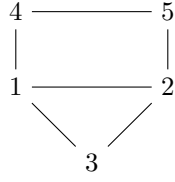
\begin{figure}
      \begin{center}
      \begin{tikzpicture}
        \node (1) at (0, 0) {1};
        \node (2) at (2, 0) {2};
        \node (5) at (2, 1) {5};
        \node (3) at (1, -1) {3};
        \node (4) at (0, 1) {4};
        
        \draw (1) -- (2);
        \draw (1) -- (4);
        \draw (1) -- (3);

        \draw (2) -- (3);
        \draw (2) -- (5);
        \draw (4) -- (5);
      \end{tikzpicture}
      \end{center}
      \caption{A labelled $\overline{P_5}$}
      \label{fig:p5bar-full}
    \end{figure}

    \begin{align*}
      A^{n\times n}_{i, j} &= x_{\{i, j\}}, i\neq j\\
      B^{n\times n}_{i, i} &= y_i z_{3, i}\\
      C^{n\times n}_{i, i} &= y_i z_{4, i}\\
      D^{n\times n}_{i, i} &= y_i z_{5, i}
    \end{align*}

    Consider the labelled $\overline{P_5}$ in Figure~\ref{fig:p5bar-full}.
    Then we can write
    \[
      \ghom{\overline{P_5}} = \sum_{i, j\in[n], i\neq j} z_{1, i} z_{2,
      j} x_{\{i, j\}} y_i y_j {(ABA)}_{i, j} {(ACADA)}_{i,
      j}
  \]
  Clearly, this can be implemented using $O(n^\omega)$ size
  circuits. We will now prove that this circuit correctly computes the
  polynomial $\ghom{\overline{P_5}}$. Consider a homomorphism
  $\phi: j\mapsto i_j$. Consider the monomial generated by
  $i = i_1, j = i_2$ in the outer sum, the monomial
  $A_{i_1, i_3}B_{i_3, i_3}A_{i_3, i_2}$ in the product
  ${(ABA)}_{i_1, i_2}$, and the monomial
  $A_{i_1, i_4}C_{i_4, i_4}$ $A_{i_4, i_5}D_{i_5,i_5}A_{i_5,i_2}$ in the
  product ${(ACADA)}_{i_1, i_2}$. This monomial corresponds to the
  homomorphism $\phi$ and one can observe that this is the only way to
  generate this monomial. On the other hand, any monomial in the
  computed polynomial is generated as described above and therefore
  corresponds to a homomorphism.
    
  \item We show how to compute $\ghom{\overline{P_6}}\,$ using
    arithmetic circuits of size $O(n^{\omega(2, 1, 1)})$. We define
    the following matrices.
    \begin{align*}
      A^{n\times n^2}_{i, (j, k)} &=
                                    z_{2, i}z_{1, j}z_{6, k} y_i y_j y_k
                                    x_{\{(2, i), ((1, j), (6, k))\}} x_{\{j, k\}} x_{\{k, i\}},
                                    j\neq k, i\neq k\\
      B^{n^2\times n}_{(j, k), \ell} &=
                                       z_{5,\ell} y_\ell x_{\{((1, j), (6, k)), (5, \ell)\}}
                                       x_{\{j, \ell\}}, j\neq k, j\neq \ell\\
      C^{n\times n}_{\ell, i} &= x_{\{\ell, i\}}, \ell\neq i\\
      D^{n^2\times n}_{(j, k), p} &= y_p z_{3, p} x_{\{((1, j), (6, k)), (3, p)\}},
                                    j\neq k, j\neq p, k\neq p\\
      E^{n\times n}_{p,\ell} &= x_{\{p, \ell\}}, p\neq \ell\\
      F^{n^2\times n}_{(j, k), q} &= y_q z_{4, q} x_{\{((1, j), (6, k)), (4, q)\}},
                                    j\neq k, j\neq q, k\neq q\\
      G^{n\times n}_{q,i} &= x_{\{q, i\}}, q\neq i
    \end{align*}

    Compute the matrix products $ABC$, $DE$, and $FG$.  The output of
    the circuit is $\sum_i {(ABC)}_{i, i}$ after substituting for the
    variables as follows. Replace each $x_{\{((1, j), (6,
      k)), (5, \ell)\}}$ with $DE_{(j, k), \ell}$ and each
     $x_{\{((1, j), (6, k)), (2, i)\}}$ with $FG_{(j, k),
      i}$. Replace each $x_{\{((1, j), (6, k)), (3, p)\}}$
    with $x_{\{j, p\}} x_{\{k, p\}}$ and each
    $x_{\{((1, j), (6, k)), (4, q)\}}$ with $x_{\{j, q\}} x_{\{k,
      q\}}$.

    \begin{figure}
      \begin{center}
      \begin{tikzpicture}
        \node (1) at (0, 0) {1};
        \node (6) at (0, 2) {6};
        \node (2) at (1, -0.5) {2};
        \node (5) at (1, 2.5) {5};
        \node (3) at (2, 0.5) {3};
        \node (4) at (2, 1.5) {4};
        
        \draw (1) -- (6);
        \draw (1) -- (5);
        \draw (1) -- (4);
        \draw (1) -- (3);

        \draw (2) -- (4);
        \draw (2) -- (5);
        \draw (2) -- (6);
        
        \draw (3) -- (5);
        \draw (3) -- (6);
        
        \draw (4) -- (6);
      \end{tikzpicture}
      \end{center}
      \caption{A labelled $\overline{P_6}$}
      \label{fig:p6bar-full}
    \end{figure}
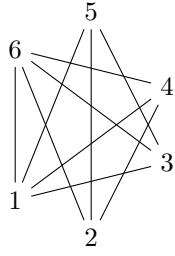

    Consider the labelling of $\overline{P_6}$ in
    Figure~\ref{fig:p6bar-full}. After substituting for all variables as
    mentioned above, the monomials of ${(ABC)}_{i, i}$ correspond to
    homomorphisms from this labelled $\overline{P_6}$ to $K_n$ that
    maps vertex $2$ to $i$. Therefore, the circuit correctly computes
    $\ghom{\overline{P_6}}$.

  \item We observe that $\tw{\overline{P_k}} = k-3$ and therefore
    using Theorem~\ref{thm:diaz-full}, we can compute
    $\ghom{\overline{P_k}}$ using $O(n^{k-2})$ size circuits.
    
  \item Consider the substitution in the proof of
    Lemma~\ref{lem:path-hom-full} and replace rules (1) and (2) by the
    following rules.

    \begin{align*}
      \sigma(z_{a, (v, a)}) &= 1 \tag{1'}\\
      \sigma(z_{a, (v, b)}) &= 0 \tag{2'}
    \end{align*}

    The multilinear part of the resulting polynomial $f$ is the same
    as $\sub{\overline{P_k}}$ and hence has degree-$k$. Therefore, we
    only have to compute the parity of the sum of coefficients of the
    multilinear terms of $f(G)$. By Theorem~\ref{thm:multiparity-full},
    this can be done in $O(n^{k-2})$ time.
  \end{enumerate}
\end{proof}

We remark that by computing homomorphism polynomials for
$\overline{P_{k}}$ for $k = 7, 8, 9$ using small-size circuits, we can
obtain the following algorithms for the induced subgraph isomorphism
problem for paths: An $O(n^{2\omega})$ time algorithm for $P_7$, an
$O(n^{\omega(3, 2, 2)})$ time algorithm for $P_8$, and an
$O(n^{\omega(3,3,2)})$ time algorithm for $P_9$. All these algorithms
are faster than the corresponding algorithms for $k$-cliques.

\begin{lemma}
  $\ind{\overline{C_k}} = \sub{\overline{C_k}} + \sub{\overline{P_k}}
  + \sub{K_k - P_{k-1}} \pmod 2$ for $k\geq 5$.
  \label{lem:cycle-ind-sub-full}
\end{lemma}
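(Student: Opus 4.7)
The plan is to mirror Lemma~\ref{lem:path-ind-sub-full}: expand $\ind{\overline{C_k}}$ via Equation~\ref{eq:indsub-full} and reduce modulo $2$, so the signs disappear and the coefficient of $\sub{H'}$ becomes $\nsub{\overline{C_k}}{H'} \pmod 2$. The goal is then to show that the only $H'$ (taken up to isomorphism) contributing an odd coefficient are $\overline{C_k}$, $\overline{P_k}$, and $K_k - P_{k-1}$.

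First I would pass to the complement. The map $H' \mapsto T := \overline{H'}$ is a bijection between labelled supergraphs of $\overline{C_k}$ on $k$ vertices and subgraphs of $C_k$. Since a vertex bijection $\phi$ is an injective homomorphism from $\overline{C_k}$ to $H'$ exactly when $\phi^{-1}$ is one from $T$ to $C_k$, counting injective homomorphisms and dividing by $\naut{\overline{C_k}} = 2k$ gives
\[
\nsub{\overline{C_k}}{H'} \;=\; \frac{\nsub{T}{C_k} \cdot \naut{T}}{2k}.
\]
Thus the parity of the coefficient of $\sub{H'}$ depends only on the isomorphism type of $T$.

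Either $T = C_k$ (then $H' = \overline{C_k}$, and $\nsub{T}{C_k} = 1$, $\naut{T} = 2k$, giving coefficient $1$ and contributing $\sub{\overline{C_k}}$) or, after deleting $j \geq 1$ edges, $T$ is a disjoint union of paths. In the latter case, letting $n_a$ be the number of components of size $a$ and $L := n_1$, a double count of pairs (subset of $j$ deleted edges, anchor edge in the subset) against pairs (ordered gap tuple $(d_1,\dots,d_j)$ with $\sum d_i = k$, starting edge of $C_k$) yields $\nsub{T}{C_k} = k(j-1)!/\prod_a n_a!$; combined with $\naut{T} = \prod_a n_a! \cdot 2^{j-L}$, the coefficient of $\sub{H'}$ reduces to $(j-1)! \cdot 2^{j-L-1}$.

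This expression is odd iff $(j-1)!$ is odd (so $j \in \{1,2\}$) and $j - L - 1 = 0$. The case $(j,L) = (1,0)$ forces $T = P_k$, hence $H' = \overline{P_k}$; the case $(j,L) = (2,1)$ forces $T = P_1 \sqcup P_{k-1}$, whose complement---one universal vertex together with $K_{k-1} - P_{k-1}$ on the remaining $k-1$ vertices---is precisely $K_k - P_{k-1}$. The sole boundary case, $T = I_k$ (all components are isolated, so $L = j = k$), must be checked directly: the coefficient is $k!/(2k) = (k-1)!/2$, which is even for $k \geq 5$. Summing the three odd contributions gives the claimed identity. The main technical step is the cyclic double-counting for $\nsub{T}{C_k}$, where one must carefully handle rotational symmetries of the gap sequence; once it is in hand, the remainder is a parity check on $(j-1)! \cdot 2^{j-L-1}$.
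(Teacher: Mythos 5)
Your proof is correct and follows essentially the same route as the paper's: both pass to the complementary spanning disjoint union of paths $T \subseteq C_k$ with $\ell$ components of which $s$ are singletons, both arrive at the same coefficient $(\ell-1)!\,2^{\ell-s-1}$ (your $(j-1)!\,2^{j-L-1}$), and both do the same parity case analysis singling out $P_k$ and $P_{k-1}+v$ plus the separate check for $I_k$. The only (cosmetic) difference is that the paper obtains this count directly as a cyclic-arrangement count of extensions of $T$ to a Hamilton cycle, whereas you derive it as $\nsub{T}{C_k}\cdot\naut{T}/(2k)$ via a double count.
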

\begin{proof}
  We claim that the only proper supergraphs of $\overline{C_k}$
  containing it an odd number of times are $\overline{P_k}$ and
  $K_k - P_{k-1}$. There is exactly one way to extend a $P_k$ or a
  $P_{k-1} + v$ to a $C_k$. Let $H$ be a proper subgraph of $C_k$
  other than these two graphs. Assume that $H$ has
  $2 \leq \ell \leq k$ connected components out of which
  $0\leq s\leq \ell$ are single vertices. Then there are
  $m = \ell! 2^{\ell-s}/2\ell$ ways to extend $H$ to $C_k$. If
  $\ell > s$, then $m$ is even because $(\ell-1)!$ is even when
  $\ell\geq 3$ and when $\ell = 2$ the number $s$ is $0$ and $m =
  2$. If $\ell = 2$ and $s = 1$, then $H = P_{k-1} + v$. If
  $\ell = s$, then $m = \ell!/2\ell = (\ell-1)!/2$. But $\ell = s$
  implies that $\ell = k$ and therefore $m = (k-1)!/2$ which is even
  when $k\geq 5$.
\end{proof}

\begin{lemma}
  \begin{enumerate}
  \item $\sub{\overline{C_k}} \redto \ghom{K_k - P_{k-1}} \pmod 2$
    for odd $k\geq 5$.
  \item $\sub{\overline{P_k}} \redto \ghom{K_k - P_{k-1}}$ for
    $k\geq 5$.
  \item $\sub{K_k - P_{k-1}} \redto \ghom{K_k - P_{k-1}}$ for
    $k\geq 5$.
  \item $\ind{\overline{C_k}} \redto \ghom{K_k - P_{k-1}} \pmod 2$ for
    odd $k\geq 5$.
  \end{enumerate}
  \label{lem:cycle-hom-full}
\end{lemma}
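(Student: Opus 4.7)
The plan is to specialise the coloured-substitution template of Lemma~\ref{lem:path-hom-full} for all four reductions. Label $K_k - P_{k-1}$ so that the removed $P_{k-1}$ is the path $1{-}2{-}\cdots{-}(k-1)$; then $E(K_k - P_{k-1}) \setminus E(\overline{P_k}) = \{\{k-1,k\}\}$ and $E(K_k - P_{k-1}) \setminus E(\overline{C_k}) = \{\{k-1,k\},\{k,1\}\}$. Starting from $\ghom{K_k - P_{k-1}, m}$ over the coloured vertex set $[n']\times[k]$, apply the usual colour-preserving substitution $\sigma(z_{a,(v,a)}) = z_a$, $\sigma(z_{a,(v,b)}) = z_a^2$ for $a \neq b$, and $\sigma(y_{(v,a)}) = y_v$, so the surviving multilinear terms are parametrised by tuples $(v_1, \ldots, v_k) \in [n']^k$ of distinct elements.

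Parts~2 and~3 are direct analogues of Lemma~\ref{lem:path-hom-full}: the extra edge $\{k-1,k\}$ (Part~2 only) is mapped to a fresh other variable $v_0$, the remaining edges of the target pattern go to $x_{\{u,v\}}$, and the order-$2$ automorphism of the pattern is broken by an ordering constraint $u<v$ imposed on an invariant edge---$\{1,k-1\}$ for Part~3 and $\{1,k\}$ for Part~2. Part~1 is more delicate: both $\{k-1,k\}$ and $\{k,1\}$ become extras $v_0, v_0'$, and each $E(\overline{C_k})$-edge is sent to $x_{\{u,v\}}$. Without further constraints, each $\overline{C_k}$-embedding would contribute multiplicity $\naut{\overline{C_k}} = 2k \equiv 0 \pmod 2$. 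The crucial ingredient, valid only for odd $k$, is that the dihedral group $D_k = \mathit{Aut}(\overline{C_k})$ contains a reflection $\rho$ fixing vertex $(k-1)/2$ and swapping $k-1$ with $k$. Conditionalising the $v_0$ substitution as $\sigma(x_{\{(u,k-1),(v,k)\}}) = v_0$ if $u<v$ else $0$ forces $v_{k-1} < v_k$; since the tuples arising from $\tau$ and $\tau\rho$ have their $(k-1)$-th and $k$-th entries swapped, exactly one member of every $\langle\rho\rangle$-coset in $D_k$ satisfies the constraint, leaving $k$ surviving labellings per $\overline{C_k}$-embedding. Because $k$ is odd the multiplicity is $1 \pmod 2$, and the polynomial $f'_1 = k\,\sub{\overline{C_k}}$ witnesses $\sub{\overline{C_k}} \redto \ghom{K_k - P_{k-1}} \pmod 2$.

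For Part~4, Lemma~\ref{lem:cycle-ind-sub-full} reduces the task to realising $f' = f'_1 + \sub{\overline{P_k}} + \sub{K_k - P_{k-1}}$ (which is $\equiv \ind{\overline{C_k}} \pmod 2$ since $k$ is odd) by a single substitution. The plan is to use a block-partitioned target: take $g_{3m}$ with $m = nk$, split its coloured vertex set into three disjoint blocks $B_1, B_2, B_3$ each carrying its own copy of the colour structure $[n] \times [k]$, zero out every edge variable whose two endpoints lie in different blocks, and inside block $B_i$ apply the Part~$i$ substitution $\sigma_i$ enriched with dummy other-variable factors so that all three pieces share a common extras product $v_{[q]}$. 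Since $K_k - P_{k-1}$ is connected, only homomorphisms with image entirely inside a single block survive in the multilinear part, which therefore equals $v_{[q]}\,\ml{f'}$ as required. The main subtleties will be the coset bookkeeping in Part~1 (verifying that $\rho$-composition flips the constraint independently of the coset representative, which ultimately reduces to the identity $(\tau\rho)(k-1) = \tau(k)$ and $(\tau\rho)(k) = \tau(k-1)$) and confirming that the combined block-partitioned substitution in Part~4 still satisfies the syntactic requirements of Definition~\ref{def:subfam-full} as a constant-time computable substitution family.
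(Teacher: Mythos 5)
Your proposal takes essentially the same route as the paper: the same colour-product substitution on $\ghom{K_k-P_{k-1}}$ over $[n]\times[k]$, the same symmetry-breaking by an ordering constraint on a colour pair swapped by the relevant involution (the paper uses the pair $\{1,3\}$ and you use $\{k-1,k\}$, but both leave exactly $k$ of the $2k$ automorphisms of $\overline{C_k}$, odd for odd $k$), and the same three-block product construction for Part~4. The only real deviation is that the paper sends the extra edges of $K_k-P_{k-1}$ to the constant $1$ rather than to fresh variables $v_0,v_0'$, which makes all three blocks automatically share the common factor $z_{[k]}$ and sidesteps the padding issue you flag in Part~4; with that (fixable) bookkeeping aside, the arguments coincide.
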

\begin{proof}
  We start with $\ghom{K_k - P_{k-1}}$ over the vertex set
  $[n]\times [k]$ in all cases and apply the following substitutions.
  \begin{enumerate}
  \item Fix the labelling of $\overline{C_k}$ where the complementary
    $C_k$ is labelled $1,\dotsc,k$ such that the vertex $1$ has
    neighbours $2$ and $k$ and $k$ has neighbours $1$ and $k-1$ and
    every other vertex $i$ has $i+1$ and $i-1$ as its neighours. The
    crucial observation is that $\overline{C_k}$ has $2k$
    automorphisms and if we only select automorphisms where the label
    of the vertex coloured $1$ is strictly less than the label of the
    vertex coloured $3$, then we select exactly $k$
    automorphisms. This allows us to compute a polynomial family $h$
    such that $k.\sub{\overline{C_k}} \redto h$ and
    $k.\sub{\overline{C_k}} = \sub{\overline{C_k}} \pmod 2$.

    \setcounter{equation}{0}
    \begin{align}
      \sigma_1(z_{a,(v,a)}) &= z_a\\
      \sigma_1(z_{a,(v,b)}) &= z_a^2 \text{, if $a\neq b$}\\
      \sigma_1(y_{(v, a)}) &= y_v\\
      \sigma_1(x_{\{(u, p), (v, q)\}}) &= 0 \text{, if $p=1$ and $q=3$ and $u>v$}\\
      \sigma_1(x_{\{(u, p), (v, q)\}}) &= 1 \text{, if $p=1$ and $q=2$ or $p=1$ and $q=k$}\\
      \sigma_1(x_{\{(u, p), (v, q)\}}) &= x_{\{u, v\}} \text{, otherwise}
    \end{align}
  \item
    
    Fix the labelling of $\overline{P_k}$ where the complementary
    $P_k$ is
    \tikz[baseline=-\the\dimexpr\fontdimen22\textfont2\relax,node distance=0.5cm]{\node
      (1) {1};\node[right of=1] (2) {2};\node[right of=2] (dots) {$\dotsm$};\node[right of=dots] (k) {$k$};
      \draw (1) -- (2) -- (dots) -- (k);}.
    
    \setcounter{equation}{0}
    \begin{align}
      \sigma_2(z_{a,(v,a)}) &= z_a\\
      \sigma_2(z_{a,(v,b)}) &= z_a^2 \text{, if $a\neq b$}\\
      \sigma_2(y_{(v, a)}) &= y_v\\
      \sigma_2(x_{\{(u, p), (v, q)\}}) &= 0 \text{, if $p=1$ and $q=k$ and $u>v$}\\
      \sigma_2(x_{\{(u, p), (v, q)\}}) &= 1 \text{, if $p=1$ and $q=2$}\\
      \sigma_2(x_{\{(u, p), (v, q)\}}) &= x_{\{u, v\}} \text{, otherwise}
    \end{align}

  \item Fix the labelling of $K_k - P_{k-1}$ where the complementary
    $P_{k-1} + v$ is
    \tikz[baseline=-\the\dimexpr\fontdimen22\textfont2\relax,node
    distance=0.5cm]{\node (1) {1};\node[right of=1] (2) {2};\node[right
      of=2] (3) {3};\node[right of=3] (dots) {$\dotsm$};\node[right
      of=dots] (k) {$k$}; \draw (2) -- (3) -- (dots) -- (k);}.

    \setcounter{equation}{0}
    \begin{align}
      \sigma_3(z_{a,(v,a)}) &= z_a\\
      \sigma_3(z_{a,(v,b)}) &= z_a^2 \text{, if $a\neq b$}\\
      \sigma_3(y_{(v, a)}) &= y_v\\
      \sigma_3(x_{\{(u, p), (v, q)\}}) &= 0 \text{, if $p=2$ and $q=k$ and $u>v$}\\
      \sigma_3(x_{\{(u, p), (v, q)\}}) &= x_{\{u, v\}} \text{, otherwise}
    \end{align}

  \item We prove that
    $k\sub{\overline{C_k}} + \sub{\overline{P_k}} + \sub{K_k -
      P_{k-1}} \redto \ghom{K_k - P_{k-1}}$. Start with
    $\ghom{K_k - P_{k-1}}$ over the vertex set
    $[n]\times [k] \times [3]$ and apply the following substitution.
    \setcounter{equation}{0}
    \begin{align}
      \sigma(z_{a,(v,b,i)}) &= \sigma_i(z_{a, (v, b)}))\\
      \sigma(y_{(v, a, i)}) &= \sigma_i(y_{(v, a)})\\
      \sigma(x_{\{(u, p, i), (v, q, j)\}}) &= 0 \text{, if $i\neq j$}\\
      \sigma(x_{\{(u, p, i), (v, q, i)\}}) &= \sigma_i(x_{\{(u, p), (v, q)\}}) \text{, otherwise}
    \end{align}

    Rule~3 ensures that only the monomials where every vertex is
    indexed by the same element in $[3]$ survive. The other rules
    ensure that any monomial $m$ indexed by $i\in[3]$ are mapped to
    $\sigma_i(m')$, where $m'$ is the same as $m$ but with $i$
    removed.
  \end{enumerate}

  The proof of correctness of these reductions is the same as the
  argument in Theorem~\ref{thm:supergraphs-full}. In addition, the
  condition that $u > v$ when $u$ is coloured $1$ and $v$ is coloured
  $k$ rules out one out of two automorphisms for $\overline{P_k}$ in
  part~2 and the condition that $u > v$ when $u$ is coloured $2$ and
  $v$ is coloured $k$ rules out one out of two automorphisms for
  $K_k - P_{k-1}$ in part~3.
\end{proof}

\begin{theorem}
  If $\ghom{K_k - P_{k-1}}$ can be computed by circuits of size
  $n^{f(k)}$, then there is an $O(n^{f(k)})$ time algorithm for
  induced subgraph isomorphism problem for $C_k$ on $n$-vertex graphs
  for odd $k\geq 5$.
  \label{thm:cycle-main-theorem-full}
\end{theorem}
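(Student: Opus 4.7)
The plan is to chain Lemma~\ref{lem:cycle-hom-full}(4) together with the corollary of Proposition~\ref{prop:red-alg-full} (the mod-$p$ version), after first reducing the problem on the host graph $G$ to the corresponding problem on its complement $\overline{G}$. An $n$-vertex graph $G$ contains an induced $C_k$ if and only if $\overline{G}$ contains an induced $\overline{C_k}$, and $\overline{G}$ can be constructed in time $O(n^2)$, so detecting induced $C_k$ in $G$ reduces to detecting induced $\overline{C_k}$ in $\overline{G}$.

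Next I would invoke the already-observed correspondence that the monomials of $\ind{\overline{C_k},n}(\overline{G})$ are in bijection with the induced copies of $\overline{C_k}$ in $\overline{G}$, all with coefficient exactly $1$. In particular, $\overline{G}$ contains an induced $\overline{C_k}$ if and only if $\ind{\overline{C_k},n}(\overline{G})$ has a multilinear term with non-zero coefficient modulo~$2$.

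By Lemma~\ref{lem:cycle-hom-full}(4), $\ind{\overline{C_k}} \redto \ghom{K_k - P_{k-1}} \pmod 2$ for odd $k \geq 5$. The hypothesis supplies uniform size-$n^{f(k)}$ arithmetic circuits for $\ghom{K_k - P_{k-1}}$, so applying the corollary following Proposition~\ref{prop:red-alg-full} yields a randomized one-sided error algorithm testing for the desired multilinear term in time $O(n^{f(k)})$, which is the claimed bound. The $O(n^2)$ overhead for forming $\overline{G}$ is absorbed because any circuit computing $\ghom{K_k - P_{k-1}}$ must read all $\binom{n}{2}$ edge variables, forcing $n^{f(k)} = \Omega(n^2)$.

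The heavy lifting, namely the mod-$2$ identity in Lemma~\ref{lem:cycle-ind-sub-full} and the substitution giving the reduction in Lemma~\ref{lem:cycle-hom-full}(4), has already been carried out; the only conceptual moves left are the complementation trick and the observation that mod-$2$ detection suffices because $\ind{\overline{C_k},n}(\overline{G})$ has $0/1$ coefficients. Hence I do not anticipate a genuine obstacle — this theorem is a direct assembly of the preceding machinery.
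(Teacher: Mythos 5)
Your proposal is correct and follows essentially the route the paper intends: the theorem is the direct assembly of Lemma~\ref{lem:cycle-ind-sub-full}, Lemma~\ref{lem:cycle-hom-full}(4), and the mod-$p$ corollary of Proposition~\ref{prop:red-alg-full}, using that $\ind{\overline{C_k},n}$ has $0/1$ coefficients so mod-$2$ detection suffices. The only (cosmetic) difference is that you complement the host graph in $O(n^2)$ time, whereas the paper handles complementation at the polynomial level via the substitution $x_e \mapsto 1 - x_e$ (its proposition $\ind{H}\redto\ind{\overline{H}}$); both are valid.
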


\begin{theorem}
  \label{thm:AlgosForCycles-full}
  The following algorithms exist

  \begin{enumerate}
  \item An $O(n^\omega)$-time algorithm for induced subgraph
    isomorphism problem for $C_5$ in $n$-vertex graphs.
  \item An $O(n^{k-2})$-time combinatorial algorithm for induced
    subgraph isomorphism problem for $C_k$ in $n$-vertex graphs, where
    $k\geq 5$ is odd.
  \item An $O(n^{k-2})$-time deterministic combinatorial algorithm for
    computing the parity of the number of induced subgraphs isomorphic
    to $C_k$ in $n$-vertex graphs, where $k\geq 5$ is odd.
  \end{enumerate}
  \label{thm:cycle-algos-full}
\end{theorem}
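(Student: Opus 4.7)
The plan is to apply Theorem~\ref{thm:cycle-main-theorem-full} for parts~1 and~2 after exhibiting efficient circuits for $\ghom{K_k - P_{k-1}}$, and to apply Theorem~\ref{thm:multiparity-full} for part~3, following the same high-level strategy used for paths in Theorem~\ref{thm:path-algos-full}.

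\textbf{Part~1.} I construct an $O(n^\omega)$-size circuit for $\ghom{K_5 - P_4}$. With the labelling from Lemma~\ref{lem:cycle-hom-full} part~3 (vertex $1$ universal and the induced subgraph on $\{2,3,4,5\}$ being the path $4$--$2$--$5$--$3$, the complement of the removed $P_4 = 2\text{--}3\text{--}4\text{--}5$), a homomorphism with images $(a,b,c,d,e) = (\phi(1),\dotsc,\phi(5))$ is determined by $a$ being adjacent to each of $b,c,d,e$ together with the edges $bd$, $be$, $ce$. Set $A^{n\times n}_{i,j} = x_{\{i,j\}}$, diagonals $N_{d,d} = y_d z_{4,d}$ and $M_{c,c} = y_c z_{3,c}$, and derived matrices $U_{a,b} = x_{\{a,b\}} y_b z_{2,b} (ANA)_{a,b}$ and $W_{a,e} = x_{\{a,e\}} y_e z_{5,e} (AMA)_{a,e}$. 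Then
\begin{equation*}
  \ghom{K_5 - P_4} = \sum_{a,e \in [n]} z_{1,a} y_a \, W_{a,e} \, (UA)_{a,e}.
\end{equation*}
The matrix products $ANA$, $AMA$, and $UA$ each admit $O(n^\omega)$-size circuits; the outer double sum adds $O(n^2)$ gates. Theorem~\ref{thm:cycle-main-theorem-full} then provides the claimed $O(n^\omega)$-time algorithm for induced $C_5$.

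\textbf{Part~2.} For odd $k \geq 5$, I construct $\ghom{K_k - P_{k-1}}$ using Theorem~\ref{thm:diaz-full} together with the bound $\tw{K_k - P_{k-1}} \leq k-3$. Since vertex $1$ is universal, placing it in every bag adds $1$ to the treewidth of the induced subgraph on $\{2,\dotsc,k\}$, which is $\overline{P_{k-1}}$. So it suffices to show $\tw{\overline{P_n}} \leq n - 3$ for $n \geq 4$, which I prove by induction: the base $\tw{\overline{P_4}} = \tw{P_4} = 1$ is immediate, and for the step one notes that $\overline{P_n}$ is obtained from $\overline{P_{n-1}}$ by adding a vertex $v_n$ adjacent to every existing vertex except $v_{n-1}$; adding $v_n$ to every bag of a width-$(n-4)$ decomposition of $\overline{P_{n-1}}$ yields a valid width-$(n-3)$ decomposition of $\overline{P_n}$. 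Hence Theorem~\ref{thm:diaz-full} gives combinatorial circuits of size $O(n^{k-2})$ for $\ghom{K_k - P_{k-1}}$. Combining with Lemma~\ref{lem:cycle-hom-full} part~4 and the $\pmod 2$ corollary of Proposition~\ref{prop:red-alg-full} (which applies combinatorially via Theorem~\ref{thm:test:1-full}) gives the randomized combinatorial algorithm.

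\textbf{Part~3.} I modify each substitution $\sigma_i$ of Lemma~\ref{lem:cycle-hom-full} by replacing $z_{a,(v,a)} \mapsto z_a$ and $z_{a,(v,b)} \mapsto z_a^2$ with $z_{a,(v,a)} \mapsto 1$ and $z_{a,(v,b)} \mapsto 0$, exactly as in Theorem~\ref{thm:path-algos-full} part~4. After evaluating the edge variables on $\overline{G}$, the resulting polynomial has degree $k$ in at most $n$ vertex variables, and by Lemmas~\ref{lem:cycle-ind-sub-full} and~\ref{lem:cycle-hom-full} part~4 the parity of the sum of its multilinear coefficients equals the parity of the number of induced $C_k$'s in $G$. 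Theorem~\ref{thm:multiparity-full} then computes this deterministically in $O(s + n^{\lceil k/2 \rceil}) = O(n^{k-2})$ time, since $s = O(n^{k-2})$ from part~2 and $\lceil k/2\rceil \leq k-2$ for odd $k \geq 5$.

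\textbf{Main obstacle.} The principal technical step is verifying the treewidth bound $\tw{K_k - P_{k-1}} \leq k-3$ in part~2 (the inductive tree decomposition above handles it cleanly). The matrix-product construction in part~1 is routine given the analogous construction in Theorem~\ref{thm:path-algos-full} part~1, and part~3 is a direct adaptation of Theorem~\ref{thm:path-algos-full} part~4; the only care needed is to confirm that the modified substitution still produces a polynomial whose multilinear-coefficient parity correctly captures the induced count.
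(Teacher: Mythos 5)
Your proposal is correct and follows essentially the same route as the paper: an explicit $O(n^\omega)$ matrix-product circuit for $\ghom{K_5-P_4}$ for part~1, the treewidth bound $\tw{K_k-P_{k-1}}=k-3$ with Theorem~\ref{thm:diaz-full} for part~2, and the parity argument via Theorem~\ref{thm:multiparity-full} for part~3. Your part~1 uses a different (isomorphic) labelling and composes the matrix products directly rather than via placeholder edge variables, and you supply details the paper leaves implicit (the inductive treewidth bound and the part~3 adaptation), but the underlying argument is the same.
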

\begin{proof}
  \begin{enumerate}
  \item We describe how to compute $\ghom{K_5 - P_4}$ using
  arithmetic circuits of size $O(n^\omega)$. We start by defining the
  following matrices.
    
    \begin{figure}
      \begin{center}
      \begin{tikzpicture}
        \node (1) at (0, 0) {1};
        \node (2) at (2, 0) {2};
        \node (5) at (3, -1) {5};
        \node (3) at (1, -2) {3};
        \node (4) at (-1, -1) {4};
        
        \draw (1) -- (2);
        \draw (1) -- (3);
        \draw (2) -- (3);

        \draw (1) -- (4);
        \draw (3) -- (4);
        \draw (2) -- (5);
        \draw (3) -- (5);
      \end{tikzpicture}
      \end{center}
      \caption{A labelled $K_5 - P_4$}
      \label{fig:k5-p4-full}
    \end{figure}
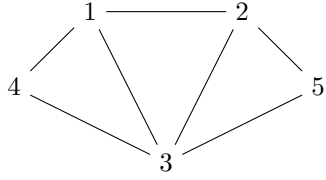

    \begin{align*}
      A^{n\times n}_{i, j} &= x_{\{(i, 1), (j, 3)\}}, i\neq j\\
      E^{n\times n}_{i, j} &= x_{\{(i, 3), (j, 2)\}}, i\neq j\\
      F^{n\times n}_{i, j} &= x_{\{i, j\}}, i\neq j\\
      B^{n\times n}_{i, i} &= y_i z_{3, i}\\
      C^{n\times n}_{i, i} &= y_i z_{4, i}\\
      D^{n\times n}_{i, i} &= y_i z_{5, i}
    \end{align*}

    Consider the labelled $K_5 - P_4$ in
    Figure~\ref{fig:k5-p4-full}. Compute the matrix products $FCF$, $FDF$,
    and $ABE$. Compute the polynomial
    $\sum_{i, j\in [n], i\neq j} z_{1, i} z_{2, j} y_i y_j x_{\{i,
      j\}} {(ABE)}_{i, j}$ and replace $x_{\{(i, 1), (j, 3)\}}$ with
    ${(FCF)}_{i, j}$ and replace $x_{\{(i, 3), (j, 2)\}}$ with
    ${(FDF)}_{i, j}$. It is easy to see that the resulting polynomial
    is $\ghom{K_5 - P_4}$ for this labelled $K_5 - P_4$ and the
    circuit has size $O(n^\omega)$.
  \item $\tw{K_k - P_{k-1}} = k-3$.
  \item The proof is similar to the proof of Part~4 of
    Theorem~\ref{thm:path-algos-full}.
  \end{enumerate}
\end{proof}

We remark that by computing homomorphism polynomials for $K_k -
P_{k-1}$ for $k = 7, 9$ using small-size circuits, we can obtain an
$O(n^{2\omega})$ time algorithm for induced subgraph isomorphism for
$C_7$ and an $O(n^{\omega(3,3,2)})$ time algorithm for induced
subgraph isomorphism for $C_9$. These algorithms are faster than the
corresponding algorithms for $k$-cliques.

\section{Algorithms for almost all induced patterns}
In this section, we prove a result that is similar in spirit to
Theorem~{\ref{thm:np-universal-full}} in \cite{NP85} which states that the
time complexity of induced subgraph isomorphism problem for $K_k$
upper bounds that of any $k$-vertex pattern graph. We show that the
circuit complexity of $\ghom{K_k - e}$ upper bounds the time
complexity of the induced subgraph isomorphism problem for all
$k$-vertex pattern graphs $H$ except $K_k$ and $I_k$. The algorithms
obtained from this statement can be obtained from known
results. However, we believe that restating these upper-bounds in
terms of circuits for $K_k - e$ homomorphism polynomials may give new
insights to improve these algorithms.

The key idea is that an efficient construction of homomorphism
polynomial for $K_k - e$ enables efficient construction of
homomorphism polynomials for all smaller graphs. First, we prove the
following technical result.

\begin{proposition}
If $\sub{H} \redto f$ and $f$ is a graph pattern polynomial family
with uniform $s(n)$-size circuits, then $\ghom H$ has uniform
$O(s(n))$-size circuits.
\end{proposition}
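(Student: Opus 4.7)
My plan is to construct a circuit for $\ghom{H,n}$ by combining the reduction hypothesis applied to an enlarged instance of size $N=nk$ (where $k=|V(H)|$) with a substitution that collapses the enlarged subgraph-isomorphism polynomial to the homomorphism polynomial, followed by constant-size coefficient extraction.

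First, I would invoke $\sub{H}\redto f$ at $N=nk$. This yields $m=O(N)=O(n)$, $q=O(1)$, and a constant-time computable substitution $\sigma_m$ with $\ml{\sigma_m(f_m)}=v_{[q]}\sub{H,N}$. Substituting $\sigma_m$ into the uniform circuit for $f_m$ of size $O(s(N))=O(s(n))$ (using polynomial boundedness of $s$) gives a size-$O(s(n))$ circuit for the polynomial $P_0=\sigma_m(f_m)$; the non-multilinear part of $P_0$ consists of monomials each containing some non-edge variable raised to degree at least $2$.

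Next, I would identify $V(K_N)$ with $[n]\times[k]$ after fixing a labelling $V(H)=[k]$, and apply the substitution $\tau$ defined by $y_{(w,i)}\mapsto z_{i,w}y_w$ and $x_{\{(u,i),(v,j)\}}\mapsto x_{\{u,v\}}$ if $\{i,j\}\in E(H)$ and $u\neq v$, else $0$. The central combinatorial identity is $\tau(\sub{H,N})=\ghom{H,n}+J$, where each monomial of $J$ misses at least one homomorphism variable $z_{a,\cdot}$ from its $z$-support. To prove this I would parametrize each surviving subgraph-isomorphism class in $\sub{H,N}$ as an $\text{Aut}(H)$-orbit of pairs $(\lambda,\sigma)$ with $\sigma:V(H)\to V(H)$ an endomorphism and $\lambda:V(H)\to[n]$ a graph homomorphism; the contribution of such a class is $\prod_a z_{\sigma(a),\lambda(a)}y_{\lambda(a)}\prod_{\{a,b\}\in E(H)}x_{\{\lambda(a),\lambda(b)\}}$. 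Orbits with bijective $\sigma$ reduce canonically within their orbit to $\sigma=\text{id}$, contributing one $\ghom{H,n}$ monomial per homomorphism $\psi=\lambda$, while orbits with non-bijective $\sigma$ satisfy $\sigma(V(H))\subsetneq V(H)$ and yield monomials whose $z$-factor omits the variable $z_{i,\cdot}$ for any $i\notin\sigma(V(H))$.

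Finally, I would introduce fresh variables $t_1,\dots,t_k$ via the substitution $z_{i,w}\mapsto t_iz_{i,w}$ and extract the coefficient of $\prod_{i\in[k]}t_i\cdot\prod_{j\in[q]}v_j$ by $k+q=O(1)$ iterated partial derivatives followed by evaluation at $t=v=0$ (via Baur--Strassen, each derivative step inflates circuit size by only a constant factor, so the final size remains $O(s(n))$). This extraction annihilates both the endomorphism junk $v_{[q]}J$ (since each $J$-monomial misses some $t_i$) and the non-multilinear junk $\tau(P_0-v_{[q]}\sub{H,N})$ (since each such monomial has some $v_j$ or $t_i$ of degree at least $2$), leaving exactly $\ghom{H,n}$.

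The main obstacle is establishing the combinatorial identity $\tau(\sub{H,N})=\ghom{H,n}+J$ rigorously: one must track how the $\text{Aut}(H)$-quotient already built into $\sub{H,N}$ interacts with the parametrisation by $(\lambda,\sigma)$, and confirm that every homomorphism $\psi$ is contributed with multiplicity exactly one while all remaining surviving classes produce $z$-deficient monomials. Direct verification on small patterns like $H=P_3$ and $H=K_2$ is transparent and sets the template for the general argument.
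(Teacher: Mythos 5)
Your proposal is correct and follows essentially the same route as the paper's proof: blow the vertex set up to $[n]\times[k]$, invoke the reduction to get an $O(s(n))$-size circuit whose multilinear part is $v_{[q]}\sub{H,kn}$, apply a colour-respecting substitution under which the colourful copies of $H$ reproduce exactly $\ghom{H,n}$ while every other surviving term misses some colour-tracking variable, and extract the desired coefficient with $O(1)$ partial derivatives. The only (cosmetic) difference is that you attach $z_{i,w}y_w$ and the tracking variable $t_i$ directly to the vertex variable $y_{(w,i)}$, whereas the paper maps $y_{(i,\kappa)}\mapsto \hat a_\kappa$ and routes the $z,y$ factors through a chosen incident edge $g(\kappa)$ of each vertex --- which is why the paper needs the (harmless) assumption that $H$ has no isolated vertices and your variant does not.
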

\begin{proof}
We can assume w.l.o.g.\ that $H$ does not have isolated vertices. Let
$H$ have $k$ nodes and let $K_{n}^k$ be the complete $k$-partite graph
with $n$ nodes in each partition. The nodes of $K_{n}^{k}$ are of the
form $(i, \kappa)$, $1 \le i \le n$, $1 \le \kappa \le k$. Let
$\sigma$ be a family of substitutions realizing $\sub H \redto f$.
Consider $\sub {H,kn}$. We know that $\ml{\sigma_{m}(f_m)} = v_{[q]}
\sub {H,kn}$ for some $m = O(n)$ and $q = O(1)$.  Since $H$ does not
contain isolated vertices, there is a function $g$ that maps $V(H)$ to
$E(H)$ such that the image of $f(v)$ for any $v$ is an edge incident
on $v$. Now we define the substitution $\tau$ on the edge and vertex
variables:

\begin{align*}
  \tau(x_{\{(i, \kappa),(j, \mu)\}}) &=
   \begin{cases}
    Y_{i, \kappa, \{\kappa, \mu\}} Y_{j, \mu, \{\kappa, \mu\}} x_{\{i,j\}} &
    \text{if $i \not= j$}, \\ 0 & \text{if $i = j$ or
      $\{\kappa,\mu\}\not\in E(H)$},
   \end{cases}\\
   \tau(y_{(i,\kappa)}) &= \hat a_\kappa,
\end{align*}

where the variables $\hat a$ are fresh variables that we need
for book-keeping and we define:

\begin{align*}
  Y_{i, \kappa', \{\kappa, \mu\}} &= z_{\kappa, i} y_i &\text{if $g(\kappa') = \{\kappa, \mu\}$}\\
  Y_{i, \kappa', \{\kappa, \mu\}} &= 1 &\text{if $g(\kappa') \not= \{\kappa, \mu\}$}
\end{align*}

Every embedding of $H$ into $K_{n}^k$ such that each node of $H$ goes
into another part will contribute a term that is multilinear in the
$\hat a_\kappa$-variables in $\tau(N_{H,kn})$. The substitution also
ensures that the colours of edges correspond to edges in $H$ and
labels of adjacent vertices are different. It is easy to see that
these embeddings correspond to homomorphisms to $K_n$. We have proved
that the part of $\tau(\sub{H, kn})$ multilinear in $\hat a$ variables
is,

\begin{equation*}
   \hat a_{V(H)} \sum_{\phi: H \hommap K_n} \prod_{v\in V(H)} z_{v,
     \phi(v)} y_{\phi(v)} \prod_{e\in E(H)} x_{\phi(e)} = \hat a_{[k]}
   \ghom {H,n}.
\end{equation*}

Furthermore the part of $\tau(\sigma_{m}(f_m))$ multilinear in $\hat
a$ and $v_i$ variables is $v_{[q]} \hat a_{[k]}\ghom {H,n}$ since
every non-multilinear term stays non-multilinear under $\tau$.
Therefore, we get an exact computation for $\ghom {H,n}$ by
differentiating the circuit with respect to $v_1,\dots,v_q,\hat
a_1,\dots,\hat a_k$ once and then setting all variables $v_i$ for all
$i$ and $\hat a_1,\dots,\hat a_k$ to $0$.  Note that each
differentiation will increase the circuit size by a constant factor
and we differentiate a constant number of times. This operation is
linear-time in the size of the circuit.\footnote{Note that unlike in
  the Baur-Strassen theorem, we only compute \emph{one} derivative!}
\end{proof}

The above result can be interpreted in two different ways: (1)
Homomorphism polynomials are the best graph pattern polynomials or (2)
Efficient constructions for homomorphism polynomials can be obtained
by obtaining efficient constructions for \emph{any} pattern family $f$
such that $\sub{H}\redto f$.

\begin{lemma}
  Let $k>2$. If $H\neq K_k$ is a $k$-vertex graph, then
  $2\sub{H}\redto \ghom{K_k-e}$.
\end{lemma}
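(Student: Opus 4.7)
The plan is to pick any non-edge $\{a, b\}$ of $H$ (which exists because $H \neq K_k$) and label the graph $K_k - e$ so that its unique missing edge is exactly $\{a, b\}$. With this labelling, $E(H) \subseteq E(K_k-e)$ over the common vertex set $[k]$. I would then reduce $\sub{H, n}$ to $\ghom{K_k-e, nk}$ by viewing the homomorphism polynomial over the vertex set $[n] \times [k]$ (label in $[n]$, colour in $[k]$) and applying a substitution that pins each $K_k-e$ vertex of label $i$ to colour $i$, in the same style as Lemma~\ref{lem:NH3kHomH3k-full}.

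First I would specify the substitution $\sigma$. The $z$-rules are the standard colour-matching pair $\sigma(z_{i, (v, i)}) = z_i$ and $\sigma(z_{i, (v, j)}) = z_i^2$ for $i \neq j$, which forces the colour of each $K_k-e$ vertex to equal its label in every multilinear monomial (mismatches acquire a $z_i^2$ factor). The vertex rule is $\sigma(y_{(v, i)}) = y_v$, so a collision $v_i = v_j$ in the image generates $y_v^2$ and is killed by multilinearity. For the edge variables I set $\sigma(x_{\{(u,i),(v,j)\}}) = x_{\{u,v\}}$ when $\{i,j\} \in E(H)$, $\sigma(x_{\{(u,i),(v,j)\}}) = 1$ when $\{i,j\} \in E(K_k-e) \setminus E(H)$, and send every remaining variable (same-colour pairs, or pairs involving the absent edge $\{a,b\}$) to $0$. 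To produce the exact factor of $2$ demanded by the lemma, I would additionally multiply a single $y$-rule by a scalar, e.g.\ $\sigma(y_{(v, 1)}) = 2\, y_v$.

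Next I would analyse $\ml{\sigma(\ghom{K_k - e, nk})}$. Only homomorphisms $\phi$ whose colour matches the label at every vertex contribute multilinearly in the $z$-part, so $\phi(i) = (v_i, i)$; multilinearity in $y$ then forces $(v_1,\ldots,v_k)$ to be pairwise distinct (in particular $v_a \neq v_b$, which is the one place where the missing edge of $K_k-e$ would otherwise fail to enforce injectivity); the edge substitution turns the remaining product into $\prod_{\{i,j\} \in E(H)} x_{\{v_i, v_j\}}$ because edges of $E(K_k-e) \setminus E(H)$ collapse to $1$ and $\{a,b\}$ is not in $E(K_k-e)$ to begin with. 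Each surviving monomial is therefore $2 \cdot z_{[k]} \cdot y_{v_1} \cdots y_{v_k} \prod_{\{i,j\} \in E(H)} x_{\{v_i, v_j\}}$. Comparing coefficients on a fixed $H$-subgraph $(V, E')$ of $K_n$, both $\ml{\sigma(\ghom{K_k-e, nk})}$ and $2 \, \sub{H,n}$ collect contributions over the $\naut{H}$ bijections $[k] \to V$ carrying $E(H)$ onto $E'$, yielding $\ml{\sigma(\ghom{K_k-e,nk})} = 2 \, z_{[k]} \, \sub{H,n}$ with $v_{[q]} = z_{[k]}$.

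The remaining formal checks are that $\sigma$ is constant-time computable (immediate from the finite rule list), that $\sigma(\ml{\ghom{K_k-e,nk}})$ is a graph pattern polynomial (the non-multilinear residues from $z_i^2$ and $y_v^2$ supply the required non-edge variable of degree $\geq 2$), and that the vertex, edge, and other clauses of Definition~\ref{def:subfam-full} are met by the rule list above. The main obstacle I foresee is precisely the book-keeping around the vertices $a, b$: since $\{a,b\}$ is not an edge of $K_k-e$, the usual ``edge variable kills coincidences'' mechanism is unavailable on that pair, so the injectivity $v_a \neq v_b$ must be routed entirely through the $y$-variables rather than through any edge constraint.
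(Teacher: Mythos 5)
There is a genuine gap: your construction does not produce the coefficient $2$; it produces the coefficient $2\naut{H}$. Recall that $\sub{H,n}$ is defined so that each $H$-subgraph of $K_n$ contributes exactly \emph{one} monomial (subgraph isomorphisms with the same edge-set image are identified; cf.\ the condition $p<s$ in the $P_4$ example). Under your substitution, a fixed $H$-subgraph on vertex set $V$ with edge set $E'$ is generated once for \emph{every} bijection $[k]\to V$ carrying $E(H)$ onto $E'$, and there are $\naut{H}$ such bijections; your extra scalar in the rule $\sigma(y_{(v,1)})=2y_v$ then gives $\ml{\sigma(\ghom{K_k-e,nk})}=2\naut{H}\,z_{[k]}\,\sub{H,n}$. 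Since Definition~\ref{def:reduc-full} requires the coefficients to match exactly, what you have shown is $2\naut{H}\sub{H}\redto\ghom{K_k-e}$ (essentially Theorem~\ref{thm:supergraphs-full} composed with the proposition $\naut{H}\sub{H}\redto\ghom{H}$), not the claimed $2\sub{H}\redto\ghom{K_k-e}$. The distinction is not cosmetic: the statement is used with a prime $p$ dividing certain subgraph counts, and an uncontrolled factor $\naut{H}$ could vanish modulo that prime.

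The missing idea is how to cut the multiplicity from $\naut{H}$ down to exactly $2$, and why $2$ (rather than $1$) is the right constant. The paper indexes the vertex set additionally by the set $P$ of permutations associated with the distinct labellings of $H$ and imposes ordering constraints of the form $\sigma(x_{\{(v_1,p_1,\phi),(v_2,p_2,\phi)\}})=0$ when $\phi^{-1}(p_1)<\phi^{-1}(p_2)$ and $v_1>v_2$, so that for each labelled copy of $H$ essentially only one consistent colouring survives. The constant $2$ is then structural: the one pair on which no ordering constraint can be imposed is $\{1,k\}$, the missing edge of $K_k-e$, which carries no edge variable; consequently each target monomial is generated exactly twice (either by two surviving permuted monomials, or by one monomial whose $z_1$-rule is doubled, depending on whether consistency with $\phi$ is detectable without comparing the positions of $1$ and $k$). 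You correctly observed that the absent edge disables the edge-variable mechanism on that one pair, but you applied this observation only to injectivity (which the $y$-variables already handle) and not to the automorphism/ordering bookkeeping, which is where the actual difficulty of the lemma lies.
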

\begin{proof}
  The proof of this claim is similar to the proof of
  Theorem~\ref{thm:ind_graph_to_clique-full}. Let $M$ be the labelling of
  $K_k - e$ using $[k]$ such that vertices $1$ and $k$ are not
  adjacent. Let $L$ be a labelling of $H$ using $[k]$ such that $1$
  and $k$ are not adjacent. Therefore, the labelled graph $L$ is a
  subgraph of the labelled graph $M$. Let $q_1, \dotsc, q_\ell$ be
  the edges of $L$ and $q_{\ell+1},\dotsc,q_m$ be the non-edges of
  $L$. Let $S$ be the set of all labellings of $H$. For each
  labelling $L'$ in $S$, associate a permutation with $L'$ such that
  applying it to $L'$ yields $L$. Let $P$ be the set of all such
  permutations.
  
  We partition $P$ into $P_1$ and $P_2$ as follows: A permutation
  $\phi\in P_1$ if given a sequence of $k$ numbers, we can determine
  whether the sequence is consistent with $\phi$, i.e., the
  $i^\text{th}$ smallest element in the sequence is at position
  $\phi(i)$, without comparing the first and last elements in the
  sequence. Otherwise, $\phi\in P_2$. We start with the
  $\ghom{K_k - e}$ polynomial over the vertex set
  $[n]\times [k]\times P$ and apply the following substitution.

  \setcounter{equation}{0}
  \begin{align}
    \sigma_{H}(y_{(v, p, \phi)}) &= y_v\\
    \sigma_{H}(x_{\{(v_1, p_1, \phi), (v_2, p_2, \phi')\}}) &= 0, \text{ if $\phi\neq \phi'$}\\
    \sigma_{H}(x_{\{(v_1, p_1, \phi), (v_2, p_2, \phi)\}}) &= 0, \phi^{-1}(p_1) < \phi^{-1}(p_2) \wedge v_1 > v_2\\
    \sigma_{H}(x_{\{(v_1, p_1, \phi), (v_2, p_2, \phi)\}}) &= \begin{cases}
      x_{\{v_1, v_2\}}, & \{p_1, p_2\}\in E(L)\\
      1, &\{p_1, p_2\}\in E(M)\setminus E(L)\\
      0, &\text{otherwise}
    \end{cases}\\
    \sigma_{H}(z_{(1, (v, 1, \phi))}) &= \begin{cases}
      u_1, &\phi\in P_2\\
      2u_1, &\phi\in P_1
    \end{cases}\\
    \sigma_{H}(z_{(i, (v, i, \phi))}) &= u_i, i > 1\\
    \sigma_{H}(z_{(i, (v, j, \phi))}) &= u_i^2, i \neq j
  \end{align}

  First, we state some properties satisfied by the surviving
  monomials. Rule~1 ensures that all vertices have different
  labels. Rule~2 ensures that all variables in a surviving monomial
  are indexed by the same permutation. Rules~6 and 7 ensure that all
  vertices have different colours. Let $\tau = (1\ k)(2)\dotsm
  (k-1)$. Consider an arbitrary surviving monomial indexed by a
  permutation $\phi$. If $\phi\in P_1$, then Rule~3 ensures that the
  vertices of the monomial are consistent with $\phi$. Assume that
  the vertices are $(v_1, 1, \phi),\dotsc, (v_k, k, \phi)$ and they
  are not consistent with $\phi$. This is possible only if
  $\phi^{-1}(1) < \phi^{-1}(k)$ and $v_1 > v_k$ or $\phi^{-1}(1) >
  \phi^{-1}(k)$ and $v_1 < v_k$. Since $\phi\in P_1$, there exists
  an $i'$ such that $\phi^{-1}(1) < \phi^{-1}(i') < \phi^{-1}(k)$ or
  $\phi^{-1}(1) > \phi^{-1}(i') > \phi^{-1}(k)$. Therefore, we have
  that the vertices are inconsistent at either $\{1, i'\}$ or $\{i',
  k\}$, a contradiction. If $\phi\in P_2$, then Rule~3 ensures that
  the vertices are consistent with $\phi$ or $\tau\circ\phi$. To see
  this, observe that, by Rule~3, the inconsistency with $\phi$ can
  only occur $\{1, k\}$. This implies that the vertices are
  consistent with $\tau\circ\phi$ because ${(\tau\circ\phi)}^{-1}(1)
  = \phi^{-1}(k)$ and ${(\tau\circ\phi)}^{-1}(k) = \phi^{-1}(1)$
  removing the inconsistency at $\{1, k\}$ and for all other $i$, we
  have ${(\tau\circ\phi)}^{-1}(i) = \phi^{-1}(i)$ preserving
  consistency at all other points.

  Consider a labelled $H$, say $L'$, labelled using
  $v_1 < \dotsb < v_k$ with associated permutation $\phi$. Let
  $\psi : v_i \mapsto i$. Let $e_1, \dotsc, e_m$ be the edges and
  non-edges of $L'$ such that $e_i = \psi^{-1}(\phi^{-1}(q_i))$ for
  all $i$. We split the proof into two cases: If $\phi\in P_1$, the
  monomial
  $z_{(1, (v_{\phi^{-1}(1)}, 1, \phi))}\dotsm z_{(1,
    (v_{\phi^{-1}(k)}, k, \phi))}$ (A monomial in $\ghom{K_k - e}$
  is completely determined by the homomorphism variables and we will
  not specify the other variables for brevity) uniquely generates
  the monomial in $\sub{H}$ that corresponds to $L'$. If
  $\phi\in P_2$, then there are two cases to consider depending on
  whether the permutation $\tau$ is in $Aut(L)$ or not. If
  $\tau\not\in Aut(L)$, then the monomials
  $z_{(1, (v_{\phi^{-1}(1)}, 1, \phi))}\dotsm z_{(1,
    (v_{\phi^{-1}(k)}, k, \phi))}$ and
  $z_{(1, (v_{\phi^{-1}(1)}, 1, \tau\circ\phi))}\dotsm z_{(1,
    (v_{\phi^{-1}(k)}, k, \tau\circ\phi))}$ are the only two
  monomials that yield the required monomial. If $\tau\in Aut(L)$,
  then the monomials
  $z_{(1, (v_{\phi^{-1}(1)}, 1, \phi))}\dotsm z_{(1,
    (v_{\phi^{-1}(k)}, k, \phi))}$ and
  $z_{(1, (v_{\phi^{-1}(k)}, 1, \phi))}\dotsm z_{(1,
    (v_{\phi^{-1}(1)}, k, \phi))}$ are the only two monomials that
  yield the required monomial.
\end{proof}

The above lemma shows that, as expected, the polynomial $\ghom{K_k -
  e}$ is strong enough to compute every other graph homomorphism
except that of $K_k$. This allows us to parameterize many existing
results in terms of the size of the arithmetic circuits computing
$\ghom{K_k - e}$.

\begin{theorem}
  If there are uniform $O(n^{s(k)})$ size circuits for $\ghom{K_k -
    e}$, then the number of subgraph isomorphisms for any $k$-vertex
  $H\neq K_k$ can be computed in $O(n^{s(k)})$ time on $n$-vertex
  graphs.
\end{theorem}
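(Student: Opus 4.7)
The plan has three stages: (i) upgrade the reduction $2\sub{H} \redto \ghom{K_k - e}$ from the previous Lemma into efficient circuits for $\ghom{H,n}$; (ii) use those circuits, together with Theorem~\ref{thm:diaz-full}, to compute $|\{\psi:H/\pi \hommap G\}|$ for each set partition $\pi$ of $V(H)$; (iii) combine these hom-counts via Möbius inversion on the partition lattice to obtain the count of subgraph isomorphisms.

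For stage (i), the preceding Lemma gives $2\sub{H} \redto \ghom{K_k - e}$ for every $k$-vertex $H \neq K_k$. The technical Proposition proved at the beginning of this section then lifts this reduction to uniform $O(n^{s(k)})$-size arithmetic circuits for $\ghom{H, n}$. The scalar $2$ is harmless over a field of characteristic zero, since the Proposition's construction goes through up to any fixed nonzero constant factor.

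For stage (ii), evaluating the $\ghom{H,n}$ circuit at $y_i = 1$, $z_{v, j} = 1$, and $x_e = \mathbb{1}_{e \in E(G)}$ returns $|\{\phi : H \hommap G\}|$ in $O(n^{s(k)})$ time. For each non-trivial partition $\pi$, the quotient graph $H/\pi$ has at most $k - 1$ vertices, and Theorem~\ref{thm:diaz-full} gives $O(n^{\tw{H/\pi} + 1}) \le O(n^{k-1})$-size circuits for $\ghom{H/\pi, n}$ that I would evaluate in the same way. For stage (iii), the standard Möbius inversion on the partition lattice $\Pi(V(H))$ yields
\begin{equation*}
|\{\phi : H \submap G\}| \cdot \naut{H} = \sum_{\pi \in \Pi(V(H))} \mu(\hat 0, \pi)\cdot |\{\psi : H/\pi \hommap G\}|,
\end{equation*}
and dividing the right-hand side by $\naut{H}$ produces the claimed subgraph-isomorphism count. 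Since $k$ is constant, the sum has $B_k = O(1)$ terms.

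The main obstacle I foresee is matching the claimed $O(n^{s(k)})$ runtime against the $O(n^{k-1})$ cost of the non-trivial-partition terms. Since $\tw{K_k - e} = k - 2$, Theorem~\ref{thm:diaz-full} already guarantees $s(k) \le k - 1$, so in the natural regime $s(k) = k - 1$ the total runtime is $O(n^{s(k)})$. Any strict improvement $s(k) < k - 1$ would need to be propagated to the smaller quotients as well; this is standard since an efficient $\ghom{K_k - e}$ circuit can be restricted via variable substitutions to yield $\ghom{K_j - e}$ circuits for each $j \le k - 1$, after which the preceding argument may be applied recursively.
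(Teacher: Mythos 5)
Your overall route is the same as the paper's: combine the lemma $2\sub{H}\redto\ghom{K_k-e}$ with the Proposition at the start of the section to get uniform $O(n^{s(k)})$-size circuits for $\ghom{H}$, then express $\nsub{H}{G}$ as a fixed linear combination of homomorphism counts of $H$ and of graphs on fewer than $k$ vertices (your M\"obius inversion over the partition lattice is exactly the identity the paper cites from Curticapean--Dell--Marx). The factor $2$ and the division by $\naut{H}$ are handled correctly since everything is over characteristic zero.

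The one place you diverge is how the homomorphism counts for the smaller (quotient) graphs are obtained, and that is where your argument has a gap. You bound these terms by $O(n^{\tw{H/\pi}+1})\le O(n^{k-1})$ via Theorem~\ref{thm:diaz-full}, which only matches the claimed $O(n^{s(k)})$ when $s(k)\ge k-1$; you recognize this and propose restricting the $\ghom{K_k-e}$ circuit to $\ghom{K_j-e}$ for $j<k$ and recursing. But the recursion does not cover all quotients: $H/\pi$ can be a complete graph $K_j$ with $j<k$, and the lemma at level $j$ explicitly excludes $H'=K_j$, so $2\sub{K_j}\redto\ghom{K_j-e}$ is not available. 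The paper avoids this by routing \emph{every} graph $H'$ on fewer than $k$ vertices through $\ghom{K_k-e}$ directly: padded with isolated vertices, $H'$ is a subgraph of $K_k-e$ (which contains $K_{k-1}$), so Theorem~\ref{thm:supergraphs-full} gives $\naut{H'}\sub{H'}\redto\sub{K_k-e}\redto\ghom{K_k-e}$, and the same Proposition then yields $O(n^{s(k)})$-size circuits for $\ghom{H'}$ for all quotients, including complete ones. Replacing your stage (ii) and the recursive fix with this single application of Theorem~\ref{thm:supergraphs-full} closes the gap and gives the theorem for arbitrary $s(k)$.
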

\begin{proof}
  For all $k$-vertex $H\neq K_k$, we have $2\sub{H}\redto \ghom{K_k -
    e}$. For all $H$ on less than $k$ vertices, we have
  $\sub{H}\redto\ind{K_k}\redto\ghom{K_k - e}$. Therefore, for all
  graphs $H\neq K_k$ on at most $k$ vertices, we can construct
  $O(n^{s(k)})$ size circuits that compute $2\ghom{H}$. We know that
  the number of subgraph isomorphisms for $H$ can be expressed as a
  linear combination of the number of homomorphisms for $H$ and the
  number of homomorphisms for graphs on less than $k$ vertices
  \cite{CDM17}.
\end{proof}

\begin{theorem}
  \label{thm:Kk-e-full}
  If there are uniform $O(n^{s(k)})$ size circuits for $\ghom{K_k -
    e}$, then the induced subgraph isomorphism problem for all
  $k$-vertex pattern graphs except $K_k$ and $I_k$ have an
  $O(n^{s(k)})$ time algorithm.
\end{theorem}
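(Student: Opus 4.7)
The plan is to reduce induced-$H$ detection to multilinear-term detection on a circuit derived from $\ghom{K_k - e}$, by combining the inclusion--exclusion identity (\ref{eq:indsub-full}) with the preceding lemma $2\sub{H'} \redto \ghom{K_k - e}$. Since $H \neq K_k, I_k$ we have $|\mathit{Aut}(H)| < k!$, so $\nsub{H}{K_k} = k!/|\mathit{Aut}(H)| \geq 2$ and we may pick a prime $p$ dividing $\nsub{H}{K_k}$; whenever possible we take $p$ odd, with the power-of-two case deferred to the end. By (\ref{eq:indsub-full}) the coefficient of $\sub{K_k,n}$ in $\ind{H,n}$ is $\pm\nsub{H}{K_k} \equiv 0 \pmod{p}$, hence
\[
 \ind{H,n}(G) \;\equiv\; \sum_{\substack{H' \supg H \\ H' \neq K_k}} \mu_{H'}\, \sub{H',n}(G) \pmod{p},
\]
where $\mu_{H'} = (-1)^{e(H') - e(H)} \nsub{H}{H'}$.

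For each $H' \supg H$ with $H' \neq K_k$, the preceding lemma together with Proposition~\ref{prop:red-alg-full} applied to the assumed $O(n^{s(k)})$-size circuits for $\ghom{K_k - e}$ yields an $O(n^{s(k)})$-size circuit $P_{H'}$ with $\ml{P_{H'}(G)} = v_{[q]} \cdot 2\sub{H',n}(G)$ for a constant number $q$ of fresh auxiliary variables $v_1, \dots, v_q$. Differentiating $P_{H'}(G)$ once in each $v_i$ and then setting $v_1 = \dots = v_q = 0$ (a constant-factor circuit blowup) extracts a circuit $P'_{H'}$ whose multilinear-in-vertex-variables part equals $2\sub{H',n}(G)$. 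The linear combination $Q = \sum_{H' \neq K_k} \mu_{H'} \cdot 2^{-1} \cdot P'_{H'}$, with $2^{-1}$ the inverse of $2$ modulo the odd prime $p$, then gives an $O(n^{s(k)})$-size circuit with $\ml{Q} \equiv \ind{H,n}(G) \pmod{p}$; since the multilinear coefficients of $\ind{H,n}(G)$ lie in $\{0,1\}$, Theorem~\ref{thm:test:1-full} detects a nonzero multilinear term of $Q$ modulo $p$ in $O(n^{s(k)})$ randomized time, deciding induced-$H$ containment.

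The main obstacle is the case where $\nsub{H}{K_k} = 2^a$ is a pure power of $2$ (e.g.\ $H = K_{1,3}$ with $k = 4$), which forces $p = 2$ and clashes with the factor of $2$ on the left of $2\sub{H'} \redto \ghom{K_k - e}$. For such patterns the computation is lifted to arithmetic modulo $4$: the $\sub{K_k,n}$ contribution to the combined circuit is divisible by $2 \cdot 2^a = 2^{a+1} \geq 4$, so the multilinear part modulo $4$ is exactly $2\ind{H,n}(G) \pmod{4}$. Dividing by $2$ (well defined on the even residues $\{0,2\}$) and reducing modulo $2$ then recovers $\ind{H,n}(G) \pmod{2}$, which by the same multilinear-term detection argument decides induced-$H$ containment in $O(n^{s(k)})$ time.
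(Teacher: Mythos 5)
Your overall strategy is genuinely different from the paper's. The paper routes through \emph{exact integer counting}: from the circuits for $\ghom{K_k-e}$ it obtains circuits computing $2\ghom{H'}$ for every $k$-vertex $H'\neq K_k$, evaluates them to get exact homomorphism counts (the division by $2$ is harmless in $\mathbb{Z}$), converts these to exact subgraph-isomorphism counts following Curticapean, Dell, and Marx, reduces the induced count modulo a prime $p\mid \nsub{H}{K_k}$ via (\ref{eq:indsub-full}), and finally invokes the known random reduction of \cite{WWWY15} from the decision problem to counting modulo $p$. You instead stay at the polynomial level and apply multilinear-term detection (Theorem~\ref{thm:test:1-full}) to a mod-$p$ linear combination of substituted circuits, which is closer in spirit to Proposition~\ref{prop:red-alg-full}. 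For odd $p$ your argument is correct and self-contained: $2^{-1}$ exists modulo $p$, $\ml{\cdot}$ is linear so the non-multilinear junk and the $\sub{K_k}$ contribution both disappear where they should, the differentiation trick for stripping the auxiliary variables is the same one the paper uses elsewhere, and the $0/1$ coefficients of $\ind{H,n}(G)$ make ``nonzero modulo $p$'' equivalent to ``nonzero''.

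The gap is the power-of-two case ($\nsub{H}{K_k}=k!/\naut{H}=2^a$, e.g.\ $H=K_{1,3}$ for $k=4$), which does occur and which your modulo-$4$ paragraph does not actually handle. Two concrete problems. First, Theorem~\ref{thm:test:1-full} is stated, and its proof works, only for prime moduli; you would need a detection primitive that distinguishes multilinear coefficients $\equiv 2\pmod 4$ from $\equiv 0\pmod 4$, and you neither cite nor prove one. Second, ``dividing by $2$ and reducing modulo $2$'' is an operation on the coefficient vector of the polynomial, not on the circuit: you have no efficient access to the individual multilinear coefficients (extracting all of them costs $\Theta(n^k)$, which may exceed $O(n^{s(k)})$), so this step cannot be executed as described. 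This is precisely the division-by-$2$ obstruction the paper flags in its $P_4$ discussion. The paper sidesteps it by performing the division by $2$ on exact integer counts before any modular reduction; to repair your proof you would either have to fall back to that counting route when $p=2$, or prove a $\mathbb{Z}_4$ analogue of Theorem~\ref{thm:test:1-full} (plausible, since the coefficient of $y_{[k]}$ after the random substitution is $2$ times a mod-$2$ quantity and the usual analysis can be run over $\mathbb{Z}_2$, but this must be supplied, not asserted).
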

\begin{proof}
  We will show how to decide induced subgraph isomorphism for $H\neq
  K_k$ in $O(n^{s(k)})$ time. Now, choose a prime $p$ such that $p$
  divides the number of occurences of $H$ in $K_k$. The number of
  induced subgraph isomorphisms modulo $p$ for $H$ can be expressed as
  a linear combination of the number of subgraph isomorphisms modulo
  $p$ of $k$-vertex graphs except $K_k$ and can be computed in
  $O(n^{s(k)})$ time. It is known that the induced subgraph
  isomorphism problem for $H$ is randomly reducible to this problem
  \cite{WWWY15}.
\end{proof}

\begin{theorem}
  \label{thm:Kk-e-counting-full}
  If there are uniform $O(n^{s(k)})$ size circuits for $\ghom{K_k -
    e}$ and if there is an $O(t(n))$ time algorithm for counting the
  number of induced subgraph isomorphisms for a $k$-vertex pattern
  $H$, then the number of induced subgraph isomorphisms for all
  $k$-vertex patterns can be computed in $O(n^{s(k)} + t(n))$ time on
  $n$-vertex graphs.
\end{theorem}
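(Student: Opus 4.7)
The plan is to combine the preceding theorem on subgraph counting with Equation~\ref{eq:indsub-full} to reduce the entire problem to filling in a single missing quantity, namely the number of $k$-cliques in $G$.

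First I would invoke the preceding theorem (the one stated immediately before Theorem~\ref{thm:Kk-e-full}) to compute $\sub{H',n}(G)$ in $O(n^{s(k)})$ time for every $k$-vertex pattern $H' \neq K_k$. The same machinery, combined with $\sub{H''} \redto \ghom{K_k - e}$ for graphs on fewer than $k$ vertices, also yields $\sub{H'',n}(G)$ for every $H''$ on at most $k-1$ vertices. After this step the only unknown subgraph-count is $\sub{K_k,n}(G)$.

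Next I would apply the given $O(t(n))$-time algorithm once to obtain $\ind{H,n}(G)$. If $H = K_k$, then Equation~\ref{eq:indsub-full} collapses to $\ind{K_k,n} = \sub{K_k,n}$ (since $K_k$ has no proper $k$-vertex supergraph), so the clique count is already in hand. If instead $H \neq K_k$, then Equation~\ref{eq:indsub-full} expresses $\ind{H,n}(G)$ as a linear combination of the already-computed quantities $\sub{H',n}(G)$ together with $\sub{K_k,n}(G)$, in which the coefficient of the unknown is $(-1)^{\binom{k}{2} - e(H)} \nsub{H}{K_k}$. Because every $k$-vertex graph $H$ sits inside $K_k$ at least once, this coefficient is nonzero, so the equation can be solved for $\sub{K_k,n}(G)$ in $O(1)$ additional time.

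Finally, with $\sub{H',n}(G)$ now available for \emph{every} $k$-vertex pattern $H'$, I would apply Equation~\ref{eq:indsub-full} once more, with $H$ replaced by an arbitrary $k$-vertex pattern $H''$, to express $\ind{H'',n}(G)$ as a constant-size linear combination of the known values $\sub{H',n}(G)$. Since the combinatorial coefficients $\nsub{H''}{H'}$ depend only on the constant $k$, each $\ind{H'',n}(G)$ costs $O(1)$ additional time, for a total running time of $O(n^{s(k)} + t(n))$. The only place where something could plausibly go wrong is Step~2, where one must verify that the coefficient of $\sub{K_k,n}$ in the expansion of $\ind{H,n}$ is nonzero; but this reduces to the elementary observation $\nsub{H}{K_k} \geq 1$ for every $k$-vertex $H$, so there is no real obstacle.
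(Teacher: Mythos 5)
Your proposal is correct and follows essentially the same route as the paper: compute $\sub{H'}(G)$ for all $H'\neq K_k$ via the preceding theorem, use the given $O(t(n))$-time induced count of $H$ together with Equation~\ref{eq:indsub-full} (whose $\sub{K_k}$-coefficient is nonzero since $\nsub{H}{K_k}\geq 1$) to solve for the clique count, and then recover all induced counts. Your write-up is in fact slightly more explicit than the paper's, which leaves the final inversion step implicit.
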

\begin{proof}
 We know that $i_H = \sum_{H'\supg H} a_{H'} n_{H'}$, where all
 $a_{H'}\neq 0$, $i_H$ is the number of induced subgraph isomorphisms
 from $H$ to $G$ and $n_H$ is the number of subgraph isomorphisms from
 $H$ to $G$. Furthermore, we can compute $n_{H'}$ for all $H'\neq K_k$
 in $O(n^{s(k)})$ time. Therefore, if we can compute $i_H$ in $t(n)$
 time, we can compute $n_{K_k}$ in $O(n^{s(k)} + t(n))$ time.
\end{proof}

The following corollary follows by observing that $\tw{K_k - e} =
k-2$.

\begin{corollary}
  All $k$-vertex pattern graphs except $K_k$ and $I_k$ have an
  $O(n^{k-1})$ time combinatorial algorithm for deciding induced
  subgraph isomorphism on $n$-vertex graphs.
\end{corollary}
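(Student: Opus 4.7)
The plan is to invoke Theorem~\ref{thm:Kk-e-full} with $s(k) = k - 1$, so the task reduces to exhibiting uniform $O(n^{k-1})$-size arithmetic circuits for $\ghom{K_k - e}$ that are constructed without fast matrix multiplication. The natural route goes through Theorem~\ref{thm:diaz-full}, which gives $O(n^{\tw{H}+1})$-size circuits for $\ghom{H}$; hence it suffices to show $\tw{K_k - e} = k - 2$.

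For the upper bound $\tw{K_k - e} \le k - 2$, I would exhibit an explicit path decomposition with just two bags. Let $e = \{u, v\}$ be the deleted edge. Take bags $B_1 = V(K_k) \setminus \{u\}$ and $B_2 = V(K_k) \setminus \{v\}$, each of size $k - 1$. Every edge of $K_k - e$ is either not incident to $u$ (hence contained in $B_1$) or not incident to $v$ (hence contained in $B_2$); the removed edge $\{u, v\}$ is the only pair missing from both bags, and it is not an edge of $K_k - e$. The running intersection property is trivial since each vertex appears in a contiguous prefix or suffix of $(B_1, B_2)$. Thus $\tw{K_k - e} \le k - 2$. The matching lower bound is immediate: $K_k - e$ contains $K_{k-1}$ as a subgraph (take any $k-1$ vertices including at most one of $u, v$), and treewidth is monotone under subgraphs with $\tw{K_{k-1}} = k - 2$.

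Plugging $\tw{K_k - e} = k - 2$ into Theorem~\ref{thm:diaz-full} yields uniform $O(n^{k-1})$-size circuits for $\ghom{K_k - e}$, and the construction underlying that theorem is a standard dynamic programming along a tree decomposition, which is manifestly combinatorial. Applying Theorem~\ref{thm:Kk-e-full} with $s(k) = k - 1$ then gives an $O(n^{k-1})$-time algorithm for induced subgraph isomorphism for every $k$-vertex pattern other than $K_k$ and $I_k$.

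The only thing worth checking carefully, rather than obstructing, is that the entire pipeline stays combinatorial: the circuit for $\ghom{K_k-e}$ is combinatorial by the above, the substitution machinery of $\redto$ only rewires wires, and the multilinear-term detection step of Theorem~\ref{thm:test:1-full} runs in linear time in the circuit size without any matrix multiplication. So no fast matrix multiplication is invoked anywhere in the reduction, and the corollary follows.
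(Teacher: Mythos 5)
Your proposal is correct and follows essentially the same route as the paper, whose entire proof is the one-line observation that $\tw{K_k - e} = k-2$, implicitly combined with Theorem~\ref{thm:diaz-full} and Theorem~\ref{thm:Kk-e-full}. You merely flesh out the treewidth computation (valid two-bag path decomposition plus the $K_{k-1}$-subgraph lower bound) and verify that the pipeline avoids fast matrix multiplication, which is exactly what the paper intends.
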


\begin{corollary}
  For $k\in\{4,5,6,7,8\}$, the induced subgraph isomorphism problem
  for any $k$-vertex pattern graph $H$ except $K_k$ and $I_k$ can be
  decided faster than currently known best clique algorithms.
\end{corollary}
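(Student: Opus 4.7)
By Theorem \ref{thm:Kk-e-full}, it suffices, for each $k \in \{4, 5, 6, 7, 8\}$, to exhibit uniform arithmetic circuits for $\ghom{K_k - e}$ of size $O(n^{s(k)})$ with $s(k)$ strictly smaller than the exponent of the currently best known $K_k$-detection algorithm (the Ne\v{s}et\v{r}il--Poljak/Eisenbrand--Grandoni bound for $K_k$). The structural fact to exploit is that $K_k - e$ consists of a $(k-2)$-clique $C$ together with two vertices $u, v$, non-adjacent to each other but both fully adjacent to $C$. So a homomorphism of $K_k - e$ into $K_n$ is a homomorphism of $C$ together with two \emph{independent} vertex images lying in the common neighbourhood of the image of $C$.

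This decomposition fits the rectangular matrix-multiplication template already used in Lemma \ref{lem:HomH3kcircuits-full}. The plan is: (i) build the homomorphism polynomial of the central $K_{k-2}$ via a Ne\v{s}et\v{r}il--Poljak-style rectangular product, keeping two ``anchor'' vertices of $C$ as the row/column indices of the outermost product; (ii) for each such anchor pair, attach $u$ and $v$ independently using a single common-neighbourhood contraction applied twice. Because $u$ and $v$ do not interact with each other, the circuit never has to pay for the full $K_k$-style coupling, and this is precisely where the savings over the best known $K_k$ algorithm come from.

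For $k = 4$ this collapses to essentially one matrix square $A^2$ followed by a Hadamard contraction along the edge $\{c, d\}$, yielding a circuit of size $O(n^\omega)$ for $\ghom{K_4 - e}$, which is strictly better than the $O(n^{\omega(2,1,1)})$ algorithm for $K_4$. For $k = 5, 6, 7, 8$ the same template, instantiated with the appropriate rectangular dimensions for the core clique $K_{k-2}$ and one additional contraction for the $u, v$ attachment, yields exponents $\omega(p, q, r)$ whose tuples are componentwise no larger, and strictly smaller in at least one coordinate, than the tuples appearing in the Eisenbrand--Grandoni algorithm for $K_k$; combined with the currently known bound $\omega > 2$ this makes the inequality strict.

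The main obstacle is the case analysis: verifying for each $k \in \{4, \ldots, 8\}$ that the specific dimensions of the rectangular matrix products in our construction of $\ghom{K_k - e}$ give a numerically smaller exponent than the current upper bound for $K_k$-detection, and checking that the construction is uniform in the sense required by Theorem \ref{thm:Kk-e-full}. Once this bookkeeping is complete, Theorem \ref{thm:Kk-e-full} delivers the induced subgraph isomorphism algorithms as a black box for every $k$-vertex pattern other than $K_k$ and $I_k$.
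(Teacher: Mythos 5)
Your proposal matches the paper's proof: the paper likewise invokes Theorem~\ref{thm:Kk-e-full} after asserting that $\ghom{K_k-e}$ has uniform circuits of size $O(n^{\omega(\lceil \frac{k-2}{2}\rceil,1,\lfloor\frac{k-2}{2}\rfloor)})$, built by the same rectangular fast-matrix-multiplication template used elsewhere in the paper, and it leaves the per-$k$ numerical comparison with the Eisenbrand--Grandoni clique exponents just as implicit as you do. The only cosmetic quibble is that your appeal to ``$\omega>2$'' is unnecessary for strictness, since the comparison is between currently known upper bounds (e.g.\ $\omega(2,1,1)\ge 3 > 2.373 \ge \omega$ already settles $k=4$).
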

\begin{proof}
  The polynomial family $\ghom{K_k - e}$ can be computed by uniform
  arithmetic circuits of size
  $O(n^{\omega(\lceil \frac{k-2}{2} \rceil, 1, \lfloor \frac{k-2}{2}
    \rfloor)})$ for all $k$. The construction is similar to the other
  constructions for homomorphism polynomials using fast matrix
  multiplication in this paper.
\end{proof}

\section{Reductions between patterns}

The following proposition is analogous to the obvious fact that the
complexity of the induced subgraph isomorphism problem is the same for
any pattern and its complement.

\begin{proposition}
  $\ind{H}\redto \ind{\overline{H}}$ for all graphs $H$.
\end{proposition}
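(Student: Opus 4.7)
The plan is to use a direct substitution-family reduction based on the complement identity already observed in the preliminaries: replacing every edge variable $x_e$ by $1-x_e$ turns $\ind{\overline{H},n}$ into $\ind{H,n}$. First I would define $\sigma=(\sigma_n)$ by letting $\sigma_n$ act as the identity on every vertex variable $y_v$ and sending each edge variable $x_e$ to $1-x_e$, with $m(n)=n$ and $q=0$ in the sense of Definition~\ref{def:reduc-full}. I would then verify that this meets Definition~\ref{def:subfam-full}: the vertex rule holds trivially via the identity, and the edge rule holds because $1-x_e$ has a constant-size circuit, contains the single edge variable $x_e$, and involves no vertex variables. Constant-time computability is immediate.

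Next, I would record that $\ind{\overline{H},n}$ and $\ind{H,n}$ are already multilinear: for any fixed injective $\phi$ appearing in the defining sum of (\ref{eq:ind-full}), the contribution $y_{\phi(V(H))}\,x_{\phi(E(H))}\prod_{e\not\in E(H)}(1-x_{\phi(e)})$ uses only pairwise distinct vertex variables and, once expanded, only pairwise distinct edge variables (one per unordered pair on $\phi(V(H))$); summing over $\phi$ cannot raise any variable's degree. Hence $\ml{\ind{\overline{H},n}}=\ind{\overline{H},n}$ and $\ml{\ind{H,n}}=\ind{H,n}$.

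The main calculation is then a one-line unfolding: applying $\sigma_n$ to the definition of $\ind{\overline{H},n}$ sends each $x_{\phi(e)}$ with $e\in E(\overline{H})$ to $1-x_{\phi(e)}$ and each factor $1-x_{\phi(e)}$ with $e\notin E(\overline{H})$ to $x_{\phi(e)}$; using $V(\overline{H})=V(H)$ and $E(\overline{H})=\binom{V(H)}{2}\setminus E(H)$, the resulting sum is exactly $\ind{H,n}$. Combined with multilinearity this gives $\ml{\sigma_n(\ind{\overline{H},n})}=\ind{H,n}=\ml{\ind{H,n}}$, verifying Property~2 of Definition~\ref{def:reduc-full} with $q=0$, while $\sigma_n(\ml{\ind{\overline{H},n}})=\ind{H,n}$ is itself a graph pattern polynomial, giving Property~1.

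There is no real obstacle here; the argument merely formalises the parenthetical remark made immediately after equation~(\ref{eq:ind-full}). The only point that deserves a moment of care is checking that the substitution $x_e\mapsto 1-x_e$ meets the structural constraints on edge substitutions in Definition~\ref{def:subfam-full}, which it does because $1-x_e$ is a constant-size polynomial containing exactly one edge variable and no vertex or other variables.
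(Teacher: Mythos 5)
Your proposal is correct and uses exactly the substitution the paper uses ($x_e\mapsto 1-x_e$, identity on vertex variables, $q=0$); the paper's proof is a one-line appeal to the complementation remark after equation~(\ref{eq:ind-full}), and you simply carry out the routine verification of Definitions~\ref{def:subfam-full} and~\ref{def:reduc-full} in full.
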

\begin{proof}
  Use the substitution that maps $x_e$ to $1-x_e$ for any edge
  variable $x_e$ and maps any vertex variable to itself.
\end{proof}

It is known that $\naut{H} = 1$ for almost all graphs $H$. Therefore,
the following proposition can be interpreted as stating that the
homomorphism polynomial is harder than the subgraph isomorphism
polynomial for almost all pattern graphs $H$. This is used in
\cite{FominLRSR12} to obtain algorithms for subgraph isomorphism
problems.

\begin{proposition}
  $\naut{H}\sub{H} \redto \ghom{H}$ for all graphs $H$.
\end{proposition}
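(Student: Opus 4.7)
The plan is to construct a substitution family $\sigma$ that collapses each homomorphism-tracking variable $z_{a,i}$ down to a single fresh other variable $w_a$ depending only on the vertex $a \in V(H)$, while acting as the identity on the $y$- and $x$-variables. Fix a labelling $V(H) = [k]$ and introduce fresh other variables $w_1,\dots,w_k$. Define $\sigma_n$ by $\sigma_n(y_i) = y_i$, $\sigma_n(x_e) = x_e$, and $\sigma_n(z_{a,i}) = w_a$ for all $a \in [k]$ and $i \in [n]$. Verifying the three conditions of Definition~\ref{def:subfam-full} (in particular, each $z_{a,i}$ maps to a degree-$1$ monomial consisting only of a fresh other variable) and constant-time computability is immediate.

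Under this substitution every homomorphism $\phi$ contributes the same $z$-factor $w_{[k]} := w_1 \cdots w_k$, so
\[
  \sigma_n(\ghom{H,n}) \;=\; w_{[k]} \sum_{\phi \colon H \hommap K_n} \prod_{a \in V(H)} y_{\phi(a)} \prod_{e \in E(H)} x_{\phi(e)}.
\]
The key observation is that a summand is multilinear precisely when $\phi$ is injective: the $w_a$ are distinct by construction, and since $H$ is simple, injectivity of $\phi$ on $V(H)$ is equivalent to all of $\{y_{\phi(a)}\}_a$ being distinct (which in turn already forces the $\{x_{\phi(e)}\}_e$ to be distinct). Non-injective $\phi$ duplicate some $y_{\phi(a)}$ and thus contribute non-multilinear terms.

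Next I would invoke the standard orbit-counting observation that each labelled $H$-subgraph of $K_n$ arises as the image of exactly $\naut{H}$ injective homomorphisms from $H$, which gives
\[
  \sum_{\phi\text{ injective hom}} \prod_{a} y_{\phi(a)} \prod_{e} x_{\phi(e)} \;=\; \naut{H}\, \sub{H,n},
\]
and hence $\ml{\sigma_n(\ghom{H,n})} = w_{[k]}\, \naut{H}\, \sub{H,n}$. Since $\sub{H,n}$ is already multilinear, this is exactly Property~2 of Definition~\ref{def:reduc-full} with $m(n)=n$, $q=k$, and $v_i = w_i$. Property~1 also follows: $\ml{\ghom{H,n}}$ is supported precisely on injective-homomorphism monomials (the $z_{a,\phi(a)}$ are automatically distinct for a fixed $\phi$, so the remaining constraint is again injectivity of $\phi$), so $\sigma_n(\ml{\ghom{H,n}})$ equals the same multilinear polynomial $w_{[k]}\,\naut{H}\,\sub{H,n}$ and is trivially a graph pattern polynomial. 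I do not foresee any real obstacle in carrying this out; the only bookkeeping needed is to confirm that $\sigma$ and $\ml{\cdot}$ interact as claimed, which is immediate since the $w_a$ contribute a fixed multilinear factor independent of $\phi$.
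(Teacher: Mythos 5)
Your proposal is correct and matches the paper's proof essentially verbatim: the paper uses the same substitution $\sigma(z_{a,v}) = u_a$ (your $w_a$), the identity on vertex and edge variables, and the same orbit-counting fact that each labelled $H$-subgraph arises from exactly $\naut{H}$ injective homomorphisms, yielding $\ml{\sigma(\ghom{H})} = \naut{H}\,u_{[k]}\,\ml{\sub{H}}$. No substantive differences.
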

\begin{proof}
  Let $H$ be a $k$ vertex graph labelled using $[k]$. Use the
  substitution $\sigma(z_{a, v}) = u_a$ for all $a\in V(H), v\in V(G)$
  and $\sigma(w) = w$ for all the other variables $w$ in $\ghom{H}$
  over the vertex set $[n]$. We have $\naut{H}.u_{[k]}.\ml{\sub{H}} =
  \ml{\sigma(\ghom{H})} = \sigma(\ml{\ghom{H}})$. Consider an
  arbitrary automorphism $\phi$ of $H$. For every monomial $m =
  y_{v_1}\dotso y_{v_k}x_{e_1}\dotso x_{e_\ell}$ in $\sub{H}$, there
  are exactly $\naut{H}$ monomials $m_{\phi} = z_{(\phi(1),
    v_1)}\dotso z_{(\phi(k), v_k)}y_{v_1}\dotso y_{v_k}x_{e_1}\dotso
  x_{e_\ell}$ in $\ghom{H}$ that satisfy $\sigma(m_\phi) =
  u_{[k]}m$. This proves Properties~1 and 2 of the reduction. It is
  easy to see that the reduction satisfies the other properties too.
\end{proof}

Intuitively, the subgraph isomorphism problem should become harder
when the pattern graph becomes larger. However, it is not known
whether this is the case. Nevertheless, we can show this hardness
result holds for subgraph isomorphism polynomials for almost all
pattern graphs.

\begin{theorem}
  If $H\subg H'$, then $\naut{H}\sub{H} \redto \sub{H'}$.
  \label{thm:supergraphs-full}
\end{theorem}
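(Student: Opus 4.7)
The plan is to mirror the substitution template of Lemma~\ref{lem:NH3kHomH3k-full} and Lemma~\ref{lem:path-hom-full}. Fix a labelling so that $V(H) = [k] \subseteq [k'] = V(H')$ with $E(H) \subseteq E(H')$. I take $m = nk + (k' - k)$ and view $\sub{H',m}$ as being over the vertex set $([n]\times[k]) \cup \{w_{k+1},\dots,w_{k'}\}$: a vertex $(v,a)$ represents host vertex $v$ ``coloured'' by role $a \in V(H)$, and $w_j$ is a fresh ``slot'' that will play role $j \in V(H') \setminus V(H)$. I then define $\sigma$ on vertex variables by $\sigma(y_{(v,a)}) = y_v z_a$ and $\sigma(y_{w_j}) = u_j$ for fresh other variables $z_1,\dots,z_k,u_{k+1},\dots,u_{k'}$; on edge variables, $\sigma(x_{\{(v,a),(u,b)\}}) = x_{\{v,u\}}$ when $\{a,b\} \in E(H)$, $\sigma(x_{\{(v,a),(u,b)\}}) = 1$ when $a \neq b$ and $\{a,b\} \in E(H') \setminus E(H)$, $\sigma(x_{\{(v,a),w_j\}}) = 1$ when $\{a,j\} \in E(H')$, $\sigma(x_{\{w_i,w_j\}}) = 1$ when $\{i,j\} \in E(H')$, and $0$ in every other case.

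Next I would analyse $\ml{\sigma_m(\sub{H',m})}$. Each monomial of $\sub{H',m}$ comes from an equivalence class of subgraph isomorphisms $\phi : H' \to K_m$ (representatives agreeing on the image edge set). Associate to $\phi$ the \emph{role map} $\lambda : V(H') \to [k']$ that sends $p$ to the colour of $\phi(p)$ if $\phi(p)$ is a pair, and to $j$ if $\phi(p) = w_j$. The edge-variable rules force $\{\lambda(p),\lambda(q)\} \in E(H')$ for every $\{p,q\} \in E(H')$, i.e., $\lambda$ is a graph homomorphism $H' \to H'$. The $y$-substitution contributes a factor $z_{\lambda(p)}$ or $u_{\lambda(p)}$ per vertex of $H'$, so multilinearity in the $z$'s and $u$'s forces $\lambda$ to be a bijection, and then, since $|E(H')| = |\lambda(E(H'))|$, an automorphism of $H'$.

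It follows that only ``nice'' vertex sets $V_0 = \{(v_1,1),\dots,(v_k,k),w_{k+1},\dots,w_{k'}\}$ contribute to the multilinear part. For such a $V_0$ the colour map $c : V_0 \to [k']$ is a bijection, and every bijection $\psi : [k'] \to V_0$ with $c\circ\psi \in \mathrm{Aut}(H')$ satisfies $\psi(E(H')) = c^{-1}(E(H'))$, so all $\naut{H'}$ such $\psi$'s describe the same subgraph-isomorphism equivalence class and contribute a single monomial in $\sub{H',m}$. A short calculation shows that $\sigma$ maps the surviving image-edge product to $\prod_{\{i,j\} \in E(H)} x_{\{v_i,v_j\}}$, since edges of $H'$ that lie in $E(H') \setminus E(H)$, that touch a slot, or that live entirely among slots are all sent to $1$. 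Summing over $(v_1,\dots,v_k) \in [n]^k$ (tuples with a repeat give non-multilinear $y$-factors and die in $\ml{\cdot}$), each labelled $H$-subgraph of $K_n$ is produced by exactly $\naut{H}$ ordered realisations, giving $\ml{\sigma_m(\sub{H',m})} = z_{[k]}\, u_{\{k+1,\dots,k'\}} \cdot \naut{H} \cdot \sub{H,n}$, as required by Definition~\ref{def:reduc-full} with $q = k' = O(1)$ and $m = O(n)$. Finally $\sigma_m(\sub{H',m})$ is a graph pattern polynomial because every non-multilinear monomial carries a repeated $z$-, $y$-, or $u$-variable, all of which are non-edge variables, and $\sigma$ is clearly constant-time computable.

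The main obstacle is the combinatorial bookkeeping of the previous paragraph: one has to carefully handle the interplay between the ``same image edge set'' identification used in the definition of $\sub{H'}$, the $\naut{H'}$ role-permutations of a nice $V_0$ that all describe the same monomial in $\sub{H',m}$, and the $\naut{H}$ ordered realisations of each $H$-subgraph of $K_n$, so that the final multiplicity lands exactly on $\naut{H}$ rather than on $\naut{H'}$, $\naut{H}\cdot\naut{H'}$, or some other combinatorial factor.
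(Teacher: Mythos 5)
Your proposal is correct and is essentially the paper's own proof: the same coloured vertex set $([n]\times[k])$ plus dedicated slots for $V(H')\setminus V(H)$, the same substitution sending edge variables to $x_{\{u,v\}}$, $1$, or $0$ according to whether the colour pair lies in $E(H)$, $E(H')\setminus E(H)$, or neither, and the same final count of $\naut{H}$ preimage monomials per labelled $H$-subgraph. Your intermediate bookkeeping (role maps being automorphisms of $H'$, and the $\naut{H'}$-fold identification inside each monomial of $\sub{H'}$) is a slightly more explicit rendering of what the paper argues directly about surviving monomials, but it is the same argument.
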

\begin{proof}
  Let $|V(H)| = k$ and $|V(H')| = k + \ell$ for some $\ell \ge
  0$. Choose a labelling $L$ of the vertices of $H'$ such that the
  vertices of an $H$ in $H'$ are labelled $1,\dotsc, k$. Consider the
  polynomial $\sub{H'}$ over the vertex set $([n]\times [k]) \cup
  \{(n+i, k+i) : 1\le i\le \ell\}$. Substitute for the variables as
  follows:

  \setcounter{equation}{0}
  \begin{align}
    \sigma(y_{(i, p)}) &= \begin{cases}
      y_i u_p, &\text{ for all } i\in[n], p\in[k]\\
      u_p, &\text{ otherwise}
    \end{cases}\\
    \sigma(x_{\{(i_1, p_1), (i_2, p_2)\}})
    &=
    \begin{cases}
      x_{\{i_1, i_2\}} & \quad \text{if } \{p_1, p_2\}\in E(H)\\
      1 & \quad \text{if } \{p_1, p_2\}\in E(H')\setminus E(H)\\
      0 & \quad \text{otherwise}
    \end{cases}
  \end{align}

  We say that a monomial in $\sub{H'}$ \emph{survives} if the monomial
  does not become non-multilinear or 0 after the substitution. First,
  we will prove that all surviving monomials correspond to
  $H'$-subgraphs where the labels and colours of vertices are
  different and the colours of edges are the same as in the labelling
  $L$. Rule~1 ensures that the colours and labels of all vertices in
  the surviving monomials are different. Rule~2 ensures that there is
  a one-to-one correspondence between the edges $\{p_1, p_2\}$ in the
  labelling $L$ and the edge variables
  $x_{\{(i_1, p_1), (i_2, p_2)\}}$. To see this, observe that each
  monomial in $\sub{H'}$ has $|E(H')|$ edge variables. Since all
  vertices in a surviving monomial have different colours, all edges
  in the monomial must have different colours. Since any edge variable
  that has a colour not in the labelling $L$ is set to $0$, the
  colours of edges must be in one-to-one correspondence with the edges
  in the labelling $L$. This proves the all surviving monomials are of
  the form
  $y_{(u_1, 1)}\dotsm y_{(u_k, k)} (\prod_i y_{(n+i, k+i)}) x_{(e_1,
    q_1)} \dotsm x_{(e_m, q_m)} w$ for $u_1, \dotsc ,u_k\in [n]$,
  where $w$ is the product of edge variables with colour $\{p, q\}$
  such that $\{p, q\}$ is an edge in $H'$ but not in $H$ in the
  labelling $L$, $u_1, \dotsc, u_k$ are all different, and
  $q_1, \dotsc ,q_m$ are edges in $H$ in the labelling $L$. Note that
  the product $w$ is determined uniquely by $u_1,\dotsc,u_k$.

  We claim that for each monomial $y_{S} x_{T}$ in $\sub{H}$ over the
  vertex set $[n]$ there are $\naut{H}$ monomials $y_{S} x_{T}
  u_{[k]}$ in $\sigma(\sub{H'})$. Consider an arbitrary monomial $y_S
  x_T = y_{v_1}\dotsm y_{v_k} x_{e_1}\dotsm x_{e_m}$ in $\sub{H}$
  where $m = |E(H)|$. The monomials in $\sub{H'}$ that yield $y_S x_T
  u_{[k+\ell]}$ after the substitution are exactly the monomials $y_{(w_1,
    1)}\dotsm y_{(w_k, k)} (\prod_i y_{(n+i, k+i)}) x_{(e_1', q_1)}
  \dotsm x_{(e_m', q_m)} w$ where $w$ is the product of edge variables
  with colour $\{p, q\}$ such that $\{p, q\}$ is an edge in $H'$ but
  not in $H$ in the labelling $L$, $\{w_1, \dotsc, w_k\} =
  \{v_1,\dotsc, v_k\}$, and $\{e_1, \dotsc ,e_m\} = \{e_1',\dotsc,
  e_m'\}$. But this monomial corresponds to the automorphism $\phi :
  v_i\mapsto w_i$. Since $w$ is uniquely determined given
  $w_1,\dotsc,w_k$, the number of such monomials is $\naut{H}$. Also,
  each surviving monomial yields a monomial in $\sub{H}$.

  Additionally, each non-multilinear term in the polynomial obtained
  after the substitution contains at least one vertex or other
  variable with degree more than one. This proves the theorem.
\end{proof}

The following theorem states that the induced subgraph isomorphism
polynomial is harder than the subgraph isomorphism polynomial for
almost all graphs.

\begin{theorem}
  \label{thm:inducedHarderThanNonInduced-full}
  $\naut{H}\sub{H} \redto \ind{H}$ for all graphs $H$.
\end{theorem}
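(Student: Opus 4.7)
The plan is to adapt the symmetry-breaking strategy of Theorem~\ref{thm:supergraphs-full} to the induced-subgraph setting. Fix a labelling of $V(H)$ by $[k]$ and start from $\ind{H,nk}$ on the product vertex set $[n]\times[k]$. Introduce $k$ fresh book-keeping variables $u_1,\dots,u_k$ (which will collectively play the role of $v_{[q]}$ in Definition~\ref{def:reduc-full}) and define the substitution
\[
  \sigma(y_{(i,c)}) = y_i u_c,\qquad
  \sigma(x_{\{(i_1,c_1),(i_2,c_2)\}}) =
  \begin{cases}
    x_{\{i_1,i_2\}} & \text{if } \{c_1,c_2\}\in E(H),\\
    0 & \text{otherwise}.
  \end{cases}
\]

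Then I would analyse which terms of $\ind{H,nk}$ survive substitution as multilinear monomials. Each term is indexed by a pair $(S,E')$ where $S\subseteq[n]\times[k]$ has size $k$ and $E'$ is an $H$-copy on $S$, and contributes $y_S\prod_{e\in E'}x_e\prod_{e\in\binom{S}{2}\setminus E'}(1-x_e)$. For the substituted contribution to be multilinear in $u$ the $k$ elements of $S$ must carry pairwise distinct colours, so $S=\{(i_c,c):c\in[k]\}$ for some tuple $(i_1,\dots,i_k)\in[n]^k$; for multilinearity in $y$ this tuple must additionally be injective. For non-vanishing every edge in $E'$ must sit on a colour pair inside $E(H)$, and since $|E'|=|E(H)|$ and each colour pair contributes at most one candidate edge, $E'$ must coincide with the image of $E(H)$ under the colour labelling. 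Every remaining $(1-x_e)$ factor then has colour pair outside $E(H)$ and collapses to $1$ under the substitution.

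This leaves, for each injective tuple $(i_1,\dots,i_k)\in[n]^k$, a single surviving monomial $u_{[k]}\,y_{i_1}\cdots y_{i_k}\prod_{\{c,c'\}\in E(H)} x_{\{i_c,i_{c'}\}}$. Summing over all injective tuples yields $u_{[k]}\,\naut{H}\,\sub{H,n}$, because each distinct $H$-subgraph of $K_n$ arises from precisely $\naut{H}$ such tuples. This establishes $\ml{\sigma(\ind{H,nk})}=u_{[k]}\,\naut{H}\,\sub{H,n}$ and hence the reduction $\naut{H}\sub{H}\redto\ind{H}$. The graph-pattern-polynomial condition on $\sigma(\ind{H,nk})$ is easily verified: any non-multilinear surviving term has either a repeated colour (some $u_c$ of degree at least two) or a repeated $[n]$-coordinate across two distinct colours (some $y_i$ of degree at least two), so it always contains a non-edge variable of degree exceeding one.

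The main subtlety is orbit bookkeeping. A naive sum over all injective homomorphisms $V(H)\to[n]\times[k]$ would overcount by a factor of $\naut{H}$ and yield the looser-looking identity $\naut{H}^2\sub{H}\redto\ind{H}$; the saving feature is that $\ind{H,nk}$ is already indexed by $Aut(H)$-orbits of injective maps rather than by all injective maps, so the symmetry has been quotiented out exactly once before any substitution is applied, which is precisely what makes the final coefficient come out to $\naut{H}$ and not $\naut{H}^2$.
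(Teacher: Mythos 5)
Your proof is correct and takes essentially the same route as the paper's: the identical substitution on the coloured vertex set $[n]\times[k]$, with edge variables whose colour pair lies outside $E(H)$ sent to $0$ so that only colour-aligned copies of $H$ itself survive, each labelled $H$-subgraph of $K_n$ arising from exactly $\naut{H}$ surviving terms. The paper merely phrases the vanishing of the extra terms via the expansion $\ind{H}=\sub{H}+\sum_{H'\psupg H}a_{H'}\sub{H'}$ and defers the counting to the argument of Theorem~\ref{thm:supergraphs-full}, whereas you carry both steps out directly (and your closing remark on why the coefficient is $\naut{H}$ rather than $\naut{H}^2$ is a correct reading of the paper's convention that induced subgraph isomorphisms are identified when their images coincide).
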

\begin{proof}
  Observe that $\ind{H} = \sub{H} + \sum_{H' \psupg H}
  a_{H'}\sub{H'}$. Let $k$ be the number of vertices in $H$ and fix
  some labelling of $H$ using $[k]$. Now consider the polynomial
  $\ind{H}$ over the vertex set $[n] \times [k]$ and apply the
  following substitution.

  \setcounter{equation}{0}
  \begin{align}
    \sigma(y_{(i, p)}) &= y_iu_p\\
    \sigma(x_{\{(i_1, p_1), (i_2, p_2)\}}) &= 
    \begin{cases}
      x_{\{i_1, i_2\}} & \text{ if $\{p_1, p_2\}\in E(H)$}\\
      0 & \text{ otherwise}
    \end{cases}
  \end{align}

  Now observe that any monomial in $\sub{H'}$ for $H' \psupg H$ must
  vanish because it will have at least one more edge than $H$. By the
  same argument as in the proof of Theorem~\ref{thm:supergraphs-full}, we
  conclude that there are exactly $\naut{H}$ monomials in $\sub{H}$
  over $[n] \times [k]$ that yield the monomial $y_Sx_Tu_{[k]}$ after
  the substitution for any monomial $y_Sx_T$ in $\sub{H}$ over $[n]$.
\end{proof}

We now prove the analogue of Theorem~{\ref{thm:np-universal-full}} in
\cite{NP85} which states that $k$-clique is harder than any other
$k$-vertex pattern graph.

\begin{theorem}
  For any $k$-vertex graph $H$, $\ind{H}\redto \ind{K_k}$.
  \label{thm:ind_graph_to_clique-full}
\end{theorem}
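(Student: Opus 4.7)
The plan is to realize $\ind{H}$ as a substitution instance of $\ind{K_k}$, mirroring the structure of Theorems~\ref{thm:supergraphs-full} and \ref{thm:inducedHarderThanNonInduced-full}. Fix a labelling of $H$ on the colour set $[k]$ and work with $\ind{K_k}$ over the vertex set $[n] \times [k]$. Since $K_k$ is complete, $\ind{K_k} = \sub{K_k}$, so it suffices to exhibit a constant-time computable substitution $\sigma$ whose action on $\sub{K_k,nk}$ produces a graph pattern polynomial whose multilinear part equals $\ind{H,n}$ times a product of fresh auxiliary variables.

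The core substitution I would use is
\[
  \sigma(y_{(i,p)}) = y_i u_p,
  \qquad
  \sigma(x_{\{(i_1,p_1),(i_2,p_2)\}}) =
  \begin{cases}
    0 & \text{if } i_1 = i_2,\\
    x_{\{i_1,i_2\}} & \text{if } \{p_1,p_2\}\in E(H),\\
    1 - x_{\{i_1,i_2\}} & \text{if } \{p_1,p_2\}\notin E(H).
  \end{cases}
\]
Multilinearity in the $y$ variables forces distinct host labels and multilinearity in the $u$ variables forces distinct colours, so surviving monomials are indexed by bijections $\rho \colon [k] \to \{w_1,\ldots,w_k\}$ onto $k$ distinct host vertices. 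The associated surviving term equals
\[
  u_{[k]} \prod_{i\in[k]} y_{\rho(i)} \prod_{\{i,j\}\in E(H)} x_{\{\rho(i),\rho(j)\}} \prod_{\{i,j\}\notin E(H)} \bigl(1 - x_{\{\rho(i),\rho(j)\}}\bigr),
\]
which is exactly the summand of $\ind{H,n}$ corresponding to the labelling $\rho$. Non-multilinear terms are absorbed into the non-multilinear part, and the substitution respects the graph pattern polynomial structure required by Definition~\ref{def:reduc-full} because it introduces at most one edge variable per edge variable and leaves every non-multilinear term non-multilinear.

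Summed over all injective $\rho$, every induced copy of $H$ in the host contributes $\naut{H}$ identical terms, one per automorphism, so the naive substitution yields $\naut{H}\cdot\ind{H,n}$ rather than $\ind{H,n}$. To eliminate this factor I would augment $\sigma$ with symmetry-breaking rules in the spirit of the $u > v$ conditions used in Lemmas~\ref{lem:NH3kHomH3k-full} and \ref{lem:path-hom-full}: for each non-identity $\pi \in \mathit{Aut}(H)$, pick a pair of colours $(p_1,p_2)$ that $\pi$ swaps and set $\sigma(x_{\{(i_1,p_1),(i_2,p_2)\}}) = 0$ whenever $i_1 > i_2$, so that exactly one representative per automorphism orbit survives. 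The main obstacle is making a single coherent family of such rules select one canonical $\rho$ from every orbit simultaneously for an arbitrary automorphism group; for general $H$ this may require partitioning permutations into cases and treating each independently, in analogy with the $P_1, P_2$ decomposition used in the $K_k - e$ reduction of Section~\ref{sec:easypattern-full}. Once such a symmetry-breaking augmentation is in place, verifying Properties~1 and 2 of Definition~\ref{def:reduc-full} is a direct computation along the lines of the earlier reductions.
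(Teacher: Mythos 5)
Your core substitution (colours $[k]$ on a blown-up vertex set, $y_{(i,p)}\mapsto y_iu_p$, edges of $H$ to $x_{\{i_1,i_2\}}$ and non-edges to $1-x_{\{i_1,i_2\}}$) matches the paper's Rules~1 and~4, and you correctly diagnose that without further constraints each term of $\ind{H,n}$ is produced $\naut{H}$ times, violating Property~2 of Definition~\ref{def:reduc-full}. The gap is in your symmetry-breaking step, and you flag it yourself without closing it. Zeroing out $x_{\{(i_1,p_1),(i_2,p_2)\}}$ for $i_1>i_2$ on one colour pair per non-identity automorphism does not work in general: a non-identity automorphism need not transpose any pair of colours (consider a $3$-cycle in $\mathit{Aut}(H)$), and even when such pairs exist, a collection of pairwise order constraints can leave zero or several survivors per $\mathit{Aut}(H)$-orbit rather than exactly one. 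The $u>v$ tricks in Lemmas~\ref{lem:NH3kHomH3k-full} and~\ref{lem:path-hom-full} succeed only because the relevant automorphism groups there are tiny and explicitly understood; an arbitrary $H$ has no such structure to exploit.

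The paper's proof resolves this with a different device that is the genuinely missing idea in your sketch. It takes the vertex set $[n]\times[k]\times P$, where $P$ contains one permutation per \emph{distinct labelling} of $H$ by $[k]$ (so $|P|=k!/\naut{H}$), kills all edges joining different copies, and inside the copy indexed by $\phi$ imposes the ordering rule $\sigma(x_{\{(v_1,p_1,\phi),(v_2,p_2,\phi)\}})=0$ whenever $\phi^{-1}(p_1)<\phi^{-1}(p_2)$ and $v_1>v_2$. Because the pattern being substituted into is $K_k$, \emph{every} pair of colours carries an edge variable, so this constraint totally orders the assignment: for a fixed $k$-set of host vertices, the unique surviving bijection in copy $\phi$ is the one placing the $i$-th smallest host vertex at colour $\phi(i)$. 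Each copy $\phi$ then contributes exactly the labelled copies of $H$ whose associated permutation is $\phi$, and summing over $P$ yields each term of $\ind{H}$ exactly once --- no division by $\naut{H}$ is ever needed. Without this (or an equivalent canonical-representative mechanism that provably works for every automorphism group), your reduction only establishes $\naut{H}\,\ind{H}\redto\ind{K_k}$, which is weaker than the stated theorem whenever $\naut{H}>1$.
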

\begin{proof}
  Fix a canonical labelling $L$ of the graph $H$ using $[k]$.  Let
  $q_1, \dotsc ,q_\ell$ be the edges in the canonical labelling $L$
  and let $q_{\ell+1}, \dotsc , q_m$ be the non-edges in $L$ where
  $\ell$ is the number of edges in $H$ and $m = {\binom{k}{2}}$.  Let
  $S$ be the set of distinct labellings of $H$ using $[k]$. Associate
  all labellings $L'\in S$ with a permutation $\phi$ such that
  applying $\phi$ to an $H$ labelled $L'$ yields an $H$ labelled
  $L$. Let $P$ be the set of all such permutations. For example, there
  are three distinct labellings for $P_3$: $L =$ 1 -- 2 -- 3, 1 -- 3
  -- 2, and 2 -- 1 -- 3 with associated permutations $(1)(2)(3)$,
  $(1)(2 3)$, and $(1 2)(3)$ (Note that the these permutations are not
  unique if the graph has non-trivial automorphisms).  Apply the
  following substitution to $\ind{K_k}$ over the vertex set
  $[n]\times [k]\times P$:

  \setcounter{equation}{0}  
  \begin{align}
    \sigma(y_{(v, p, \phi)}) &= y_vu_p\\
    \sigma(x_{\{(v_1, p_1,\phi), (v_2, p_2,\phi')\}}) &= 0 \text{ if $\phi\neq \phi'$ or $p_1 = p_2$ or $v_1 = v_2$}\\
    \sigma(x_{\{(v_1, p_1,\phi), (v_2, p_2,\phi)\}}) &= 0
                                                       \text{ if $\phi^{-1}(p_1)< \phi^{-1}(p_2)$ and $v_1 > v_2$}\\
    \sigma(x_{\{(v_1, p_1,\phi), (v_2, p_2,\phi)\}}) &= 
    \begin{cases}
      x_{\{v_1, v_2\}} & \text{ if $\{p_1, p_2\}\in E(L)$}\\
      1 - x_{\{v_1, v_2\}} & \text{ if $\{p_1, p_2\}\not\in E(L)$}\\
    \end{cases}
  \end{align}

  The first two rules ensure that in any surviving monomial, the
  labels and colours of all vertices are different and all vertices
  are indexed by the same permutation.

  We can extend the correspondence between labellings of $H$ and
  permutations to arbitrary labellings (as opposed to labellings using
  $[k]$). Given a labelling of $H$ using $v_1 < \dotsb < v_k$, we can
  obtain a labelling $L'$ of $H$ using $[k]$ by replacing each $v_i$
  by $i$ for all $i$. The permutation associated with the labelling
  $M$ is the same as the permutation associated with labelling $L'$.

  Consider an arbitrary labelling $M$ of $H$ using
  $v_1 < \dotsb < v_k$ where each $v_i\in [n]$. Let $L'\in S$ be the
  labelling corresponding to the labelling $M$ such that
  $\psi : v_i \mapsto i$ is the permutation that maps $M$ to $L'$. Let
  $\phi\in P$ be the permutation associated with $L'$. For
  convenience, we denote the edges and non-edges of $M$ by
  $e_1,\dotsc,e_m$ such that $e_i = \psi^{-1}(\phi^{-1}(q_i))$ for all
  $i$. We will prove that for the term
  $t = y_{v_1}\dotsm y_{v_k}x_{e_1}\dotsm x_{e_\ell}(1 -
  x_{e_{\ell+1}})\dotsm (1 - x_{e_m})$ in $I(H)$ that encodes $M$,
  there is a unique monomial $s$ in $\ind{K_k}$ such that
  $\sigma(s) = u_{[k]}t$.  The monomial
  $s = y_{(v_1, \phi(1), \phi)}\dotsm y_{(v_k, \phi(k), \phi)}x_{(e_1,
    q_1, \phi)}\dotsm x_{(e_m, q_m, \phi)}$. First of all, we have to
  prove that given that $v_i$ has colour $\phi(i)$, the edges are
  coloured such that $e_i$ gets colour $q_i$. Start with an arbitrary
  $q_i = (j, k)$. Then,
  $e_i = \psi^{-1}((\phi^{-1}(j), \phi^{-1}(k))) = (v_{\phi^{-1}(j)},
  v_{\phi^{-1}(k)})$ which has colour $(j, k)$ as required. Also, we
  have $\sigma(s)\neq 0$ because if
  $\phi^{-1}(\phi(i)) = i < j = \phi^{-1}(\phi(j))$, then $v_i <
  v_j$. Given that $\sigma(s)\neq 0$, it is easy to see that
  $\sigma(s) = u_{[k]}t$ by applying rules 1 and 4.

  Given an arbitrary surviving monomial $r = y_{(v_1, 1, \phi)}\dotsm
  y_{(v_k, k, \phi)}\allowbreak x_{(e_1, q_1, \phi)}\dotsm x_{(e_m,
    q_m, \phi)}$ in $\ind{K_k}$ such that $\sigma(r) = u_{[k]}w$ for
  some $w$, we claim that $w$ encodes a labelling $M$ of $H$ where the
  permutation associated with $M$ is $\phi$. It is easy to see that
  $w$ encodes some labelling of $H$. Observe that for $r$ to survive,
  the vertices $(v_i, i, \phi)$ for all $i$ has to be consistent with
  $\phi$, i.e., the vertex coloured $\phi(i)$ must be the
  $i^{\text{th}}$ smallest among all $v_j$s by Rule~3. By the
  definition of $\phi$, we have $\{i, j\}\in E(L')$ if and only if
  $\{\phi(i), \phi(j)\}\in E(L)$. By Rule~4, we also have if
  $\{\phi(i), \phi(j)\}\in E(L)$ then $x_{\{v_{\phi(i)},
    v_{\phi(j)}\}}$ appears in the term $w$ and otherwise $(1 -
  x_{\{v_{\phi(i)}, v_{\phi(j)}\}})$ appears in $w$. In other words,
  in the graph encoded by $w$, the $i^{\text{th}}$ smallest and
  $j^{\text{th}}$ smallest vertices are connected if and only if the
  $i^{\text{th}}$ smallest and $j^{\text{th}}$ smallest vertices are
  connected in $L'$. Therefore, the associated permutation is $\phi$
  as claimed. We can now prove that $u_{[k]}t$ is uniquely generated
  from $s$. Suppose for contradiction that the monomial $s' =
  y_{(v_1', 1, \phi')}\dotsm y_{(v_k', k, \phi')} x_{(e_1', q_1,
    \phi')}\dotsm x_{(e_m', q_m, \phi')}$ also satisfies $\sigma(s') =
  u_{[k]}t$. Then, it must be that $\{v_1',\dotsc, v_k'\} =
  \{v_1,\dotsc ,v_k\}$, $\{e_1,\dotsc, e_\ell\} = \{e_1',\dotsc
  ,e_\ell'\}$, and $\{e_{\ell+1},\dotsc, e_m\} = \{e_{\ell+1}',\dotsc
  ,e_m'\}$. We know that $\phi = \phi'$ because the permutation in the
  monomial must correspond to the labelling encoded by $t$. But, $\phi
  = \phi'$ implies $v_i' = v_i$ for all $i$ (Otherwise, the third rule
  ensures that at least one edge variable in $s'$ becomes $0$ under
  $\sigma$). But, if $v_i' = v_i$ for all $i$, then $e_j = e_j'$ for
  all $j$ contradicting $s\neq s'$.
  
  We have proved that $\ml{\sigma(\ind{K_k})} =
  u_{[k]}\ind{H}$. Observe that the polynomial obtained after the
  substitution cannot contain edge variables of degree more than one
  because of Rule~2. It is easy to see that the substitution satisfies
  the other properties.
\end{proof}

The theorem below shows that the induced subgraph isomorphism
polynomial for any graph containing a $k$-clique or $k$-independent
set is harder than the $k$-clique polynomial. An analogous hardness
result is known for algorithms, only when the pattern $H$ contains a
$k$-clique (or $k$-independent set) that is disjoint from all other
$k$-cliques (or $k$-independent sets) \cite{FKLLTCS15}.

\begin{theorem}
  If $H$ contains a $k$-clique or a $k$-independent set, then
  $\ind{K_k} \redto \ind{H}$.
  \label{thm:clique-hard-full}
\end{theorem}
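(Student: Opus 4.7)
The plan is to first reduce to the case where $H$ contains a $k$-clique: if $H$ contains only a $k$-independent set, then $\overline{H}$ contains a $k$-clique, so applying the earlier complement-reduction $\ind{H'} \redto \ind{\overline{H'}}$ to both $H$ and $\overline{H}$ and using transitivity of $\redto$ lets us derive $\ind{K_k} \redto \ind{\overline{H}} \redto \ind{H}$ once the clique case is settled. So I assume henceforth that $H$ contains a $k$-clique, labelled as the vertices $\{1,\ldots,k\}$ of $V(H)=[h]$, with the remaining vertices $\{k+1,\ldots,h\}$ playing the role of ``non-clique'' vertices.

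I would then build the reduction by considering $\ind{H}$ over the vertex set $V = [n]\times[k] \cup \{*_{k+1},\ldots,*_h\}$ (of size $O(n)$) and applying the substitution $\sigma$ that sets $y_{(v,p)} \mapsto y_v u_p$ and $y_{*_j} \mapsto u_j$, hard-codes each cross and phantom--phantom edge variable to the constant $\mathbf{1}_{\{p,j\}\in E(H)}$ or $\mathbf{1}_{\{j,j'\}\in E(H)}$ determined by $H$, and maps each clique--clique edge variable $x_{\{(v,p),(v',p')\}}$ to $x_{\{v,v'\}}$ when $v \neq v'$, $p \neq p'$, and the two orderings agree ($v < v' \iff p < p'$), and to $0$ otherwise.

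The verification that $\ml{\sigma(\ind{H,N})} = u_{[h]}\,\ml{\ind{K_k,n}}$ proceeds in two stages. First, tracking which $u$-variables can appear shows that any surviving multilinear contribution must come from an image of the form $I_S = \{(s_1,1),\ldots,(s_k,k)\} \cup \{*_{k+1},\ldots,*_h\}$ for some $k$-subset $S = \{s_1<\cdots<s_k\} \subseteq [n]$: requiring each $u_p$ (for $p \in [h]$) to appear exactly once forces all phantoms into the image and the $k$ real coordinates to use all colours, and the ordering rule then selects the monotonic pairing of $v$'s with $p$'s. Second, parameterising each embedding $\phi:V(H) \to I_S$ as $\phi = \psi \circ \pi$ for $\psi$ the canonical embedding ($p \mapsto (s_p,p)$ for $p\in[k]$, $j \mapsto *_j$ for $j>k$) and $\pi$ a permutation of $V(H)$, one checks that the cross and phantom--phantom factors vanish unless $\pi$ preserves edges on all cross and phantom--phantom pairs of $V(H)$.

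The heart of the argument, which I expect to be the main technical obstacle to verify cleanly, is that these cross and phantom--phantom preservation conditions together with $\{1,\ldots,k\}$ being a clique in $H$ already force $\pi$ to be a full automorphism of $H$. Indeed, for any $a,b \in \pi^{-1}(\{1,\ldots,k\})$ the pair $\{\pi(a),\pi(b)\}$ lies inside the clique and is therefore an edge of $H$, and applying the cross condition (if exactly one of $a,b$ is outside $\{1,\ldots,k\}$), the phantom--phantom condition (if both are outside), or triviality (if both are inside) back-propagates this to $\{a,b\} \in E(H)$; hence $\pi^{-1}(\{1,\ldots,k\})$ is itself a $k$-clique of $H$ and $\pi$ preserves every edge. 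Since any two automorphisms of $H$ induce the same image edge set $\psi(E(H))$, they collapse into a single equivalence class of $\phi$, contributing exactly $y_S u_{[h]} \prod_{\{i,j\} \in \binom{S}{2}} x_{\{i,j\}}$ per $I_S$. Summing over $S$ yields $u_{[h]}\,\ind{K_k,n}$, and the routine check that $\sigma(\ind{H,N})$ is a graph pattern polynomial (any non-multilinear monomial carries a repeated $y$- or $u$-variable, both non-edge) closes the reduction.
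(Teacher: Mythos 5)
Your construction is essentially the paper's: the same padded vertex set $([n]\times[k])$ plus one phantom vertex per non-clique vertex of $H$, the same substitution (constants $0/1$ determined by $E(H)$ off the clique colours, order-normalised $x_{\{v,v'\}}$ on clique-coloured pairs), and the same complement trick for the independent-set case. The one place you diverge is the verification that the surviving embeddings $\phi=\psi\circ\pi$ force $\pi\in Aut(H)$, and there your argument is not sound as stated. The substitution constrains a factor according to the \emph{colour} of the corresponding edge variable, i.e.\ according to where the image pair $\{\pi(a),\pi(b)\}$ sits, not according to whether the domain pair $\{a,b\}$ straddles $[k]$. Your back-propagation for $a,b\in\pi^{-1}(\{1,\dots,k\})$ invokes ``the cross condition (if exactly one of $a,b$ is outside $\{1,\dots,k\}$)'', but for such a pair the image $\{\pi(a),\pi(b)\}$ lies inside the clique, so the relevant factor is $x_{\{s_{\pi(a)},s_{\pi(b)}\}}$ or $\bigl(1-x_{\{s_{\pi(a)},s_{\pi(b)}\}}\bigr)$ --- neither is a hard-coded constant and neither vanishes, so nothing is back-propagated. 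As written, your argument does not exclude a non-automorphism $\pi$ sending a non-edge of $H$ onto a clique-coloured pair; that would leave a live $(1-x)$ factor in a surviving monomial and pollute the multilinear part with terms other than $u_{[h]}y_S\prod x$.

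The gap closes with a one-line argument in the forward direction, which is in effect what the paper's counting argument does: for every \emph{edge} $\{a,b\}\in E(H)$, the positive factor $x_{\phi(\{a,b\})}$ survives only if its colour $\{\pi(a),\pi(b)\}$ is an edge of $H$ (by the constant rule when the colour is off the clique, automatically when it is a clique pair), so $\pi$ maps the $|E(H)|$ edges injectively into $E(H)$ and is therefore an automorphism; hence non-edges map to non-edges, every non-edge factor becomes $(1-0)=1$, and no clique-coloured pair is a non-edge of the copy. (Alternatively, your preservation conditions on pairs with non-clique \emph{image} do yield $\pi^{-1}([k])$ being a clique, but only via an edge count, not pointwise.) With that repair, the remainder of your write-up --- the ordering normalisation selecting one monomial per $k$-set $S$, the collapse of automorphisms into a single monomial because $\ind{H}$ identifies copies by their image edge set, and the graph-pattern-polynomial check --- coincides with the paper's proof.
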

\begin{proof}
  We will prove the statement when $H$ contains a $k$-clique. The
  other part follows because if $H$ contains a $k$-independent set,
  then the graph $\overline{H}$ contains a $k$-clique and $\ind{K_k}
  \redto \ind{\overline{H}} \redto \ind{H}$.

  Fix a labelling of $H$ where the vertices of a $k$-clique are
  labelled using $[k]$ and the remaining vertices are labelled
  $k+1,\dotsc,k+\ell$. Consider the polynomial $\ind{H}$ over the
  vertex set $([n] \times [k]) \cup \{(n+i, k+i) : 1\leq i \leq \ell\}$
  and apply the following substitution.

  \setcounter{equation}{0}
  \begin{align}
    \sigma(y_{(i, p)}) &= 
    \begin{cases}
      y_iu_p & \text{ if $i\in [n]$ and $p\in[k]$}\\
      u_p & \text{ otherwise}
    \end{cases}\\
    \sigma(x_{\{(i_1, p_1), (i_2, p_2)\}}) &= 
    \begin{cases}
      x_{\{i_1, i_2\}} & \text{ if $\{p_1, p_2\}\in E(K_k)$ and $p_1 < p_2$ and $i_1 < i_2$}\\
      1 & \text{ if $\{p_1, p_2\} \in E(H) \setminus E(K_k)$}\\
      0 & \text{ otherwise}
    \end{cases}
  \end{align}

  Consider a $k$-clique on the vertices $i_1,\dotsc,i_k\in[n]$ on an
  $n$-vertex graph where $i_1 < \dotsb < i_k$. The monomial in
  $\ind{K_k}$ corresponding to this clique is generated uniquely from
  the monomial $y_{(i_1, 1)} \dotso y_{(i_k, k)}$$\prod_i y_{(n+i,
    k+i)}x_{\{(i_1, 1), (i_2, 2)\}}$$\dotso x_{\{(i_{k-1}, k-1), (i_k,
    k)\}} w$ in $\ind{H}$, where $w$ is the product of all edge
  variables corresponding to edges in $H$ but not in $K_k$. Note that
  Rules~1 and 2 ensure that in any surviving monomial, the labels and
  colours of all vertices are distinct and the colours of the edges
  must be the same as $E(H)$. The product $w$ is determined by $i_1,
  \dotsc, i_k$. This proves that $\ml{\sigma(\ind{H})} =
  u_{[k+\ell]}\ml{\ind{K_k}}$. It is easy to verify that the
  substitution satisfies the other properties.
\end{proof}

Theorem~\ref{thm:clique-hard-full} is true with $\sub{H}$ or
$\ghom{H}$ instead of $\ind{H}$. In fact, the same proof works for
$\sub{H}$. For $\ghom{H}$, use the substitution in the proof of
Theorem~\ref{thm:clique-hard-full} along with $z_{a, (v, a)} = u_a$
and $z_{a, (v, b)} = u_a^2$ when $a\ne b$ for all homomorphism
variables.

\section{Discussion}

Since the subgraph isomorphism and homomorphism polynomials for
cliques have the same size complexity, there is no advantage to be
gained by using homomorphism polynomials instead of subgraph
isomorphism problem. How hard is it to obtain better circuits for
$\ghom{K_k}$? As the following proposition shows, improving the size
of $\ghom{K_3}$ implies improving matrix multiplication.

\begin{proposition}
If $\sub{K_3}$ (or $\ind{K_3}$ or $\ghom{K_3}$) has $O(n^\tau)$-size
circuits then the exponent of matrix multiplication $\omega \le
\tau$.
\end{proposition}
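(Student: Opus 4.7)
The plan is to exhibit a direct substitution that turns a small circuit for $\sub{K_3,3n}$ (or $\ghom{K_3,3n}$ or $\ind{K_3,3n}$) into a small circuit computing the bilinear form $\sum_{i,k}(AB)_{ik}z_{ik}$, and then to invoke Baur--Strassen to recover a fast matrix multiplication algorithm. The three cases are essentially the same, so I would handle $\sub{K_3}$ first and then note the easy adaptations.

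For $\sub{K_3}$: partition the vertex set of $K_{3n}$ into three parts $L,M,R$ of size $n$ and fix the substitution $y_v\mapsto 1$ for every vertex, $x_{\{v,w\}}\mapsto 0$ for every intra-part pair, and
\[
  x_{\{i,j\}}\mapsto A_{ij}\ (i\in L,\,j\in M),\quad
  x_{\{j,k\}}\mapsto B_{jk}\ (j\in M,\,k\in R),\quad
  x_{\{i,k\}}\mapsto z_{ik}\ (i\in L,\,k\in R),
\]
where the $z_{ik}$ are $n^2$ fresh indeterminates. Only triangles with exactly one vertex in each part survive, and each such triangle contributes the monomial $A_{ij}B_{jk}z_{ik}$ exactly once. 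Hence the evaluated circuit computes
\[
  \sum_{i,j,k} A_{ij}B_{jk}z_{ik}\;=\;\sum_{i,k}(AB)_{ik}z_{ik},
\]
and has size $O((3n)^\tau)=O(n^\tau)$. By the Baur--Strassen theorem, all partial derivatives $\partial/\partial z_{ik}=(AB)_{ik}$ can then be computed simultaneously by a circuit of size $O(n^\tau)$. This is an arithmetic algorithm for $n\times n$ matrix multiplication with $O(n^\tau)$ operations, so $\omega\le\tau$.

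For $\ind{K_3}$ the argument is literally the same, since $K_3$ has no proper supergraph on three vertices and hence $\ind{K_3,n}=\sub{K_3,n}$ as formal polynomials. For $\ghom{K_3}$ I would use the homomorphism variables to enforce the tripartition: fix the canonical labelling $V(K_3)=\{1,2,3\}$ and set $z_{1,v}=[v\in L]$, $z_{2,v}=[v\in M]$, $z_{3,v}=[v\in R]$, $y_v=1$, and substitute the edge variables as above. Every term with a non-injective image or with two colours in the same part vanishes, and each surviving homomorphism $(1,2,3)\mapsto(i,j,k)\in L\times M\times R$ again contributes $A_{ij}B_{jk}z_{ik}$, yielding the same bilinear form.

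The main conceptual step is recognising that evaluating a triangle polynomial on the tripartite, edge-relabelled host above produces exactly the matrix multiplication trilinear form, so that a better-than-trivial circuit for the triangle polynomial collapses to a better-than-trivial algorithm for matrix multiplication. The only technical nuisance is in the $\ghom{K_3}$ case, where one must verify that the $0/1$ substitution of homomorphism variables really does kill every homomorphism that does not respect the colouring; this is immediate from the form of the polynomial. No estimates beyond Baur--Strassen and the definition of $\omega$ are needed.
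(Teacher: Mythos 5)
Your proposal is correct and follows essentially the same route as the paper: both evaluate the triangle polynomial on a complete tripartite host whose cross edges carry the matrix entries, obtaining the matrix-multiplication trilinear form $\sum_{i,j,k}A_{ij}B_{jk}z_{ik}$. The only difference is that the paper cites the implication "small circuits for this trilinear form imply $\omega\le\tau$" as well known, whereas you make that step explicit via Baur--Strassen; your handling of $\ind{K_3}$ (empty non-edge product) and of $\ghom{K_3}$ (homomorphism variables as part indicators) is also fine.
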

\begin{proof}
  Let $G$ be the complete tripartite graph $T_n$ on $3n$-vertices with
  partitions of size $n$.  The vertex set of $T_n$ is
  $[3] \times [n]$. Instead of substituting a $1$ for every edge in
  $T_n$, we substitute the variables $a_{i,j}$ for edges
  $\{(1,i),(2,j)\}$, $b_{i,j}$ for edges $\{(2,i),(3,j)\}$, and
  $c_{i,j}$ for edges $\{(3,i),(1,j)\}$. The resulting polynomial is:
  
  \begin{equation*}
   N' = \sum_{i = 1}^n \sum_{j = 1}^n \sum_{k = 1}^n y_{1,i} y_{2,j} y_{3,k} \cdot a_{i,j} b_{j, k} c_{k, i}
 \end{equation*}

 We subsitute $1$ for all vertex variables and obtain
 \begin{equation*}
   N'' = \sum_{i = 1}^n \sum_{j = 1}^n \sum_{k = 1}^n  a_{i,j} b_{j, k} c_{k, i}
 \end{equation*}

 $N''$ has $O(n^\tau)$-size circuits. It is well-known that
 $\omega \le \tau$ follows from this, see e.g.~\cite{DBLP:journals/toc/Blaser13}.
\end{proof}

It is interesting to know whether such connections exist for $k > 3$.

\bibliographystyle{plainurl}
\bibliography{subgraphs}

\appendix

\section{Omitted Proofs}

\begin{proof}{(Of Theorem~\ref{thm:test:1-full})}
  Suppose that the arithmetic circuit given as input computes a
  polynomial on $n$ variables. For each variable $x_i$ select
  $a\in \mathbb{Z}_{p}$ and $y\in \{y_1,\dotsc,y_k\}$ uniformly at
  random and substitute $ay$ for $x_i$. Evaluate the circuit over
  $\mathbb{Z}_{p}[y_1,\dotsc,y_k]/\langle
  y_{1}^{2},\dotsc,y_{k}^{2} \rangle$ and accept if and only if the
  result is non-zero. The correctness of this algorithm can be proved
  by induction on the number of multilinear terms in the polynomial
  that are non-zero modulo $p$.
\end{proof}

\begin{proof}{(Of Theorem~\ref{thm:diaz-full})}
  We will describe how to construct an arithmetic circuit of size
  $O(n^{t+1})$ for $\ghom{H}$ where $t = \tw{H}$. The construction
  mirrors the algorithm in Theorem~{3.1} in \cite{DiazST02}. We start
  with a nice tree decomposition $D$ of $H$. Each gate in the circuit
  will be labelled by some node (say $p$) in $D$ and a partial
  homomorphism $\phi : V(H) \mapsto [n]$. The label is $I_p(\phi)$.

  Let $p$ be a node in the tree decomposition $D$. Construct the
  circuit in a bottom-up fashion as follows:

  \begin{description}
  \item [$p$ is a start node with $X_p = \{ a \}$] Add $n$ input gates
    labelled $I_p(\{(a, v)\})$ with the constant $1$ as value for each
    $v\in[n]$.
  \item [$p$ is an introduce node] Let $q$ be the child of $p$ and
    $X_p - X_q = \{ a \}$. Add gates labelled $I_p(\phi \cup \{(a,
    v)\}) = I_q(\phi)$ for each $v\in[n]$. Since there are at most
    $O(n^{t+1})$ choices for $\phi \cup \{(a, v)\}$, there are at most
    $O(n^{t+1})$ gates.
  \item [$p$ is a join node] Let $q_1$ and $q_2$ be the children of
    $p$. Add gates labelled $I_p(\phi) = I_{q_1}(\phi)
    . I_{q_2}(\phi)$. Since there are at most $O(n^{t+1})$ choices for
    $\phi$, there are at most $O(n^{t+1})$ gates.
  \item [$p$ is a forget node] Let $q$ be the child of $p$ such that
    $X_q - X_p = \{ a \}$. Add gates $I_p(\phi) = \sum_{v\in[n]} z_{a,
    v} y_v x_{\{v, u_1\}} \dotsm x_{\{v, u_k\}} I_q(\phi \cup \{(a,
    v)\})$ where $\{v, u_i\}, 1\leq i\leq k$ are the images of the
    edges incident on $a$ in partial homomorphism $\phi \cup \{(a,
    v)\}$. Note that there are $O(n)$ gates corresponding to the tuple
    $(p, \phi)$. Since $p$ is a forget node, there are at most
    $O(n^t)$ such tuples and therefore at most $O(n^{t+1})$ gates.
  \end{description}
  
\end{proof}

\end{document}